\documentclass[11pt,a4paper]{amsart}
\usepackage[utf8]{inputenc}
\usepackage[a4paper,margin=3.0cm]{geometry}
\usepackage{amsmath,amssymb,amsthm,mathtools,abraces}
\usepackage{stmaryrd}
\usepackage{color,xcolor}
\usepackage[hypertexnames=false,hyperfootnotes=false,colorlinks=true,linkcolor=blue,%
citecolor=purple,filecolor=magenta,urlcolor=cyan,unicode,linktocpage=true,pagebackref=false]{hyperref}
\usepackage{yhmath}
\usepackage{verbatim}
\usepackage{enumitem} 
\usepackage{bbm}
\usepackage{csquotes}

\newcommand\be{\begin{equation}}
\newcommand\ee{\end{equation}}

\newcommand\p{\partial}

\newcommand\normordboson{ {\scriptstyle {\genfrac{}{}{0pt}{}{*}{*}}} }
\newcommand\Tr{{\rm Tr}\,}
\newcommand\diag{{\rm diag}\,}
\newcommand{\<}{\left <}
\renewcommand{\>}{\right >}

\def\lvac{\left <0\right |}
\def\rvac{\left |0\right >}
\DeclareMathOperator{\GL}{GL}
\DeclareMathOperator{\Res}{Res}

\DeclareMathOperator{\Id}{Id}
\DeclareMathOperator{\End}{End}
\DeclareMathOperator{\sspan}{span}
\DeclareMathOperator{\can}{can}

\makeatletter
\@namedef{subjclassname@2020}{%
  \textup{2020} Mathematics Subject Classification}
\makeatother

\newtheorem{theorem}{Theorem}
\newtheorem{lemma}{Lemma}[section]
\newtheorem{proposition}[lemma]{Proposition}
\newtheorem{corollary}[lemma]{Corollary}
\newtheorem{remark}{Remark}[section]

\newtheorem*{theorem*}{Theorem}

\theoremstyle{definition}
\newtheorem{definition}{Definition}

\numberwithin{equation}{section}
\setcounter{tocdepth}{1}


\begin{filecontents}[overwrite]{KPTRref.bib}

 @article {H3_2,
    AUTHOR = {Alexandrov, Alexander},
     TITLE = {K{P} integrability of triple {H}odge integrals. {II}.
              {G}eneralized {K}ontsevich matrix model},
   JOURNAL = {Anal. Math. Phys.},
  FJOURNAL = {Analysis and Mathematical Physics},
    VOLUME = {11},
      YEAR = {2021},
    NUMBER = {1},
     PAGES = {Paper No. 24, 82},
      ISSN = {1664-2368},
   MRCLASS = {37K10 (14N10 14N35 33C80 81R10 81T32)},
  MRNUMBER = {4195117},
       DOI = {10.1007/s13324-020-00451-7},
}

@article {A1,
    AUTHOR = {Alexandrov, Alexander},
     TITLE = {Enumerative geometry, tau-functions and
              {H}eisenberg-{V}irasoro algebra},
   JOURNAL = {Comm. Math. Phys.},
  FJOURNAL = {Communications in Mathematical Physics},
    VOLUME = {338},
      YEAR = {2015},
    NUMBER = {1},
     PAGES = {195--249},
      ISSN = {0010-3616},
   MRCLASS = {14N10 (17B69 17B80)},
  MRNUMBER = {3345375},
MRREVIEWER = {Jipeng Cheng},
       DOI = {10.1007/s00220-015-2379-8},
       
}

@article {TT,
    AUTHOR = {Takasaki, Kanehisa and Takebe, Takashi},
     TITLE = {Integrable hierarchies and dispersionless limit},
   JOURNAL = {Rev. Math. Phys.},
  FJOURNAL = {Reviews in Mathematical Physics. A Journal for Both Review and
              Original Research Papers in the Field of Mathematical Physics},
    VOLUME = {7},
      YEAR = {1995},
    NUMBER = {5},
     PAGES = {743--808},
      ISSN = {0129-055X},
   MRCLASS = {58F07 (35Q53)},
  MRNUMBER = {1346289},
       DOI = {10.1142/S0129055X9500030X},
       
}

@ARTICLE{TT1,
       author = {{Takasaki}, Kanehisa and {Takebe}, Takashi},
        title = "{$\hbar$-expansion of KP hierarchy: Recursive construction of solutions}",
      journal = {arXiv e-prints},
     keywords = {Mathematical Physics, High Energy Physics - Theory, Mathematics - Quantum Algebra, Nonlinear Sciences - Exactly Solvable and Integrable Systems, 37K10},
         year = 2009,
        month = dec,
          eid = {arXiv:0912.4867},
        pages = {arXiv:0912.4867},
archivePrefix = {arXiv},
       eprint = {0912.4867},
 primaryClass = {math-ph},
       
}

@article {Moto,
    AUTHOR = {Mulase, Motohico},
     TITLE = {Category of vector bundles on algebraic curves and
              infinite-dimensional {G}rassmannians},
   JOURNAL = {Internat. J. Math.},
  FJOURNAL = {International Journal of Mathematics},
    VOLUME = {1},
      YEAR = {1990},
    NUMBER = {3},
     PAGES = {293--342},
      ISSN = {0129-167X},
   MRCLASS = {58F07 (14H60 58B20)},
  MRNUMBER = {1078516},
MRREVIEWER = {Malcolm R. Adams},
       DOI = {10.1142/S0129167X90000174},
       
}

@Misc{SN,
 Author = {M. {Sato}},
 Title = {{Soliton equations and the universal Grassmann manifold. Notes by M. Noumi}},
 Year = {1984},
 Language = {Japanese},
 HowPublished = {{Sophia Kokyuroku in Mathematics, 18. Tokyo: Sophia University, Department of Mathematics. IV, 131 p. (1984).}},
 MSC2010 = {58-01 58J40 53C80},
 Zbl = {0541.58001}
}

@article {KS2,
    AUTHOR = {Alexandrov,  Alexander},
     TITLE = {Cut-and-join description of generalized
              {B}rezin-{G}ross-{W}itten model},
   JOURNAL = {Adv. Theor. Math. Phys.},
  FJOURNAL = {Advances in Theoretical and Mathematical Physics},
    VOLUME = {22},
      YEAR = {2018},
    NUMBER = {6},
     PAGES = {1347--1399},
      ISSN = {1095-0761},
   MRCLASS = {81T13 (81R10)},
  MRNUMBER = {3952352},
MRREVIEWER = {C\u{a}t\u{a}lin L. Gherghe},
       DOI = {10.4310/ATMP.2018.v22.n6.a1},
       
}

@article {KSS,
    AUTHOR = {Alexandrov,  Alexander},
     TITLE = {Cut-and-join operator representation for {K}ontsevich-{W}itten
              tau-function},
   JOURNAL = {Modern Phys. Lett. A},
  FJOURNAL = {Modern Physics Letters A. Particles and Fields, Gravitation,
              Cosmology, Nuclear Physics},
    VOLUME = {26},
      YEAR = {2011},
    NUMBER = {29},
     PAGES = {2193--2199},
      ISSN = {0217-7323},
   MRCLASS = {81T45 (37K10 81R12)},
  MRNUMBER = {2842236},
       DOI = {10.1142/S0217732311036607},
       
}

@online{H3_1,
Author = {Alexander Alexandrov},
Title = {{KP integrability of triple Hodge integrals. I. From Givental group to hierarchy symmetries}},
Year = {2020},
Eprint = {2009.01615},
Eprinttype = {arXiv},
}

@article {MS,
    AUTHOR = {Morozov, A. and Shakirov, Sh.},
     TITLE = {Generation of matrix models by {$\hat W$}-operators},
   JOURNAL = {J. High Energy Phys.},
  FJOURNAL = {Journal of High Energy Physics. A SISSA Journal},
      YEAR = {2009},
    NUMBER = {4},
     PAGES = {064, 33},
      ISSN = {1126-6708},
   MRCLASS = {81T30 (81R12)},
  MRNUMBER = {2505965},
       DOI = {10.1088/1126-6708/2009/04/064},
       
}

@article {GN,
    AUTHOR = {Gross, David J. and Newman, Michael J.},
     TITLE = {Unitary and {H}ermitian matrices in an external field. {II}.
              {T}he {K}ontsevich model and continuum {V}irasoro constraints},
   JOURNAL = {Nuclear Phys. B},
  FJOURNAL = {Nuclear Physics. B. Theoretical, Phenomenological, and
              Experimental High Energy Physics. Quantum Field Theory and
              Statistical Systems},
    VOLUME = {380},
      YEAR = {1992},
    NUMBER = {1-2},
     PAGES = {168--180},
      ISSN = {0550-3213},
   MRCLASS = {81T40},
  MRNUMBER = {1186583},
       DOI = {10.1016/0550-3213(92)90520-L},
       }

@ARTICLE{Zhou,
       author = {{Zhou}, Jian},
        title = "{Solution of W-Constraints for R-Spin Intersection Numbers}",
      journal = {arXiv e-prints},
     keywords = {Mathematical Physics, Mathematics - Algebraic Geometry, Mathematics - Symplectic Geometry},
         year = 2013,
        month = may,
          eid = {arXiv:1305.6991},
        pages = {arXiv:1305.6991},
archivePrefix = {arXiv},
       eprint = {1305.6991},
 primaryClass = {math-ph},
       
}

@incollection {Wit91,
    AUTHOR = {Witten, Edward},
     TITLE = {Two-dimensional gravity and intersection theory on moduli
              space},
 BOOKTITLE = {Surveys in differential geometry ({C}ambridge, {MA}, 1990)},
     PAGES = {243--310},
 PUBLISHER = {Lehigh Univ., Bethlehem, PA},
      YEAR = {1991},
   MRCLASS = {32G15 (14C17 14H15 32G81 58F07 81T40)},
  MRNUMBER = {1144529},
MRREVIEWER = {Steven Rosenberg},
DOI={10.4310/SDG.1990.v1.n1.a5},
}

@article {Kon92,
    AUTHOR = {Kontsevich, Maxim},
     TITLE = {Intersection theory on the moduli space of curves and the
              matrix {A}iry function},
   JOURNAL = {Comm. Math. Phys.},
  FJOURNAL = {Communications in Mathematical Physics},
    VOLUME = {147},
      YEAR = {1992},
    NUMBER = {1},
     PAGES = {1--23},
      ISSN = {0010-3616},
   MRCLASS = {32G15 (14H15 58F07 81T40)},
  MRNUMBER = {1171758},
MRREVIEWER = {Claude Itzykson},
DOI={10.1007/BF02099526},
      }

@article {Norb,
    AUTHOR = {Norbury, Paul},
     TITLE = {A new cohomology class on the moduli space of curves},
   JOURNAL = {Geom. Topol.},
  FJOURNAL = {Geometry \& Topology},
    VOLUME = {27},
      YEAR = {2023},
    NUMBER = {7},
     PAGES = {2695--2761},
      ISSN = {1465-3060,1364-0380},
   MRCLASS = {14D23 (32G15 53D45)},
  MRNUMBER = {4645485},
       DOI = {10.2140/gt.2023.27.2695},
}

@online{Boro1,
Author = {Ga{\"e}tan Borot and Vincent Bouchard and Nitin K. Chidambaram and Thomas Creutzig and Dmitry Noshchenko},
Title = {{Higher Airy structures, W algebras and topological recursion}},
Year = {2018},
Eprint = {1812.08738},
Eprinttype = {arXiv},
}

@online{Boro2,
Author = {Ga{\"e}tan Borot and Reinier Kramer and Yannik Sch{\"u}ler},
Title = {{Higher Airy structures and topological recursion for singular spectral curves}},
Year = {2020},
Eprint = {2010.03512},
Eprinttype = {arXiv},
}

@article {MMS,
    AUTHOR = {Mironov, A. and Morozov, A. and Semenoff, G. W.},
     TITLE = {Unitary matrix integrals in the framework of the generalized
              {K}ontsevich model},
   JOURNAL = {Internat. J. Modern Phys. A},
  FJOURNAL = {International Journal of Modern Physics A. Particles and
              Fields. Gravitation. Cosmology},
    VOLUME = {11},
      YEAR = {1996},
    NUMBER = {28},
     PAGES = {5031--5080},
      ISSN = {0217-751X},
   MRCLASS = {81T40 (58C35 58F07)},
  MRNUMBER = {1411777},
MRREVIEWER = {Igor Polyubin},
       DOI = {10.1142/S0217751X96002339},
       
}

@article {F,
    AUTHOR = {Fukuma, Masafumi and Kawai, Hikaru and Nakayama, Ryuichi},
     TITLE = {Infinite-dimensional {G}rassmannian structure of
              two-dimensional quantum gravity},
   JOURNAL = {Comm. Math. Phys.},
  FJOURNAL = {Communications in Mathematical Physics},
    VOLUME = {143},
      YEAR = {1992},
    NUMBER = {2},
     PAGES = {371--403},
      ISSN = {0010-3616},
   MRCLASS = {81R10 (58B25 58F07 81T40)},
  MRNUMBER = {1145801},
MRREVIEWER = {Andrei D. Mironov},
       
}

@book {MJ,
    AUTHOR = {Miwa, T. and Jimbo, M. and Date, E.},
     TITLE = {Solitons},
    SERIES = {Cambridge Tracts in Mathematics},
    VOLUME = {135},
      NOTE = {Differential equations, symmetries and infinite-dimensional
              algebras,
              Translated from the 1993 Japanese original by Miles Reid},
 PUBLISHER = {Cambridge University Press, Cambridge},
      YEAR = {2000},
     PAGES = {x+108},
      ISBN = {0-521-56161-2},
   MRCLASS = {37K10 (17B67 35Q51 35Q53 37K20 37K30 58B25)},
  MRNUMBER = {1736222},
MRREVIEWER = {Stanislav Z. Pakuliak},
}

@article {KS,
    AUTHOR = {Kac, V. and Schwarz, A.},
     TITLE = {Geometric interpretation of the partition function of {$2$}{D}
              gravity},
   JOURNAL = {Phys. Lett. B},
  FJOURNAL = {Physics Letters. B. Particle Physics, Nuclear Physics and
              Cosmology},
    VOLUME = {257},
      YEAR = {1991},
    NUMBER = {3-4},
     PAGES = {329--334},
      ISSN = {0370-2693},
   MRCLASS = {58F07 (81T40)},
  MRNUMBER = {1100639},
MRREVIEWER = {Andrew R. Kavalov},
       DOI = {10.1016/0370-2693(91)91901-7},
       
}
	
@article {LG,
    AUTHOR = {Kharchev, S. and Marshakov, A. and Mironov, A. and Morozov,
              A.},
     TITLE = {Landau-{G}inzburg topological theories in the framework of
              {GKM} and equivalent hierarchies},
   JOURNAL = {Modern Phys. Lett. A},
  FJOURNAL = {Modern Physics Letters A. Particles and Fields, Gravitation,
              Cosmology, Nuclear Physics},
    VOLUME = {8},
      YEAR = {1993},
    NUMBER = {11},
     PAGES = {1047--1061},
      ISSN = {0217-7323},
   MRCLASS = {81T40},
  MRNUMBER = {1214982},
       DOI = {10.1142/S0217732393002531},
       
}

      @online{NP1,
Author = {Paul Norbury},
Title = {{Gromov-Witten invariants of $\mathbb{P}^1$ coupled to a KdV tau function}},
Year = {2018},
Eprint = {1812.04221},
Eprinttype = {arXiv},
}
	
@article {CN,
    AUTHOR = {Chekhov, Leonid and Norbury, Paul},
     TITLE = {Topological recursion with hard edges},
   JOURNAL = {Internat. J. Math.},
  FJOURNAL = {International Journal of Mathematics},
    VOLUME = {30},
      YEAR = {2019},
    NUMBER = {3},
     PAGES = {1950014, 29},
      ISSN = {0129-167X},
   MRCLASS = {14N10 (14H81 32G15)},
  MRNUMBER = {3941980},
MRREVIEWER = {Piotr Su\l kowski},
       DOI = {10.1142/S0129167X19500149},
       
}
	
@article {Fuku,
    AUTHOR = {Fukuma, Masafumi and Kawai, Hikaru and Nakayama, Ryuichi},
     TITLE = {Continuum {S}chwinger-{D}yson equations and universal
              structures in two-dimensional quantum gravity},
   JOURNAL = {Internat. J. Modern Phys. A},
  FJOURNAL = {International Journal of Modern Physics A. Particles and
              Fields. Gravitation. Cosmology},
    VOLUME = {6},
      YEAR = {1991},
    NUMBER = {8},
     PAGES = {1385--1406},
      ISSN = {0217-751X},
   MRCLASS = {81T27 (58F07 81T40)},
  MRNUMBER = {1093776},
MRREVIEWER = {Marc Henneaux},
       DOI = {10.1142/S0217751X91000733},
       
}

@article {DVV,
    AUTHOR = {Dijkgraaf, Robbert and Verlinde, Herman and Verlinde, Erik},
     TITLE = {Loop equations and {V}irasoro constraints in nonperturbative
              two-dimensional quantum gravity},
   JOURNAL = {Nuclear Phys. B},
  FJOURNAL = {Nuclear Physics. B. Theoretical, Phenomenological, and
              Experimental High Energy Physics. Quantum Field Theory and
              Statistical Systems},
    VOLUME = {348},
      YEAR = {1991},
    NUMBER = {3},
     PAGES = {435--456},
      ISSN = {0550-3213},
   MRCLASS = {81T40 (17B68 17B81 58F07 81R10)},
  MRNUMBER = {1083914},
MRREVIEWER = {Nikolai A. Kostov},
       DOI = {10.1016/0550-3213(91)90199-8},
       
}
	
@article {LX,
    AUTHOR = {Liu, Kefeng and Xu, Hao},
     TITLE = {Recursion formulae of higher {W}eil-{P}etersson volumes},
   JOURNAL = {Int. Math. Res. Not. IMRN},
  FJOURNAL = {International Mathematics Research Notices. IMRN},
      YEAR = {2009},
    NUMBER = {5},
     PAGES = {835--859},
      ISSN = {1073-7928},
   MRCLASS = {14N35},
  MRNUMBER = {2482127},
MRREVIEWER = {Hsian-Hua Tseng},
       DOI = {10.1093/imrn/rnn148},
       
}

@article {Mirza,
    AUTHOR = {Mirzakhani, Maryam},
     TITLE = {Simple geodesics and {W}eil-{P}etersson volumes of moduli
              spaces of bordered {R}iemann surfaces},
   JOURNAL = {Invent. Math.},
  FJOURNAL = {Inventiones Mathematicae},
    VOLUME = {167},
      YEAR = {2007},
    NUMBER = {1},
     PAGES = {179--222},
      ISSN = {0020-9910},
   MRCLASS = {32G15 (14H15)},
  MRNUMBER = {2264808},
MRREVIEWER = {Hsian-Hua Tseng},
       DOI = {10.1007/s00222-006-0013-2},
       
}

@article {Zhouhw,
    AUTHOR = {Zhou, Jian},
     TITLE = {Topological recursions of {E}ynard-{O}rantin type for
              intersection numbers on moduli spaces of curves},
   JOURNAL = {Lett. Math. Phys.},
  FJOURNAL = {Letters in Mathematical Physics},
    VOLUME = {103},
      YEAR = {2013},
    NUMBER = {11},
     PAGES = {1191--1206},
      ISSN = {0377-9017},
   MRCLASS = {14H10 (14N35 53D45)},
  MRNUMBER = {3095142},
MRREVIEWER = {Shengmao Zhu},
       DOI = {10.1007/s11005-013-0632-7},
       
}

@ARTICLE{LYZZ,
       author = {{Liu}, Si-Qi and {Yang}, Di and {Zhang}, Youjin and {Zhou}, Chunhui},
        title = "{The Loop Equation for Special Cubic Hodge Integrals}",
      journal = {arXiv e-prints},
     keywords = {Mathematics - Algebraic Geometry, Mathematical Physics},
         year = 2018,
        month = nov,
          eid = {arXiv:1811.10234},
        pages = {arXiv:1811.10234},
archivePrefix = {arXiv},
       eprint = {1811.10234},
 primaryClass = {math.AG},
       adsurl = {https://ui.adsabs.harvard.edu/abs/2018arXiv181110234L},
      adsnote = {Provided by the SAO/NASA Astrophysics Data System}
}

@article {GGR,
    AUTHOR = {Gisonni, Massimo and Grava, Tamara and Ruzza, Giulio},
     TITLE = {Laguerre {E}nsemble: {C}orrelators, {H}urwitz {N}umbers and
              {H}odge {I}ntegrals},
   JOURNAL = {Ann. Henri Poincar\'{e}},
  FJOURNAL = {Annales Henri Poincar\'{e}. A Journal of Theoretical and
              Mathematical Physics},
    VOLUME = {21},
      YEAR = {2020},
    NUMBER = {10},
     PAGES = {3285--3339},
      ISSN = {1424-0637},
   MRCLASS = {60B20 (05A17 15B52 33C47 33E20)},
  MRNUMBER = {4153898},
       DOI = {10.1007/s00023-020-00922-4},
       
}

@article{Borot,
	Author = {Borot, Ga{\"e}tan and Garcia-Failde, Elba},
	Da = {2020/12/01},
	Date-Added = {2020-11-30 10:07:41 +0900},
	Date-Modified = {2020-11-30 10:07:41 +0900},
	Doi = {10.1007/s00220-020-03867-1},
	Id = {Borot2020},
	Isbn = {1432-0916},
	Journal = {Communications in Mathematical Physics},
	Number = {2},
	Pages = {581--654},
	Title = {Simple Maps, Hurwitz Numbers, and Topological Recursion},
	Ty = {JOUR},
	Url = {https://doi.org/10.1007/s00220-020-03867-1},
	Volume = {380},
	Year = {2020},
	}

}

@online{MMMR,
Author = {A. Mironov and V. Mishnyakov and A. Morozov and R. Rashkov},
Title = {Matrix model partition function by a single constraint},
Year = {2021},
Eprint = {2105.09920},
Eprinttype = {arXiv},
}

@article {MZ,
    AUTHOR = {Manin, Yuri I. and Zograf, Peter},
     TITLE = {Invertible cohomological field theories and {W}eil-{P}etersson
              volumes},
   JOURNAL = {Ann. Inst. Fourier (Grenoble)},
  FJOURNAL = {Universit\'{e} de Grenoble. Annales de l'Institut Fourier},
    VOLUME = {50},
      YEAR = {2000},
    NUMBER = {2},
     PAGES = {519--535},
      ISSN = {0373-0956},
   MRCLASS = {14H10 (14N10)},
  MRNUMBER = {1775360},
MRREVIEWER = {Yuan-Pin Lee},
       
}

@article {Teleman,
    AUTHOR = {Teleman, Constantin},
     TITLE = {The structure of 2{D} semi-simple field theories},
   JOURNAL = {Invent. Math.},
  FJOURNAL = {Inventiones Mathematicae},
    VOLUME = {188},
      YEAR = {2012},
    NUMBER = {3},
     PAGES = {525--588},
      ISSN = {0020-9910},
   MRCLASS = {57R56 (18D10 53D45)},
  MRNUMBER = {2917177},
MRREVIEWER = {Julia Bergner},
       DOI = {10.1007/s00222-011-0352-5},
       }

@inproceedings {PpandeCFT,
    AUTHOR = {Pandharipande, Rahul},
     TITLE = {Cohomological field theory calculations},
 BOOKTITLE = {Proceedings of the {I}nternational {C}ongress of
              {M}athematicians---{R}io de {J}aneiro 2018. {V}ol. {I}.
              {P}lenary lectures},
     PAGES = {869--898},
 PUBLISHER = {World Sci. Publ., Hackensack, NJ},
      YEAR = {2018},
   MRCLASS = {14H10 (14H60 14H81 14N35)},
  MRNUMBER = {3966747},
MRREVIEWER = {Hsian-Hua Tseng},
DOI={10.1142/9789813272880_0031}
}

@online{NorbS,
Author = {Paul Norbury},
Title = {{Enumerative geometry via the moduli space of super Riemann surfaces}},
Year = {2020},
Eprint = {2005.04378},
Eprinttype = {arXiv},
}

@article {KMZ,
    AUTHOR = {Kaufmann, R. and Manin, Yu. and Zagier, D.},
     TITLE = {Higher {W}eil-{P}etersson volumes of moduli spaces of stable
              {$n$}-pointed curves},
   JOURNAL = {Comm. Math. Phys.},
  FJOURNAL = {Communications in Mathematical Physics},
    VOLUME = {181},
      YEAR = {1996},
    NUMBER = {3},
     PAGES = {763--787},
      ISSN = {0010-3616},
   MRCLASS = {14H10 (14H15 32G15)},
  MRNUMBER = {1414310},
MRREVIEWER = {Kai A. Behrend},
      
}

@article {MulaseS,
    AUTHOR = {Mulase, Motohico and Safnuk, Brad},
     TITLE = {Mirzakhani's recursion relations, {V}irasoro constraints and
              the {K}d{V} hierarchy},
   JOURNAL = {Indian J. Math.},
  FJOURNAL = {Indian Journal of Mathematics},
    VOLUME = {50},
      YEAR = {2008},
    NUMBER = {1},
     PAGES = {189--218},
      ISSN = {0019-5324},
   MRCLASS = {14H15 (14H70 37K10 37K20)},
  MRNUMBER = {2379144},
MRREVIEWER = {Hao Xu},
}

@article {Kaza,
    AUTHOR = {Kazarian, M.},
     TITLE = {K{P} hierarchy for {H}odge integrals},
   JOURNAL = {Adv. Math.},
  FJOURNAL = {Advances in Mathematics},
    VOLUME = {221},
      YEAR = {2009},
    NUMBER = {1},
     PAGES = {1--21},
      ISSN = {0001-8708},
   MRCLASS = {53D45 (14H10 37K10 37K20)},
  MRNUMBER = {2509319},
MRREVIEWER = {Hsian-Hua Tseng},
       DOI = {10.1016/j.aim.2008.10.017},
   }
   
   @online{Lapl,
Author = {Harald Grosse and Alexander Hock and Raimar Wulkenhaar},
Title = {{A Laplacian to compute intersection numbers on $\overline{\mathcal{M}}_{g,n}$ and correlation functions in NCQFT}},
Year = {2019},
Eprint = {1903.12526},
Eprinttype = {arXiv},
}

@article {Wang,
    AUTHOR = {Liu, Xiaobo and Wang, Gehao},
     TITLE = {Connecting the {K}ontsevich-{W}itten and {H}odge
              {T}au-functions by the {$\widehat{GL(\infty)}$} operators},
   JOURNAL = {Comm. Math. Phys.},
  FJOURNAL = {Communications in Mathematical Physics},
    VOLUME = {346},
      YEAR = {2016},
    NUMBER = {1},
     PAGES = {143--190},
      ISSN = {0010-3616},
   MRCLASS = {17B80 (17B65 33C90 37K10)},
  MRNUMBER = {3528419},
       DOI = {10.1007/s00220-016-2671-2},
}

@book{Kramer,
  title={Cycles of curves, cover counts, and central invariants},
  author={Kramer, Reinier},
  year={2019},
  publisher={Universiteit van Amsterdam}
}

@incollection {Mumford,
    AUTHOR = {Mumford, David},
     TITLE = {Towards an enumerative geometry of the moduli space of curves},
 BOOKTITLE = {Arithmetic and geometry, {V}ol. {II}},
    SERIES = {Progr. Math.},
    VOLUME = {36},
     PAGES = {271--328},
 PUBLISHER = {Birkh\"{a}user Boston, Boston, MA},
      YEAR = {1983},
   MRCLASS = {14H10 (14C15)},
  MRNUMBER = {717614},
MRREVIEWER = {Werner Kleinert},
}

@misc{oeis,
  author={Sloane, Neil J. A. and The OEIS Foundation Inc.},
  title={The on-line encyclopedia of integer sequences},
  year={2020},
  url={http://oeis.org/?language=english},
}

@online{Reinier,
Author = {Reinier Kramer},
Title = {{KP hierarchy for Hurwitz-type cohomological field theories}},
Year = {2021},
Eprint = {2107.05510},
Eprinttype = {arXiv},
}

@article {DLYZ,
    AUTHOR = {Dubrovin, Boris and Liu, Si-Qi and Yang, Di and Zhang, Youjin},
     TITLE = {Hodge-{GUE} correspondence and the discrete {K}d{V} equation},
   JOURNAL = {Comm. Math. Phys.},
  FJOURNAL = {Communications in Mathematical Physics},
    VOLUME = {379},
      YEAR = {2020},
    NUMBER = {2},
     PAGES = {461--490},
      ISSN = {0010-3616},
   MRCLASS = {14H10 (14H81 33C60 37K10 81T45)},
  MRNUMBER = {4156215},
MRREVIEWER = {Xiaoxue Xu},
       DOI = {10.1007/s00220-020-03846-6},
      
}

@incollection {Giv1,
    AUTHOR = {Givental, Alexander B.},
     TITLE = {Gromov-{W}itten invariants and quantization of quadratic
              {H}amiltonians},
      NOTE = {Dedicated to the memory of I. G. Petrovskii on the occasion of
              his 100th anniversary},
   JOURNAL = {Mosc. Math. J.},
  FJOURNAL = {Moscow Mathematical Journal},
    VOLUME = {1},
      YEAR = {2001},
    NUMBER = {4},
     PAGES = {551--568, 645},
      ISSN = {1609-3321},
   MRCLASS = {53D45 (14N35)},
  MRNUMBER = {1901075},
MRREVIEWER = {Domenico Fiorenza},
       DOI = {10.17323/1609-4514-2001-1-4-551-568},
       }
       
   @article {FP,
    AUTHOR = {Faber, C. and Pandharipande, R.},
     TITLE = {Hodge integrals and {G}romov-{W}itten theory},
   JOURNAL = {Invent. Math.},
  FJOURNAL = {Inventiones Mathematicae},
    VOLUME = {139},
      YEAR = {2000},
    NUMBER = {1},
     PAGES = {173--199},
      ISSN = {0020-9910},
   MRCLASS = {14N35},
  MRNUMBER = {1728879},
MRREVIEWER = {Jim A. Bryan},
       DOI = {10.1007/s002229900028},
}

@article {DLYZ2,
    AUTHOR = {Dubrovin, Boris and Liu, Si-Qi and Yang, Di and Zhang, Youjin},
     TITLE = {Hodge integrals and tau-symmetric integrable hierarchies of
              {H}amiltonian evolutionary {PDE}s},
   JOURNAL = {Adv. Math.},
  FJOURNAL = {Advances in Mathematics},
    VOLUME = {293},
      YEAR = {2016},
     PAGES = {382--435},
      ISSN = {0001-8708},
   MRCLASS = {53D45 (37K10)},
  MRNUMBER = {3474326},
MRREVIEWER = {Zhengyu Zong},
       DOI = {10.1016/j.aim.2016.01.018},
      
}

@online{DYstring,
Author = {Boris Dubrovin and Di Yang},
Title = {Remarks on intersection numbers and integrable hierarchies. I. Quasi-triviality},
Year = {2019},
Eprint = {1905.08106},
Eprinttype = {arXiv},
}

@article {DN,
    AUTHOR = {Do, Norman and Norbury, Paul},
     TITLE = {Topological recursion on the {B}essel curve},
   JOURNAL = {Commun. Number Theory Phys.},
  FJOURNAL = {Communications in Number Theory and Physics},
    VOLUME = {12},
      YEAR = {2018},
    NUMBER = {1},
     PAGES = {53--73},
      ISSN = {1931-4523},
   MRCLASS = {14H10 (14H81 81R12)},
  MRNUMBER = {3798881},
MRREVIEWER = {Xiaobin Li},
       DOI = {10.4310/CNTP.2018.v12.n1.a2},
}

@article {AMM1,
    AUTHOR = {Alexandrov, A. and Mironov, A. and Morozov, A.},
     TITLE = {Instantons and merons in matrix models},
   JOURNAL = {Phys. D},
  FJOURNAL = {Physica D. Nonlinear Phenomena},
    VOLUME = {235},
      YEAR = {2007},
    NUMBER = {1-2},
     PAGES = {126--167},
      ISSN = {0167-2789},
   MRCLASS = {81R10 (37K20 37K30 81T30)},
  MRNUMBER = {2385123},
MRREVIEWER = {Ian A. B. Strachan},
       DOI = {10.1016/j.physd.2007.04.018},
       
}

@article {AMM2,
    AUTHOR = {Alexandrov, A. and Mironov, A. and Morozov, A.},
     TITLE = {B{GWM} as second constituent of complex matrix model},
   JOURNAL = {J. High Energy Phys.},
  FJOURNAL = {Journal of High Energy Physics. A SISSA Journal},
      YEAR = {2009},
    NUMBER = {12},
     PAGES = {053, 49},
      ISSN = {1126-6708},
   MRCLASS = {81T45 (81R12 81T30)},
  MRNUMBER = {2592997},
MRREVIEWER = {Farhang Loran},
       DOI = {10.1088/1126-6708/2009/12/053},
       
}

@article {Chekhov,
    AUTHOR = {Chekhov, L.},
     TITLE = {Matrix model tools and geometry of moduli spaces},
   JOURNAL = {Acta Appl. Math.},
  FJOURNAL = {Acta Applicandae Mathematicae},
    VOLUME = {48},
      YEAR = {1997},
    NUMBER = {1},
     PAGES = {33--90},
      ISSN = {0167-8019},
   MRCLASS = {81T40 (14H15 17B68 58D30 81T27)},
  MRNUMBER = {1462469},
MRREVIEWER = {Henrik Aratyn},
       DOI = {10.1023/A:1005767914173},
       
}

@article {KM,
    AUTHOR = {Kontsevich, M. and Manin, Yu.},
     TITLE = {Gromov-{W}itten classes, quantum cohomology, and enumerative
              geometry},
   JOURNAL = {Comm. Math. Phys.},
  FJOURNAL = {Communications in Mathematical Physics},
    VOLUME = {164},
      YEAR = {1994},
    NUMBER = {3},
     PAGES = {525--562},
      ISSN = {0010-3616},
   MRCLASS = {14N10 (53C15 58D10 58F05)},
  MRNUMBER = {1291244},
MRREVIEWER = {Dietmar A. Salamon},
 DOI = {10.1007/BF02101490},
       }
       
@article {Mil,
    AUTHOR = {Milanov, Todor},
     TITLE = {The {E}ynard-{O}rantin recursion for the total ancestor
              potential},
   JOURNAL = {Duke Math. J.},
  FJOURNAL = {Duke Mathematical Journal},
    VOLUME = {163},
      YEAR = {2014},
    NUMBER = {9},
     PAGES = {1795--1824},
      ISSN = {0012-7094},
   MRCLASS = {14D05 (14N35)},
  MRNUMBER = {3217767},
MRREVIEWER = {Ruifang Song},
       DOI = {10.1215/00127094-2690805},
 
}

@incollection {Dub,
    AUTHOR = {Dubrovin, Boris},
     TITLE = {Geometry of {$2$}{D} topological field theories},
 BOOKTITLE = {Integrable systems and quantum groups ({M}ontecatini {T}erme,
              1993)},
    SERIES = {Lecture Notes in Math.},
    VOLUME = {1620},
     PAGES = {120--348},
 PUBLISHER = {Springer, Berlin},
      YEAR = {1996},
   MRCLASS = {58D29 (14N10 20F55 32G34 58F07 81T40)},
  MRNUMBER = {1397274},
MRREVIEWER = {N. J. Hitchin},
       DOI = {10.1007/BFb0094793},
}

@online{H3,
Author = {Alexander Alexandrov},
Title = {{KP integrability of triple Hodge integrals. III. Cut-and-join description, KdV reduction, and topological recursions}},
Year = {2021},
Eprint = {2108.10023},
Eprinttype = {arXiv},
}

@article {Shadrin,
    AUTHOR = {Shadrin, Sergey},
     TITLE = {B{COV} theory via {G}ivental group action on cohomological
              fields theories},
   JOURNAL = {Mosc. Math. J.},
  FJOURNAL = {Moscow Mathematical Journal},
    VOLUME = {9},
      YEAR = {2009},
    NUMBER = {2},
     PAGES = {411--429, back matter},
      ISSN = {1609-3321},
   MRCLASS = {53D45 (14J33 14N35 53D37)},
  MRNUMBER = {2568443},
MRREVIEWER = {Hsian-Hua Tseng},
       DOI = {10.17323/1609-4514-2009-9-2-411-429},
}

@article {PPZ,
    AUTHOR = {Pandharipande, Rahul and Pixton, Aaron and Zvonkine, Dimitri},
     TITLE = {{Relations on $\overline{\mathcal{M}}_{g,n}$ via {$3$}-spin
              structures}},
   JOURNAL = {J. Amer. Math. Soc.},
  FJOURNAL = {Journal of the American Mathematical Society},
    VOLUME = {28},
      YEAR = {2015},
    NUMBER = {1},
     PAGES = {279--309},
      ISSN = {0894-0347},
   MRCLASS = {14H10 (14N35)},
  MRNUMBER = {3264769},
MRREVIEWER = {Shengmao Zhu},
       DOI = {10.1090/S0894-0347-2014-00808-0},
    
}

@article {EO,
    AUTHOR = {Eynard, B. and Orantin, N.},
     TITLE = {Invariants of algebraic curves and topological expansion},
   JOURNAL = {Commun. Number Theory Phys.},
  FJOURNAL = {Communications in Number Theory and Physics},
    VOLUME = {1},
      YEAR = {2007},
    NUMBER = {2},
     PAGES = {347--452},
      ISSN = {1931-4523},
   MRCLASS = {14H15 (14N35 32A27 37K10 37K20 81T45)},
  MRNUMBER = {2346575},
MRREVIEWER = {Vincent Bouchard},
       DOI = {10.4310/CNTP.2007.v1.n2.a4},
   
}

@article {EO1,
    AUTHOR = {Eynard, Bertrand and Orantin, Nicolas},
     TITLE = {Topological recursion in enumerative geometry and random
              matrices},
   JOURNAL = {J. Phys. A},
  FJOURNAL = {Journal of Physics. A. Mathematical and Theoretical},
    VOLUME = {42},
      YEAR = {2009},
    NUMBER = {29},
     PAGES = {293001, 117},
      ISSN = {1751-8113},
   MRCLASS = {14N10 (14H10 14N35 37K20)},
  MRNUMBER = {2519749},
MRREVIEWER = {Hsian-Hua Tseng},
       DOI = {10.1088/1751-8113/42/29/293001},
}

@article {eyncohft,
    AUTHOR = {Eynard, B.},
     TITLE = {Invariants of spectral curves and intersection theory of
              moduli spaces of complex curves},
   JOURNAL = {Commun. Number Theory Phys.},
  FJOURNAL = {Communications in Number Theory and Physics},
    VOLUME = {8},
      YEAR = {2014},
    NUMBER = {3},
     PAGES = {541--588},
      ISSN = {1931-4523},
   MRCLASS = {14H10 (11G05 14C17 32G15)},
  MRNUMBER = {3282995},
MRREVIEWER = {Letterio Gatto},
       DOI = {10.4310/CNTP.2014.v8.n3.a4},
}

@article {DOSS,
    AUTHOR = {Dunin-Barkowski, P. and Orantin, N. and Shadrin, S. and Spitz,
              L.},
     TITLE = {Identification of the {G}ivental formula with the spectral
              curve topological recursion procedure},
   JOURNAL = {Comm. Math. Phys.},
  FJOURNAL = {Communications in Mathematical Physics},
    VOLUME = {328},
      YEAR = {2014},
    NUMBER = {2},
     PAGES = {669--700},
      ISSN = {0010-3616},
   MRCLASS = {81T45 (14N35 53D45)},
  MRNUMBER = {3199996},
MRREVIEWER = {Wan Keng Cheong},
       DOI = {10.1007/s00220-014-1887-2},
}

@article{AWP,
author = {Alexandrov, Alexander},
title = {{Cut-and-join operators for higher Weil--Petersson volumes}},
journal = {Bulletin of the London Mathematical Society},
volume = {55},
number = {6},
pages = {3012-3028},
doi = {https://doi.org/10.1112/blms.12907},
abstract = {Abstract In this paper, we construct the cut-and-join operator description for the generating functions of all intersection numbers of ?\$\psi\$, ?\$\kappa\$, and ?\$\Theta\$ classes on the moduli spaces M¯g,n\$\overline{\mathcal {M}}\_{g,n}\$. The cut-and-join operators define an algebraic version of topological recursion. This recursion allows us to compute all these intersection numbers recursively. For the specific values of parameters, the generating functions describe the volumes of moduli spaces of (super) hyperbolic Riemann surfaces with geodesic boundaries, which are also related to the Jackiw?Teitelboim (JT) (super)gravity.},
year = {2023}
}

@article {Giv0,
    AUTHOR = {Givental, Alexander B.},
     TITLE = {Semisimple {F}robenius structures at higher genus},
   JOURNAL = {Internat. Math. Res. Notices},
  FJOURNAL = {International Mathematics Research Notices},
      YEAR = {2001},
    NUMBER = {23},
     PAGES = {1265--1286},
      ISSN = {1073-7928},
   MRCLASS = {53D45 (14N35)},
  MRNUMBER = {1866444},
MRREVIEWER = {Gilberto Bini},
       DOI = {10.1155/S1073792801000605},
}


       @article{KN,
    author = {Kazarian, Maxim and Norbury, Paul},
    title = "{Polynomial Relations Among Kappa Classes on the Moduli Space of Curves}",
    journal = {International Mathematics Research Notices},
    pages = {rnad061},
    year = {2023},
    month = {03},
    issn = {1073-7928},
    doi = {10.1093/imrn/rnad061},
}

       @article {DZ,
    AUTHOR = {Dubrovin, Boris and Zhang, Youjin},
     TITLE = {Frobenius manifolds and {V}irasoro constraints},
   JOURNAL = {Selecta Math. (N.S.)},
  FJOURNAL = {Selecta Mathematica. New Series},
    VOLUME = {5},
      YEAR = {1999},
    NUMBER = {4},
     PAGES = {423--466},
      ISSN = {1022-1824},
   MRCLASS = {53D45 (14N35 17B68 37K10 37K30)},
  MRNUMBER = {1740678},
MRREVIEWER = {Jun S. Song},
       DOI = {10.1007/s000290050053},
}

@book {Maninb,
    AUTHOR = {Manin, Yuri I.},
     TITLE = {Frobenius manifolds, quantum cohomology, and moduli spaces},
    SERIES = {American Mathematical Society Colloquium Publications},
    VOLUME = {47},
 PUBLISHER = {American Mathematical Society, Providence, RI},
      YEAR = {1999},
     PAGES = {xiv+303},
      ISBN = {0-8218-1917-8},
   MRCLASS = {53D45 (14H10 14N35 18D50 32G34)},
  MRNUMBER = {1702284},
MRREVIEWER = {Alexandre I. Kabanov},
       DOI = {10.1090/coll/047},
}

@incollection {KonS,
    AUTHOR = {Kontsevich, Maxim and Soibelman, Yan},
     TITLE = {Airy structures and symplectic geometry of topological
              recursion},
 BOOKTITLE = {Topological recursion and its influence in analysis, geometry,
              and topology},
    SERIES = {Proc. Sympos. Pure Math.},
    VOLUME = {100},
     PAGES = {433--489},
 PUBLISHER = {Amer. Math. Soc., Providence, RI},
      YEAR = {2018},
      ISBN = {978-1-4704-3541-7},
   MRCLASS = {53D05 (81Q30)},
  MRNUMBER = {3888788},
MRREVIEWER = {Albert\ Jeu-Liang\ Sheu},
       DOI = {10.1090/pspum/100/01765},
}

@online{ABCD,
Author = {Jorgen Ellegaard Andersen and Ga\"etan Borot and Leonid O. Chekhov and Nicolas Orantin},
Title = {{The ABCD of topological recursion}},
Year = {2017},
Eprint = {1703.03307},
Eprinttype = {arXiv},
}

@online{NorbPr,
Author = {Nitin Kumar Chidambaram and Elba Garcia-Failde and Alessandro Giacchetto},
Title = {{Relations on $\overline{\mathcal{M}}_{g,n}$ and the negative $r$-spin Witten conjecture}},
Year = {2022},
Eprint = {2205.15621},
Eprinttype = {arXiv},
}

@online{BorVir,
Author = {Ga\"etan Borot and Vincent Bouchard and Nitin K. Chidambaram and Thomas Creutzig and Dmitry Noshchenko},
Title = {{Higher Airy structures, W algebras and topological recursion}},
Year = {2018},
Eprint = {1812.08738},
Eprinttype = {arXiv},
}

\end{filecontents}

\title[Cut-and-join operators in CohFT and topological recursion]{
Cut-and-join operators in cohomological field theory and topological recursion}

\author{Alexander Alexandrov}

\address{Center for Geometry and Physics, Institute for Basic Science (IBS), Pohang 37673, Korea
}

\email{ {\tt alexandrovsash at gmail.com}}

\subjclass[2020]{14N35, 81R10, 05A15}

\date{\today}

\begin{document}

\begin{abstract} 
We construct a cubic cut-and-join operator description for the partition function of the Chekhov--Eynard--Orantin topological recursion for a local spectral curve with simple ramification points. In particular, this class contains  partition functions of all semi-simple cohomological field theories. The cut-and-join description leads to an algebraic version of  topological recursion. For the same partition functions we also derive N families of the Virasoro constraints and prove that these constraints, supplemented by a deformed 
dimension constraint, imply the cut-and-join description.
\end{abstract}

\maketitle

{Keywords: cohomological field theories, cut-and-join operators, Virasoro constraints, topological recursion}\\

\tableofcontents

\newpage 


\def\thefootnote{\arabic{footnote}}


\section{Introduction}
\setcounter{equation}{0}


\subsection{Givental decomposition formula}

Cohomological field theories were introduced by Kontsevich and Manin \cite{KM} to describe universal properties of Gromov--Witten theory. A cohomological field theory (CohFT) is given by a family of tensors $\Omega_{g,n}$. These tensors satisfy the compatibility conditions and are labeled by the number of the marked points $n$ and genus $g$ for all stable combinations 
$2g-2+n>0$. For any CohFT $\Omega$ let us consider its partition function, dependent on the formal variables ${\bf T}=\{T^a_k;1\leq a\leq N, k\geq0\}$,
\be
Z_\Omega=\exp\left(\sum_{g,n} \frac{\hbar^{2g-2+n}}{n!} \sum_{k_j} \int_{\overline{\mathcal{M}}_{g,n}} \Omega_{g,n}(e_{a_1}\otimes \dots \otimes e_{a_n} ) \prod_{j=1}^n \psi_j^{k_j} T_{k_j}^{a_j}\right) \in {\mathbb C}[{\bf T}][\![\hbar]\!].
\ee
We also call it the (total) ancestor potential, because for CohFTs associated with the Gromov--Witten theory on some variety, function $Z_\Omega$ is the generating function of the so-called ancestor Gromov--Witten invariants. The most important example here is the Kontsevich--Witten tau-function associated with the Gromov--Witten theory of a point, see Section \ref{Sec2.1}.

Investigation of the Gromov--Witten invariants and, more generally, cohomological field theories, in particular in the higher genera, is a challenging problem. In his seminal works, Givental \cite{Giv1,Giv0} introduced a group action on the CohFT partition functions. The Givental group action can be lifted to the action on the space of CohFTs themselves \cite{Shadrin,Teleman}. In particular, in \cite{Teleman} Teleman proves that an extension of the Givental group, which we call the Givental--Teleman group, acts transitively on the space of semi-simple CohFTs. CohFTs here are not required to be conformal or have a flat unit. It implies that the partition function of an arbitrary semi-simple CohFT can be described by the Givental decomposition formula,
 \be\label{I1}
Z_\Omega=\widehat{R}\widehat{T}\widehat{\Delta} \cdot \prod_{a=1}^N \tau_{1} (\hbar, {\bf T^a}).
\ee
Here $\widehat{R}$ is a Givental group operator given by a quantization of the twisted loop group element \cite{Giv1}, $\widehat{T}$ is a translation operator, and $\widehat{\Delta}$ rescales the topological expansion parameter $\hbar$. The Kontsevich--Witten tau-function $\tau_{1}$ is a solution of the Korteweg--De Vries (KdV) hierarchy which governs the intersection theory of $\psi$ classes on the moduli spaces of punctured Riemann surfaces. 

The Chekhov--Eynard--Orantin topological recursion \cite{EO,EO1} is a universal procedure, which allows us to construct a family of symmetric differentials on a spectral curve with additional structures on it. For many known examples these differentials encode interesting invariants of enumerative geometry and mathematical physics.
 These differentials can also be combined in a canonical way to create a generating function $Z_S$, dependent on the formal variables ${\bf T}$.
 
As it was shown by Dunin-Barkowski, Orantin, Shadrin, and Spitz \cite{DOSS}, partition functions of the semi-simple cohomological field theories are closely related to the Chekhov--Eynard--Orantin topological recursion. Namely, the ancestor potential of a semi-simple CohFT with flat unit can be identified with the partition function of the Chekhov--Eynard--Orantin topological recursion on a certain local spectral curve, $Z_S=Z_\Omega$. It means, in particular, that the partition function of the associated Chekhov--Eynard--Orantin topological recursion is given by the decomposition formula \eqref{I1}. 

This relation was generalized by Chekhov and Norbury \cite{CN}. This generalization describes Givental's decomposition formula for the partition functions on all, possibly irregular, local spectral curves with simple ramifications, that is,  the curves which near ramification points are similar to the Airy curve $(x=\frac{z^2}{2},y=z)$ or the Bessel curve $(x=\frac{z^2}{2},y=\frac{1}{z})$. This generalization also describes some degenerate CohFTs. To describe this generalization one has to consider a slightly more general Givental decomposition formula.
Namely, for a Givental group operator $\widehat{R}$, which in the most general case is not a quantization of the twisted loop group element, a translation operator $\widehat{T}$, and an $\hbar$-rescaling operator $\widehat{\Delta}$
we consider the generalized ancestor potential (see Definition \ref{Defgap} below)
 \be\label{ZB}
Z:=\widehat{R}\widehat{T}\widehat{\Delta} \cdot \prod_{a=1}^N \tau_{\alpha_a} (\hbar, {\bf T^a}).
\ee
Here $\alpha_a\in\{0,1\}$, and the Br\'ezin--Gross--Witten tau-function $\tau_0$ is also a solution of the KdV hierarchy, which is related to certain intersection numbers on the moduli spaces $\overline{\mathcal{M}}_{g,n}$ \cite{NorbS}.
For a local, possibly irregular, spectral curve with $N$ simple ramification points the partition function $Z_S$ of the Chekhov--Eynard--Orantin topological recursion is given by the Givental decomposition formula \eqref{ZB}. The operators $\widehat{R}$, $\widehat{T}$, $\widehat{\Delta}$ can be described in terms of the topological recursion data, see Section \ref{S43}.
The functions $\tau_{\alpha_a}$ are associated with $N$ ramifications points, and  $\alpha_a=1$ ($\alpha_a=0$) for regular (irregular) ramification points.  

\subsection{Main statement}
 
Partition functions of CohFTs and the Chekhov--Eynard--Orantin topological recursion have many interesting properties \cite{Dub,Maninb,EO,EO1,PpandeCFT,CN}, which make them so attractive. In this paper, we establish a new
general property of all such functions. More precisely, we prove that for an arbitrary Givental operator $\widehat{R}$, a translation operator $\widehat{T}$, and a $\hbar$-rescaling operator $\widehat{\Delta}$ the generalized ancestor potential \eqref{ZB} is described by a cubic cut-and-join operator. In particular, it means that this cut-and-join description is valid for the partition functions of semi-simple CohFTs and Chekhov--Eynard--Orantin topological recursion for local, possibly irregular, spectral curves with simple ramifications. This allows us to describe a generalized ancestor potential by an algebraic version of the topological recursion, which, unlike the Chekhov--Eynard--Orantin topological recursion, is linear and is given in terms of polynomials and differential operators. 

Let
\be
\widehat{J}_k^a:=
\begin{cases}
\displaystyle{(2k-1)!!\frac{\p}{\p T_{k-1}^a}} \,\,\,\,\,\,\,\,\,\,\,\, &\mathrm{for} \quad k> 0,\\[3pt]
\displaystyle{\frac{T_{|k|}^a}{(2|k|-1)!!} }&\mathrm{for} \quad k\leq 0.
\end{cases}
\ee

In Section \ref{S5.3}, for a generalized ancestor potential, given by \eqref{ZB}, we construct a cubic operator
\be\label{WIN}
\widehat{W}=\sum_{\substack{i\leq j\leq k,\\ i+j+k \geq 0}}   A_{a,b,c}^{i,j,k} \widehat{J}^a_i \widehat{J}^b_j \widehat{J}^c_k+ \sum_{j=-1}^\infty B_{a}^j \widehat{J}^a_j,
\ee
where the coefficients $A_{a,b,c}^{i,j,k}$ and $B_{a}^j$ are independent of ${\bf T}$ and $\hbar$. We call it the cut-and-join operator. 

The main result of this paper is the following theorem, which describes all generalized ancestor potentials \eqref{ZB} by cubic cut-and-join operators \eqref{WIN}.
\begin{theorem}\label{TIN} 
A generalized ancestor potential satisfies
\be\label{MTf}
Z=\exp\left(\hbar \widehat{W}\right)\cdot 1.
\ee
\end{theorem}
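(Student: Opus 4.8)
The plan is to exploit that each building-block tau-function already has a cubic cut-and-join presentation: $\tau_1=e^{\hbar\widehat W_1}\cdot 1$ and $\tau_0=e^{\hbar\widehat W_0}\cdot 1$ with $\widehat W_1$, $\widehat W_0$ of the form \eqref{WIN} in a single set of variables and trivial $\widehat R,\widehat T,\widehat\Delta$ (\cite{KSS} and \cite{KS2} respectively). Writing $\widehat W^{(a)}$ for the copy of $\widehat W_{\alpha_a}$ in the variables ${\bf T}^a$, these commute for different $a$, so $\prod_a\tau_{\alpha_a}=e^{\hbar\sum_a\widehat W^{(a)}}\cdot 1$. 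One checks directly that $\widehat U:=\widehat R\widehat T\widehat\Delta$ fixes the constant function $1$ (each factor is an $\hbar$-rescaling, a translation of ${\bf T}$, or an exponential of differential operators each containing a $\partial_{\bf T}$), hence so does $\widehat U^{-1}$, and therefore
\[
Z=\widehat U\,e^{\hbar\sum_a\widehat W^{(a)}}\cdot 1=\widehat U\,e^{\hbar\sum_a\widehat W^{(a)}}\widehat U^{-1}\cdot 1=e^{\hbar\,\widehat U\bigl(\sum_a\widehat W^{(a)}\bigr)\widehat U^{-1}}\cdot 1 .
\]
Thus the theorem is equivalent to showing that $\widehat W:=\widehat U\bigl(\sum_a\widehat W^{(a)}\bigr)\widehat U^{-1}$ coincides with the operator \eqref{WIN} built in Section \ref{S5.3} from the topological-recursion data.

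\textbf{Step 2: the conjugation is affine on the Heisenberg algebra.} The $\widehat J^a_k$ span a Heisenberg Lie algebra ($[\widehat J^a_i,\widehat J^b_j]$ is a scalar, nonzero only for $a=b$, $i+j=1$), and $\widehat\Delta,\widehat R$ are Givental quantizations of symplectic transformations — exponentials of purely quadratic expressions in the $\widehat J$'s — while $\widehat T$ is the exponential of a purely linear one. Consequently conjugation by $\widehat U$ acts as $\widehat J^a_k\mapsto\sum_{b,l}M^{ab}_{kl}\widehat J^b_l+v^a_k$ for constants $M,v$, with $v$ coming only from $\widehat T$ and with $M$ \emph{triangular} in the index $k$ in the direction fixed by the description of $\widehat R,\widehat\Delta$ in Section \ref{S43}. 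Substituting this affine change of variables into $\sum_a\widehat W^{(a)}$ (whose cubic and linear parts are supported in the region $i\le j\le k$, $i+j+k\ge 0$ and $j\ge -1$ — a direct check for $\widehat W_0$ and $\widehat W_1$) produces an operator $\widehat W$ that, a priori, has a cubic, a quadratic, a linear and a constant part.

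\textbf{Step 3: identify $\widehat W$ with \eqref{WIN}.} The cubic and linear parts keep their support because the triangularity of $M$ can only move the legs of a vertex in the allowed direction, so $i+j+k\ge 0$ and $j\ge -1$ survive conjugation. The constant part vanishes: $\widehat W\cdot 1=\widehat U\bigl(\sum_a\widehat W^{(a)}\bigr)\cdot 1$ is the $\hbar^1$-coefficient of $Z$, i.e. the $\overline{\mathcal{M}}_{0,3}$- and $\overline{\mathcal{M}}_{1,1}$-contributions, which are cubic-plus-linear in ${\bf T}$ with no constant term — equivalently, a constant $c$ in $\widehat W$ would give $Z$ a factor $e^{\hbar c}$, a forbidden $\hbar^1{\bf T}^0$ term in $\log Z$. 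The quadratic part vanishes: the same $\hbar^1$ comparison kills its creation--creation part, and comparing $Z=e^{\hbar\widehat W}\cdot 1$ order by order in $\hbar$ (starting from the $\overline{\mathcal{M}}_{0,4},\overline{\mathcal{M}}_{1,2},\overline{\mathcal{M}}_{2,0}$ contributions and inducting) excludes the remaining ``propagator-type'' quadratic terms, since the genus expansion of $Z$ is generated purely by trivalent gluings of $R$-dressed vertices together with linear insertions and there is no bivalent vertex. This is exactly the statement that the $\widehat R$- and $\widehat T$-dressing can be fully absorbed into the cubic vertices and the linear term, and it is here that the specific cut-and-join shape of the $\widehat W^{(a)}$ is used.

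\textbf{The main obstacle} is the vanishing of the quadratic part in Step 3: the affine change of variables of Step 2 visibly produces quadratic monomials — shifting a creation leg of a cubic vertex through $\widehat T$ generates one — so the claim is not formal and rests on a cancellation special to the cut-and-join operators $\widehat W_0,\widehat W_1$ and to the precise form of $\widehat R,\widehat T,\widehat\Delta$ coming from the spectral curve. The cleanest way to organise it is probably not a brute-force computation of $\widehat W$ but a one-parameter deformation: connect $(R,T,\Delta)$ to $(\mathrm{id},0,\mathrm{id})$, differentiate in the parameter $s$, and show that both $Z$ (through the Givental formula) and $e^{\hbar\widehat W}\cdot 1$ satisfy the same evolution equation $\partial_s(\cdot)=\widehat A_s(\cdot)$, where $\widehat A_s$ is the infinitesimal Givental operator and the identity $\partial_s\widehat W=(\text{the cubic-plus-linear operator dictated by }[\widehat W,\widehat A_s])$ is the key step; uniqueness of the solution with initial value $\prod_a\tau_{\alpha_a}$ then finishes the proof. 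Alternatively one can conjugate the Virasoro, string, dilaton and dimension constraints of the $\tau_{\alpha_a}$ by $\widehat U$ to get the $N$ families of Virasoro constraints plus the deformed dimension constraint, show $e^{\hbar\widehat W}\cdot 1$ solves this system, and invoke its unique solvability. Either route, the real content is controlling the interplay between the Heisenberg grading and the Givental conjugation and checking that no quadratic term survives.
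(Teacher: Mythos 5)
There is a genuine gap, and it sits exactly where you located "the main obstacle". Your Steps 2--3 define $\widehat W$ as the literal conjugation $\widehat U\bigl(\sum_a\widehat W^{(a)}\bigr)\widehat U^{-1}$ with $\widehat U=\widehat R\widehat T\widehat\Delta$, and then claim its quadratic part vanishes. It does not. The translation operator is $\widehat T=\exp\bigl(\hbar^{-1}\sum\Delta T_k^a\,\partial/\partial T_k^a\bigr)$, so the affine shift it induces on the creation half of the Heisenberg algebra is $T^a_m\mapsto T^a_m+\hbar^{-1}\Delta T^a_m$; substituting this into the cubic vertices of $\widehat W_0,\widehat W_1$ produces quadratic terms with coefficients proportional to $\hbar^{-1}$ and linear terms proportional to $\hbar^{-2}$, which survive and which in particular violate the requirement in \eqref{WIN} that the coefficients be independent of $\hbar$. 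Your argument for their vanishing --- comparing $e^{\hbar\widehat W}\cdot 1$ with $Z$ order by order --- cannot establish an operator identity, because an operator is not determined by its action on $1$; indeed the paper's Section \ref{S6} exhibits a whole family $\widehat W(C)$ of distinct cubic operators generating the same $Z$. So the conjugated operator you construct is genuinely not of the form \eqref{WIN}, and no cancellation internal to $\widehat W_0,\widehat W_1$ rescues it.

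The missing idea is Lemma \ref{virtos}: before conjugating, one replaces the action of $\widehat T$ \emph{on the product of tau-functions} by the action of an upper-triangular Virasoro group element $\widehat V$ (this substitution is an identity of functions, not of operators, and it uses the Virasoro constraints \eqref{VirC} satisfied by $\tau_0,\tau_1$). Since $\widehat V$ and $\widehat R$ are exponentials of purely quadratic, $\hbar$-independent, upper-triangular expressions in the currents, conjugation by $\widehat R\widehat V$ is a homogeneous (not affine) triangular linear substitution $\widehat J^a_k\mapsto\sum M^{ab}_{kl}\widehat J^b_l$; it manifestly preserves the cubic-plus-linear shape, the support conditions $i+j+k\geq 0$, $j\geq -1$, and the $\hbar$-independence of the coefficients. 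This is the whole content of Sections \ref{S5.2}--\ref{S5.3}. Your fallback suggestion of conjugating the Virasoro, dimension and string constraints is in the right spirit --- it is essentially the alternative derivation the paper gives in Section \ref{S6}, where the leftover quadratic terms are absorbed using the relations $\widehat M^{a,b}_{k,m}\cdot Z=0$ --- but as written it is only a sketch, and your primary route as stated would fail.
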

Formula \eqref{MTf} follows from the properties of the operators $\widehat{\Delta}$, $\widehat{T}$, and $\widehat{R}$ and the cut-and-join description of the Kontsevich--Witten and Br\'ezin--Gross--Witten tau-functions \cite{KSS, KS2}; we prove it in Section \ref{S5}. The result was conjectured in \cite[Conjecture 2.1]{H3}. 
By this theorem a generalized ancestor potential is a solution of the cut-and-join equation
\be\label{CAJIN}
\frac{\p}{\p \hbar}Z=\widehat{W}\cdot Z.
\ee
Formula \eqref{MTf} gives a unique solution of this equation in ${\mathbb C}[\![{\bf T}]\!][\![\hbar]\!]$ with $Z\big|_{\hbar=0}=1$.

The cut-and-join description allows us to reconstruct a generalized ancestor potential by a linear recursion. 
Consider the topological expansion
\be
Z =\sum_{k=0}^\infty Z^{(k)}\hbar^k.
\ee
Coefficients $Z^{(k)}$  are polynomials in ${\bf T}$. From Theorem \ref{TIN} we have the following corollary.
\begin{corollary}\label{CorI}
The coefficients of the topological expansion of a generalized ancestor potential satisfy the algebraic topological recursion
\be
Z^{(k)}=\frac{1}{k} \widehat{W}\cdot Z^{(k-1)}
\ee
with the initial condition $Z^{(0)}=1$.
\end{corollary}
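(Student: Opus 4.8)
The plan is to deduce the corollary directly from Theorem \ref{TIN}, the only extra ingredient being a routine check that $\widehat{W}$ is a genuine polynomial differential operator. First I would expand the exponential in \eqref{MTf},
\be
Z=\exp\left(\hbar\widehat{W}\right)\cdot 1=\sum_{k\geq 0}\frac{\hbar^k}{k!}\,\widehat{W}^k\cdot 1 ,
\ee
so that $Z^{(k)}=\frac{1}{k!}\widehat{W}^k\cdot 1$; in particular $Z^{(0)}=1$, and pulling out one factor of $\widehat{W}$ gives $Z^{(k)}=\frac{1}{k}\widehat{W}\cdot\bigl(\frac{1}{(k-1)!}\widehat{W}^{k-1}\cdot 1\bigr)=\frac{1}{k}\widehat{W}\cdot Z^{(k-1)}$, which is exactly the asserted algebraic topological recursion. (Alternatively one may differentiate the displayed identity in $\hbar$, which is legitimate because the coefficients of $\widehat{W}$ are independent of $\hbar$, recover the cut-and-join equation \eqref{CAJIN}, and then equate coefficients of $\hbar^{k-1}$.)

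The one point deserving an argument is that $Z^{(k)}\in{\mathbb C}[{\bf T}]$, equivalently that $\widehat{W}$ maps the polynomial ring to itself; then $Z^{(0)}=1\in{\mathbb C}[{\bf T}]$ together with the recursion gives the claim by induction on $k$. Although the cubic sum in \eqref{WIN} is infinite, only finitely many summands act nontrivially on a fixed polynomial $P$: if every variable occurring in $P$ has index $\leq m$, then in $\widehat{J}^a_i\widehat{J}^b_j\widehat{J}^c_k$ with $i\leq j\leq k$ the factor $\widehat{J}^c_k$ is applied to $P$ first, and either $k\leq 0$, in which case $i\leq j\leq k\leq 0$ and $i+j+k\geq 0$ force $i=j=k=0$, or $k>0$, in which case $\widehat{J}^c_k=(2k-1)!!\,\partial/\partial T^c_{k-1}$ kills $P$ unless $k\leq m+1$, whereupon $i+j+k\geq 0$ confines $i,j$ to $[-2(m+1),m+1]$ as well; the linear term is handled the same way. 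Since each $\widehat{J}^a_i$ is either a derivation or multiplication by a variable, it follows that $\widehat{W}\cdot P\in{\mathbb C}[{\bf T}]$ with $\deg(\widehat{W}\cdot P)\leq\deg P+3$; in particular $\widehat{W}^k\cdot 1$ is a well-defined polynomial of degree at most $3k$ and all the manipulations above take place inside ${\mathbb C}[{\bf T}][\![\hbar]\!]$.

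I expect no real obstacle here: the mathematical content of the corollary is entirely supplied by Theorem \ref{TIN}, and what remains is only the bookkeeping above confirming that $\widehat{W}$ is a bona fide operator on ${\mathbb C}[{\bf T}]$, so that the recursion uniquely determines the polynomial coefficients $Z^{(k)}$ from the initial datum $Z^{(0)}=1$.
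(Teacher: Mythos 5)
Your proof is correct and follows essentially the same route as the paper: the recursion is read off from Theorem \ref{TIN} by expanding (equivalently, differentiating) $\exp(\hbar\widehat{W})\cdot 1$ in $\hbar$, and the only point needing care --- that despite the infinite sum in \eqref{WIN} only finitely many terms of $\widehat{W}$ act nontrivially on a given polynomial, so each $Z^{(k)}$ is a well-defined polynomial of degree at most $3k$ --- is exactly the observation the paper makes (via the grading $\deg\widehat{J}^a_i=1-2i$ and $i+j+k\geq 0$), which your index-bounding argument spells out in slightly more detail.
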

The operator $\widehat{W}$ describes a change of topology. This is why we call it the cut-and-join operator.
Cut-and-join formulas should be very convenient computationally because of the linear structure of the algebraic topological recursion.

There are several families of the partition functions in enumerative geometry for which the cut-and-join description of the form \eqref{MTf} is known, see e.g., \cite{MS,KSS, KS2,H3,AWP}.  For these examples the cut-and-join description is closely related to the Virasoro constraints. Therefore, it is natural to expect the cut-and-join operators to be related to the Virasoro constraints for the general setup of Theorem \ref{TIN}.

In Section \ref{S6.1} with any generalized ancestor potential for $a=1\dots,N$, $m\geq -\alpha_a$ we associate the operators
\be\label{VirIN}
\widehat{L}_m^a=\sum_{\substack{{j<k},\\{j+k\geq m+1}}}L^{a;j,k}_{m;b,c} \widehat{J}_j^b\widehat{J}_k^c
-\frac{1}{2\hbar} \sum_{k=m+\alpha_a}  U^{a,b}_{m,k}\frac{\p}{\p T_k^b}
+\frac{\delta_{m,0}}{16}- \frac{\delta_{m,-1}}{2}p_a,
\ee
where the coefficients $L^{a;j,k}_{m;b,c}$, $U^{a,b}_{k,m}$, and $p_a$ are independent of ${\bf T}$ and $\hbar$.
By construction, the operators $\widehat{L}_m^a$ satisfy the commutation relations of a subalgebra of the direct sum of $N$ copies of the Virasoro algebra
\be
\left[\widehat{L}_k^a,\widehat{L}_m^b\right]=\delta^{ab}(k-m)\widehat{L}_{k+m}^a.
\ee
\begin{theorem}
A generalized ancestor potential satisfies the Virasoro constraints
\be\label{LINi}
\widehat{L}_m^a \cdot Z=0, \quad \quad a=1\dots,N,\quad m\geq -\alpha_a.
\ee
\end{theorem}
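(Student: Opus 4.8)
The plan is to derive \eqref{LINi} by conjugating, with the Givental operators, the classical Virasoro constraints for the two building blocks $\tau_0$ and $\tau_1$. Recall that the Kontsevich--Witten tau-function $\tau_1$ is annihilated by operators $\mathfrak{L}^{\mathrm{KW}}_m$ for $m\geq-1$, and the generalized Br\'ezin--Gross--Witten tau-function $\tau_0$ by operators $\mathfrak{L}^{\mathrm{BGW}}_m$ for $m\geq0$ \cite{DVV,KSS,KS2}; in the normalization fixed by the currents $\widehat{J}^a_k$ each of these operators is a sum of a part quadratic in the $\widehat{J}^a_k$'s, a linear part carrying the factor $\hbar^{-1}$, and a scalar anomaly ($\tfrac1{16}\delta_{m,0}$, together with a $\delta_{m,-1}$ term in the Kontsevich--Witten case), and within each model they satisfy $[\mathfrak{L}_k,\mathfrak{L}_m]=(k-m)\mathfrak{L}_{k+m}$. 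Since the blocks of variables ${\bf T^a}$ are pairwise disjoint, the product $\prod_{a=1}^N\tau_{\alpha_a}(\hbar,{\bf T^a})$ is simultaneously annihilated by the $N$ commuting families $\mathfrak{L}^a_m$, $m\geq-\alpha_a$, where $\mathfrak{L}^a_m$ is $\mathfrak{L}^{\mathrm{KW}}_m$ or $\mathfrak{L}^{\mathrm{BGW}}_m$ acting on the $a$-th block according to whether $\alpha_a=1$ or $0$.

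Writing $\widehat{G}:=\widehat{R}\,\widehat{T}\,\widehat{\Delta}$, one sets $\widehat{L}^a_m:=\widehat{G}\,\mathfrak{L}^a_m\,\widehat{G}^{-1}$. Then \eqref{ZB} gives immediately
\be
\widehat{L}^a_m\cdot Z=\widehat{G}\,\mathfrak{L}^a_m\cdot\prod_{b=1}^N\tau_{\alpha_b}(\hbar,{\bf T^b})=0,\qquad a=1,\dots,N,\quad m\geq-\alpha_a,
\ee
and the relations $[\widehat{L}^a_k,\widehat{L}^b_m]=\delta^{ab}(k-m)\widehat{L}^a_{k+m}$ are inherited from those of the $\mathfrak{L}^a_m$, since conjugation is a Lie algebra homomorphism. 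The real work is to identify $\widehat{G}\,\mathfrak{L}^a_m\,\widehat{G}^{-1}$ with the operator \eqref{VirIN}. From the Heisenberg relations $[\widehat{J}^a_i,\widehat{J}^b_j]=(2i-1)\,\delta^{ab}\,\delta_{i+j,1}$ (for $i>0$) the linear span of the quadratic monomials $\widehat{J}^a_i\widehat{J}^b_j$, the currents $\widehat{J}^a_i$, and the constants is a Lie algebra; as $\widehat{\Delta}$ is a diagonal rescaling of $\hbar$ and the currents, $\widehat{T}$ the exponential of a linear combination of the $\p/\p T^a_k$, and $\widehat{R}$ the exponential of an at most quadratic operator in the $\widehat{J}^a_k$, all three are exponentials of elements of this span, so conjugation by $\widehat{G}$ carries the at most quadratic operator $\mathfrak{L}^a_m$ into an at most quadratic one, the mixing of block indices by $\widehat{R}$ producing the off-diagonal coefficients $L^{a;j,k}_{m;b,c}$ and $U^{a,b}_{m,k}$. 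The summation ranges $j+k\geq m+1$ and $k\geq m+\alpha_a$, and the placement of the factor $\hbar^{-1}$ in front of the linear part, are then read off from the explicit structure of $\widehat{R}$, $\widehat{T}$, $\widehat{\Delta}$ recalled in Section~\ref{S43}: the $R$-matrix is a formal power series in the spectral parameter, so each conjugation can only raise the relevant mode indices --- this is the mechanism behind the deformed dimension constraint --- while the scalar terms $\tfrac1{16}\delta_{m,0}$ and $-\tfrac12 p_a\delta_{m,-1}$ collect the $\mathfrak{L}_0$-anomaly, the constant produced from the $T^2$-part of $\mathfrak{L}^{\mathrm{KW}}_{-1}$ by the translation $\widehat{T}$ and the rescaling $\widehat{\Delta}$, and the scalar contractions generated along the conjugation.

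The principal obstacle is this last identification: tracking, through the three successive conjugations, exactly which monomials $\widehat{J}^b_j\widehat{J}^c_k$ and which derivatives $\p/\p T^b_k$ are generated, and with which powers of $\hbar$, so as to reproduce \eqref{VirIN} precisely --- in particular that no quadratic term with $j+k\leq m$ survives, that the linear part has exactly the displayed $\hbar^{-1}$-shape with the lower bound depending on $\alpha_a$, and that all remaining coefficients are free of ${\bf T}$ and $\hbar$. This requires the explicit expressions for $\widehat{R}$, $\widehat{T}$, $\widehat{\Delta}$ in terms of the topological recursion data and a careful analysis of the conjugations; since both \eqref{VirIN} and the cut-and-join operator \eqref{WIN} are produced by conjugating the corresponding structures on $\tau_0$ and $\tau_1$, this bookkeeping parallels the one behind Theorem~\ref{TIN}.
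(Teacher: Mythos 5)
Your proposal matches the paper's own proof: the paper defines $\widehat{L}_k^a:=\widehat{R}\widehat{T}\widehat{\Delta}\,\widehat{L}_k^{\alpha_a}\,\widehat{\Delta}^{-1}\widehat{T}^{-1}\widehat{R}^{-1}$, so that annihilation of $Z$ is immediate from the Virasoro constraints \eqref{VirC} on $\tau_0,\tau_1$, and then verifies by the same three successive conjugations (with the upper-triangular structure of $\widehat{R}$ guaranteeing well-definedness and the summation ranges) that the result has the form \eqref{Virnew}, i.e.\ \eqref{VirIN}. The only slight imprecision is in attributing the $\delta_{m,-1}$ scalar partly to $\widehat{T}$ and $\widehat{\Delta}$, whereas in the paper $p_a=(r_1)^{aa}$ arises purely from the $\widehat{R}$-conjugation; this does not affect the argument.
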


These Virasoro constraints \label{LIN} are closely related to the quantum Airy structure by Kontsevich and Soibelman \cite{KonS}. In particular, the quantum Airy structure describes the topological recursion  for the local spectral curves with regular simple ramification points \cite{ABCD}. We show that  the Virasoro constraints \eqref{LINi} and the  quantum Airy structure associated with them describe also the partition functions for the irregular spectral curves.

To arrive at Theorem \ref{TIN} we need to supplement the Virasoro constraints \eqref{LINi} with an auxillary equation
\be
\hbar \frac{\p}{\p \hbar} Z=\widehat{H}\cdot Z,
\ee
where
\be
\widehat{H}= \sum_{\substack{i<j,\\ i+j\geq 1}} H_{a,b}^{i,j} \widehat{J}^a_i \widehat{J}^b_j 
\ee
and the coefficients $H_{a,b}^{i,j}$ do not depend on ${\bf T}$ and $\hbar$. This equation is a deformation of the dimension constraint for the product of the tau-functions $\tau_\alpha$. We show that the cut-and-join equation of the form \eqref{CAJIN} follows from this equation and the Virasoro constraints \eqref{LINi}.

It is known that the Chekhov--Eynard--Orantin topological recursion, cohomological field theories and
quantum Airy structures are closely related and essentially, under certain additional conditions,  are equivalent to each other. In this paper we partially explain how these general structures are related to the cut-and-join operators, algebraic topological recursion,
and Virasoro constraints.

\subsection{Open problems}

There are many open questions related to the observations made in this paper. Let us briefly describe some of them. 

The Kontsevich--Witten and Br\'ezin--Gross--Witten tau-functions have very nice matrix model descriptions. Moreover, there are two different matrix models for the Br\'ezin--Gross--Witten tau-function \cite{MMS}, namely the unitary integral and the generalized Kontsevich model. However, the matrix models for the partition functions of the Chekhov--Eynard--Orantin topological recursion or cohomological field theories are not known except for a few examples. We hope that the cut-and-join operators, constructed in this paper, should help to derive the matrix integrals for the partition functions given by the Givental decomposition formula \eqref{ZB}. For this purpose it can be helpful to reformulate the obtained cut-and-join operators in the Miwa parametrization. Cubic cut-and-join description implies that the generating functions should be given by certain sum over the graphs with at most cubic vertices, which also can be related to matrix models.

In this paper, we describe how to construct the cut-and-join operators using the elements of Givental's decomposition formula. However, to construct the Givental operators themselves for given spectral curve data or a Frobenius manifold,  and then to find the coefficients of the cut-and-join operator explicitly is, in general, a highly non-trivial task. Therefore, it would be desirable to describe the coefficients of the cut-and-join operators directly in terms of the underlying topological recursion data. For the semi-simple conformal Frobenius manifolds it should be possible to describe the cut-and-join operators in terms of the Frobenius structure. The Virasoro constraints for the calibrated partition functions have a simple explicit form \cite{DZ,Giv1}, we expect that the cut-and-join operators and the $N$-families of the Virasoro constraints can be described with the help of the similar elements. Moreover, in this case it should be possible to extend the cut-and-join and the Virasoro description obtained in this paper to the descendant potentials. 

Any cubic operator of the form \eqref{WIN} generates some function $Z$.
From our construction it is not clear which properties of the cubic operators $\widehat{W}$ correspond  to the space of generalized ancestor potentials. Moreover, since such operators are not unique, there should be a natural way, probably related to the symplectic structure for CohFT, to fix the ambiguity and to classify all generalized ancestor potentials by these operators. 
 We also expect that the cut-and-join operators can describe not only the partition functions, but the semi-simple CohFTs themselves. The cut-and-join operators should provide an alternative formulation of Teleman's classification of semi-simple CohFTs.

We do not discuss integrability in this paper. We expect that the cut-and-join description should be helpful for investigation of integrable properties of the corresponding partition functions. Nice integrable properties, in particular the KP type integrability, should be related to the distinguished representation theory interpretations of the cut-and-join operators and existence of the planar global spectral curve. 

It is very challenging to generalize the construction to non-semi-simple CohFTs  corresponding to the topological recursion with higher ramification points. In this case the partition functions should satisfy the higher W-constraints and we expect them to be described by the families of the  higher degree  cut-and-join operators. 

We hope to come back to these problems in the upcoming publications.

\subsection{Conventions}

We use a non-standard normalization of the variables $T_k$, which contains an additional factor of $\hbar$ 
compared to the standard normalization. This change of normalization corresponds to transition from the {\em genus expansion} of the generating functions
\be
\exp\left(\sum_{g,n\geq 0} \hbar^{2g-2}{\mathcal F}_{g,n}\right),
\ee
where $g$ is the genus, $n$ is the number of the marked points, to the {\em topological expansion} 
\be
\exp\left(\sum_{g,n \geq 0} \hbar^{2g-2+n}{\mathcal F}_{g,n}\right),
\ee
because the latter is more convenient for our needs.

By $\widehat{\cdot}$ we denote the differential operators (actually, some formal completions of the space of polynomial differential operators) acting on the space of functions of the variables ${\bf T}=\{T_k^a\}$, where $k\geq 0$ and $a$, if any, runs an appropriate range. For an operator $\widehat{A}$ and a function $F({\bf T})$ we denote by $\widehat{A}\cdot F$ the result of action of the operator on the function. We denote by $a,b,c,d$ the indices, associated with the $N$-dimensional space $V$. We assume the Einstein summation convention when these indices repeat twice. The index $\alpha$ takes values in $\{0,1\}$.


\subsection*{Organization of the paper} 
The main goal of Sections \ref{S2}--\ref{S4} is to recall the basic ingredients of  cohomological field theories, Givental--Telemann formalism, and the Chekhov--Eynard--Orantin topological recursion; they do not contain any new material. In Section \ref{S5} we construct a cubic cut-and-join description of the generalized ancestor potentials. This class includes the partition functions 
of the Chekhov--Eynard--Orantin topological recursion on (possibly irregular) local spectral curves with simple ramification points and semi-simple CohFTs. In Section \ref{S6} we construct $N$ families of the Virasoro constraints and a deformed dimension constraint for generalized ancestor potential. We use these constraints to derive the cut-and-join description and discuss possible ambiguities of this description.

\subsection*{Acknowledgments}
The author is grateful to Sergey Shadrin and Ga\"etan Borot for the inspiring discussions and to the anonymous referee of the paper for the useful remarks.
This work was supported by the Institute for Basic Science (IBS-R003-D1). The author would like to thank IHES, where the part of this project was done, for the kind hospitality.

\section{Cut-and-join operators for KW and BGW tau-functions}\label{S2}

In this section, we remind the reader of some properties of the tau-functions of the KdV hierarchy, playing the main role in our construction. In particular,  we describe the Virasoro constraints and their relation to the cut-and-join operators. 


\subsection{Intersection numbers and KdV tau-function}\label{Sec2.1}
Denote by $\overline {\mathcal M}_{g,n}$ the Deligne--Mumford compactification of the moduli space $ {\mathcal M}_{g,n}$ of all compact Riemann surfaces of genus~$g$ with~$n$ distinct marked points. It is a non-singular complex orbifold of dimension~$3g-3+n$,  which is empty unless the {\em stability condition}
\begin{gather}\label{stability}
2g-2+n>0
\end{gather}
is satisfied. 

In his seminal paper~\cite{Wit91}, Witten initiated new directions in the study of $\overline{\mathcal{M}}_{g,n}$. For each marking index~$i$ consider the cotangent line bundle ${\mathbb{L}}_i \rightarrow \overline{\mathcal{M}}_{g,n}$, whose fiber over a point $[\Sigma,z_1,\ldots,z_n]\in \overline{\mathcal{M}}_{g,n}$ is the complex cotangent space $T_{z_i}^*\Sigma$ of $\Sigma$ at $z_i$. Let $\psi_i\in H^2(\overline{\mathcal{M}}_{g,n},\mathbb{Q})$ denote the first Chern class of ${\mathbb{L}}_i$. We consider the intersection numbers
\begin{gather}\label{eq:products}
\<\tau_{k_1} \tau_{k_2} \cdots \tau_{k_n}\>_g:=\int_{\overline{\mathcal{M}}_{g,n}} \psi_1^{k_1} \psi_2^{k_2} \cdots \psi_n^{k_n} \in {\mathbb Q} .
\end{gather}
The integral on the right-hand side of~\eqref{eq:products} vanishes unless the stability condition~\eqref{stability} is satisfied, all  $k_i$ are non-negative integers, and the dimension constraint 
\be\label{dim1}
3g-3+n=\sum_{i=1}^n k_i
\ee 
holds true. Let $T_i$, $i\geq 0$, be formal variables and let
\be
\tau_{KW}:=\exp\left(\sum_{g,n} \hbar^{2g-2+n}F_{g,n}\right),
\ee
where
\be
F_{g,n}:=\sum_{k_1,\ldots,k_n\ge 0}\<\tau_{k_1}\tau_{k_2}\cdots\tau_{k_n}\>_g\frac{\prod T_{k_i}}{n!}.
\ee
A parameter $\hbar$ here is introduced to trace the Euler characteristic of the punctured curve $\Sigma$, therefore it is associated with the  {\em topological expansion} of the generating function.

Witten's conjecture \cite{Wit91}, proved by Kontsevich \cite{Kon92}, states that the partition function~$\tau_{KW}$ becomes a tau-function of the KdV hierarchy after the change of variables~$T_n=(2n+1)!!t_{2n+1}$. 
\begin{theorem}[\cite{Kon92}]
The generating function $\tau_{KW}$
is a tau-function of the KdV hierarchy in the variables $ t_k$.
\end{theorem}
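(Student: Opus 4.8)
The plan is to prove the theorem in two stages: first identify $\tau_{KW}$ with the Kontsevich matrix Airy integral, and then read off the KdV property from the Grassmannian picture for that integral. For the first stage I would use the cell decomposition of the combinatorial model $\mathcal M_{g,n}\times\mathbb R_{>0}^n$ by ribbon graphs arising from Strebel quadratic differentials with prescribed perimeters $p_i$ around the punctures. Expanding the $\psi$-classes in terms of the edge lengths and performing a Laplace transform in the $p_i$ turns the sum over intersection numbers into a sum over trivalent ribbon graphs weighted by $\prod_{\text{edges}}(\lambda_a+\lambda_b)^{-1}$, which one recognizes as the Feynman expansion of a Gaussian-plus-cubic Hermitian matrix integral. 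The outcome is the identity (up to normalization and choice of contour)
\be
\tau_{KW}\Big|_{T_k=(2k-1)!!\,\Tr\Lambda^{-2k-1}}=\frac{\displaystyle\int_{\mathcal H_N} dM\,\exp\left(-\tfrac12\Tr\Lambda M^2+\tfrac{i}{6}\Tr M^3\right)}{\displaystyle\int_{\mathcal H_N} dM\,\exp\left(-\tfrac12\Tr\Lambda M^2\right)}
\ee
as an asymptotic series in the Miwa variables $\Tr\Lambda^{-2k-1}$, for $\mathcal H_N$ the space of $N\times N$ Hermitian matrices with $N$ large.

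For the second stage I would show the right-hand side is a KP tau-function in the times $t_j\propto\Tr\Lambda^{-j}$, and that it reduces to KdV. Completing the cube by $M\mapsto M-\Lambda$ and rescaling brings the integral to the generalized Kontsevich form $\int dM\,\exp\bigl(\Tr(V(M)-\Lambda M)\bigr)$ with $V$ a polynomial; the Harish-Chandra/Itzykson--Zuber formula reduces it to an eigenvalue integral, and the method of (bi)orthogonal polynomials rewrites it as a ratio of determinants
\be
\frac{\det\bigl(\phi_a(\lambda_b)\bigr)_{a,b=1}^N}{\det\bigl(\lambda_b^{a-1}\bigr)_{a,b=1}^N},
\ee
with $\phi_a$ one-dimensional oscillatory integrals built from $V$. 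Such a ratio is the expansion of a point of the Sato Grassmannian, hence defines a KP tau-function. Because the potential is cubic, i.e. odd, the associated plane is stable under $z\mapsto-z$, which kills the dependence on the even times and yields the $2$-reduction from KP to KdV; equivalently, one checks the $2$-reduction directly from the explicit form of the $\phi_a$.

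An alternative, more algebraic route — the one closer in spirit to the later sections of this paper — is to bypass the matrix model: derive the Virasoro constraints $\widehat L_n\,\tau_{KW}=0$ for $n\ge-1$ from the geometry of the forgetful morphisms $\overline{\mathcal M}_{g,n+1}\to\overline{\mathcal M}_{g,n}$ and the comparison of $\psi$-classes under them (the cases $n=-1,0$ being the string and dilaton equations), observe that these constraints together with the normalization $\tau_{KW}\big|_{\mathbf T=0}=1$ determine $\tau_{KW}$ uniquely as a formal series, and then invoke the purely algebraic fact — essentially the Kac--Schwarz description of the corresponding point of the Grassmannian — that there is a KdV tau-function annihilated by the same operators. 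Uniqueness forces the two to coincide.

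The main obstacle in either route is the first step: the identification of the intersection-number generating function with the matrix Airy integral (respectively, the geometric derivation of the full Virasoro family). This requires the Strebel-differential cell decomposition, careful control of the orbifold structure and of boundary contributions, and an exact matching of two combinatorial expansions; it is the genuine technical core of Kontsevich's argument. By contrast, once the matrix model or the Virasoro constraints are available, the conclusion that $\tau_{KW}$ is a KdV tau-function is comparatively formal, relying on Sato's Grassmannian formalism together with the $2$-reduction coming from the oddness of the cubic potential.
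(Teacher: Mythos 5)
You should first note that the paper does not prove this statement at all: it is quoted as Kontsevich's theorem \cite{Kon92}, and the only remark the paper makes about its proof is that integrability of the KW tau-function ``immediately follows from Kontsevich's matrix integral representation.'' Your first route is exactly that standard argument, and your outline of it is accurate: the Strebel-differential cell decomposition, the Laplace transform in the perimeters, the identification of the ribbon-graph sum with the Feynman expansion of the matrix Airy integral in the Miwa variables $T_k=(2k-1)!!\Tr\Lambda^{-2k-1}$, the reduction to a ratio of determinants via Harish-Chandra and orthogonal polynomials, and the Sato--Grassmannian argument; you are also right to locate the genuine technical core in the first stage, which you do not carry out. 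Two caveats. First, the $2$-reduction is more precisely the statement that the point $W$ of the Grassmannian satisfies $z^2 W\subset W$ (the Kac--Schwarz picture \cite{KS}), which is what kills the even times; mere stability of the plane under $z\mapsto-z$ is not the right condition. Second, and more seriously, your alternative route is not available as stated: only the string and dilaton equations ($n=-1,0$) follow from elementary comparison of $\psi$-classes under the forgetful morphisms. The higher Virasoro constraints of \cite{DVV,Fuku} were historically \emph{derived from} the KdV hierarchy together with the string equation, so using them to prove KdV is circular unless you supply an independent geometric derivation of the full Virasoro family (such derivations exist, e.g.\ via Mirzakhani's recursion \cite{Mirza,MulaseS} or the ELSV-type arguments, but they are substantial theorems in their own right, not consequences of the forgetful-map geometry). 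Since you present that only as an alternative and your primary route is the standard one, the proposal stands as a correct proof sketch, with the understanding that the exact matching of the two combinatorial expansions is the real content of Kontsevich's argument and remains to be done.
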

Below we call it the Kontsevich--Witten (KW) tau-function. Integrability of the KW tau-function immediately follows from Kontsevich's matrix integral representation, for more detail see \cite{H3_2} and references therein.

\subsection{Norbury's classes and KdV tau-function}\label{S2.1}
Another KdV tau-function we consider is the Br\'ezin--Gross--Witten (BGW) tau-function. This tau-function conjecturally governs the intersection theory on the moduli spaces $\overline{\mathcal{M}}_{g,n}$, but this time with the insertions of the fascinating Norbury's $\Theta$-classes. 
Norbury's $\Theta$-classes are the cohomology classes, $\Theta_{g,n}\in H^{4g-4+2n}(\overline{\mathcal{M}}_{g,n})$. We refer the reader to \cite{Norb,NorbS,KN} for a detailed description. Consider the intersection numbers
\be\label{Tinter}
 \<\tau_{k_1} \tau_{k_2} \cdots \tau_{k_n}\>_g^\Theta= \int_{\overline{\mathcal{M}}_{g,n}}\Theta_{g,n} \psi_1^{k_1} \psi_2^{k_2} \cdots \psi_n^{k_n} \in {\mathbb Q} . 
\ee
Again, the integral on the right-hand side vanishes unless the stability condition~\eqref{stability} is satisfied, all $k_i$ are non-negative integers, and the dimension constraint 
\be\label{dim2}
g-1=\sum_{i=1}^n k_i
\ee 
holds. Consider the generating function of the 
intersection numbers of $\Theta$ and $\psi$ classes,
\be
F^\Theta_{g,n}:= \sum_{k_1,\ldots,k_n\ge 0} \<\tau_{k_1} \tau_{k_2} \cdots \tau_{k_n}\>_g^\Theta  \frac{\prod T_{k_i}}{n!}
\ee
and
\be
\tau_\Theta = \exp\left(\sum_{g,n} \hbar^{2g-2+n}F_{g,n}^\Theta \right).
\ee
For this generating function Norbury has suggested a direct analog of Witten's conjecture \cite[Conjecture 1]{NorbS}, recently proven by Chidambaram, Garcia--Failde, and Giacchetto.
\begin{theorem}[\cite{NorbPr}]
The generating function $\tau_{\Theta}$
is the BGW tau-function of the KdV hierarchy in the variables $t_k$
\be
\tau_\Theta=\tau_{BGW}.
\ee
\end{theorem}

Norbury also shows that $\tau_\Theta$ is related to the super Riemann surfaces.

The BGW model was introduced in lattice gauge theory 40 years ago.  KdV integrability of the BGW model in the weak coupling limit follows from the relation to the generalized Kontsevich model and was established by Mironov, Morozov, and Semenoff \cite{MMS}. For more details on BGW tau-function see also \cite{KS2}.

The KW and BGW tau-functions can be described in terms of the Virasoro constraints which completely specify the generating functions and are indirectly equivalent to the integrability. In the next section we describe these constraints and their solution in terms of the cut-and-join operators. 


\subsection{Virasoro constraints and cut-and-join operators}

Let us introduce the operators
\be
\widehat{J}_k:=
\begin{cases}
\displaystyle{(2k-1)!!\frac{\p}{\p T_{k-1}}} \,\,\,\,\,\,\,\,\,\,\,\, &\mathrm{for} \quad k> 0,\\[3pt]
\displaystyle{\frac{T_{|k|}}{(2|k|-1)!!} }&\mathrm{for} \quad k\leq 0.
\end{cases}
\ee
Consider the {\em bosonic current}
\begin{equation}\label{bc}
\begin{split}
\widehat{J}(z)&:=\sum_{k=0}^\infty\left( \frac{T_{k}}{(2k-1)!!}z^{2k}+\frac{(2k+1)!!}{z^{2k+2}}\frac{\p}{\p T_{k}}\right)\\
&=\sum_{k\in{\mathbb Z}} \frac{\widehat{J}_k}{z^{2k}}.
\end{split}
\end{equation}
Let us consider the Virasoro algebra associated with the bosonic current $\widehat{J}(z)$,
\be
\widehat{L}(z)=:\sum_{k\in {\mathbb Z}} \frac{\widehat{L}_k}{z^{2k+2}},
\ee
where
\be
\widehat{L}(z)=\frac{1}{2} \normordboson \widehat{J}(z)^2\normordboson,
\ee
and the normal ordering $\normordboson\dots\normordboson$ puts all  $\frac{\p}{\p T_k}$ to the right of all $T_k$. The Virasoro operators are given by $\widehat{L}_m=\frac{1}{2}\sum_{j+k=m+1} \normordboson \widehat{J}_j\widehat{J}_k\normordboson$ or
\begin{multline}\label{virfull}
\widehat{L}_m=\frac{1}{2} \sum_{i+j=-m-1} \frac{T_i T_j}{(2i-1)!! (2j-1)!!}+ \sum_{k=0}^\infty \frac{(2k+2m+1)!!}{(2k-1)!!} T_k \frac{\p}{\p T_{k+m}}\\
+\frac{1}{2} \sum_{i+j=m-1} (2i+1)!! (2j+1)!!\frac{\p^2}{\p T_i \p T_j}.
\end{multline}
Here and below we assume that $T_k$ and $\frac{\p}{\p T_k}$ vanish for negative $k$. The operators $\widehat{J}_m$ and $\widehat{L}_m$ satisfy the commutation relations of the Heisenberg--Virasoro algebra
\begin{equation}\label{comrel}
\begin{split}
\left[\widehat{J}_k,\widehat{J}_m\right]&=(2k-1)\delta_{k+m,1},\\
\left[\widehat{L}_k,\widehat{J}_m\right]&=-(2m-1)\widehat{J}_{k+m},\\
\left[\widehat{L}_k,\widehat{L}_m\right]&=2(k-m)\widehat{L}_{k+m}+\frac{1}{6}k(2k^2+1)\delta_{k+m,0}.
\end{split}
\end{equation}

Let us denote
\be
\tau_1={\tau}_{KW}, \quad \quad \tau_0={\tau}_{BGW}.
\ee
Below we always assume $\alpha \in \{0,1\}$.
The dimension constraints \eqref{dim1} and \eqref{dim2} can be represented as
\begin{equation}
\begin{split}\label{dimV}
 \hbar \frac{\p}{\p \hbar}  {\tau}_\alpha(\hbar, {\bf T})=\frac{1}{ 2\alpha+1}\widehat{L}_0\cdot {\tau}_\alpha (\hbar, {\bf T}).
\end{split}
\end{equation}

The tau-functions $\tau_\alpha$ satisfy the {\em Virasoro constraints}
\begin{equation}
\begin{split}\label{VirC}
\widehat{L}_k^{\alpha} \cdot {\tau}_{\alpha} &=0, \quad  k \geq -\alpha, 
\end{split}
\end{equation}
for the Virasoro operators
\begin{equation}\label{Vir11}
\begin{split}
\widehat{L}_k^{\alpha}&=\frac{1}{2}\widehat{L}_{k}-\frac{(2k+2\alpha+1)!!}{2\hbar}\frac{\p}{\p T_{k+\alpha}} +\frac{\delta_{k,0}}{16}.
\end{split}
\end{equation}
These constraints were derived in \cite{GN,Fuku,DVV}. Operators $\widehat{L}_k^{\alpha}$ satisfy the commutation relations
\be
\left[\widehat{L}_k^{\alpha},\widehat{L}_m^{\alpha}\right]=(k-m)\widehat{L}_{k+m}^{\alpha}, \quad \quad k,m \geq -\alpha.
\ee
Combining (\ref{dimV}) with (\ref{VirC}) one arrives at
\be\label{cajalp}
\frac{\p}{\p \hbar} \tau_\alpha=\widehat{W}_\alpha \cdot \tau_\alpha, 
\ee
where
\be\label{leadcaj}
\widehat{W}_\alpha:=\frac{1}{2\alpha+1} \sum_{k=0}^\infty \frac{T_k}{(2k-1)!!} \left(\widehat{L}_{k-\alpha}+\frac{\delta_{k,\alpha}}{8}\right).
\ee
We can also rewrite these operators as
\be\label{Wa}
\widehat{W}_\alpha=\frac{1}{2\alpha+1}\Res_z \frac{1}{z^{2\alpha-1}}\left(\widehat{J}(z)_+\widehat{L}(z)+\frac{1}{8z^2}\widehat{J}(z)\right),
\ee
where for any formal series $g(z)=\sum g_k z^k$ we put $g(z)_+=\sum_{k=0}^\infty g_k z^k$ and $\Res_z g(z) = g_{-1}$.
These operators are 
\begin{equation}
\begin{split}\label{CAJ1}
\widehat{W}_0&=
\frac{1}{2}\sum_{k,m=0}^\infty \frac{(2k+1)!!(2m+1)!!}{(2k+2m+1)!!}T_{k+m+1}\frac{\p^2}{\p T_k \p T_m}\\
&+\sum_{k,m=0}^\infty \frac{(2k+2m+1)!!}{(2k-1)!!(2m-1)!!}T_{k}T_{m}\frac{\p}{\p T_{k+m}}
+\frac{T_0}{8},\\
\widehat{W}_1&=\frac{1}{6}\sum_{k,m=0}^\infty\frac{(2k+1)!!(2m+1)!!}{(2k+2m+3)!!}T_{k+m+2}\frac{\p^2}{\p T_k \p T_m}\\
&+\frac{1}{3}\sum_{k,m=0}^\infty \frac{(2k+2m-1)!!}{(2k-1)!!(2m-1)!!}T_{k}T_{m}\frac{\p}{\p T_{k+m-1}}
+\frac{T_0^3}{6}+\frac{T_1}{24}.
\end{split}
\end{equation}

For $\alpha\in \{0,1\}$ equation \eqref{cajalp} has a unique solution with a given initial condition $\tau_{\alpha}({\bf 0})=1$:
\begin{theorem}[\cite{KSS,KS2}]\label{TKW}
\be\label{cajkdv}
\tau_{\alpha}=\exp\left({\hbar \widehat{W}_{\alpha}}\right)\cdot 1.
\ee
\end{theorem}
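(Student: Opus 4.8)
The plan is to verify that the right-hand side $\exp(\hbar\widehat{W}_\alpha)\cdot 1$ solves the cut-and-join equation \eqref{cajalp} with the correct initial condition, and then invoke uniqueness. The initial condition is immediate: setting $\hbar=0$ in $\exp(\hbar\widehat{W}_\alpha)\cdot 1$ gives $1=\tau_\alpha({\bf 0})$. For the differential equation, observe that $F(\hbar):=\exp(\hbar\widehat{W}_\alpha)\cdot 1$ satisfies $\frac{\p}{\p\hbar}F=\widehat{W}_\alpha\cdot F$ tautologically, since $\widehat{W}_\alpha$ has no explicit $\hbar$-dependence and commutes with $\exp(\hbar\widehat{W}_\alpha)$. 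So the only real content is that this $F$ coincides with the unique solution, i.e. that $\tau_\alpha$ itself actually satisfies \eqref{cajalp}; but this was already derived in the excerpt by combining the dimension constraint \eqref{dimV} with the Virasoro constraints \eqref{VirC}. Thus, logically, once \eqref{cajalp} is established and uniqueness of its solution in ${\mathbb C}[\![{\bf T}]\!][\![\hbar]\!]$ with $\tau_\alpha({\bf 0})=1$ is checked, the theorem follows.

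First I would establish the uniqueness claim, which is the technical heart of the argument. Write $\tau_\alpha=\sum_{k\geq 0}\tau_\alpha^{(k)}\hbar^k$. The operator $\widehat{W}_\alpha$, as written in \eqref{leadcaj} or \eqref{CAJ1}, is a sum of a quadratic-differential piece, a piece linear in $\p/\p T$ with coefficients quadratic in $T$, and a constant term $T_0/8$ (resp.\ $T_0^3/6+T_1/24$); crucially, each monomial in $\widehat{W}_\alpha$ raises the total degree $\deg:=\sum_k (2k+\text{something})\,\#T_k$ — more precisely, in the topological expansion, the $\hbar$-degree tracks $2g-2+n$, and $\widehat{W}_\alpha$ shifts this by $+1$ while being degree-raising in an appropriate grading on ${\mathbb C}[{\bf T}]$. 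The point is that extracting the $\hbar^k$-coefficient of \eqref{cajalp} gives the recursion $k\,\tau_\alpha^{(k)}=\widehat{W}_\alpha\cdot\tau_\alpha^{(k-1)}$, which for $k\geq 1$ determines $\tau_\alpha^{(k)}$ uniquely from $\tau_\alpha^{(k-1)}$, and $\tau_\alpha^{(0)}=1$ is fixed by the initial condition. Hence the solution is unique, and since both $\tau_\alpha$ and $\exp(\hbar\widehat{W}_\alpha)\cdot 1$ solve it with the same initial data, they agree.

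The main obstacle is making the derivation of \eqref{cajalp} from \eqref{dimV} and \eqref{VirC} fully rigorous at the level of the formal operator manipulations — in particular, justifying the rewriting \eqref{Wa} via the residue/normal-ordering formalism and checking that the contraction of the string of Virasoro constraints $\widehat{L}_k^\alpha$ against the current $\widehat{J}(z)_+$ reproduces exactly $\widehat{W}_\alpha$ including the anomalous constant term $\tfrac{1}{8z^2}\widehat{J}(z)$ (which accounts for the $\delta_{k,\alpha}/8$ shift and ultimately the $1/16$ in \eqref{Vir11}). Concretely, one multiplies the constraint $\widehat{L}_{k-\alpha}^\alpha\cdot\tau_\alpha=0$ by $\tfrac{1}{2\alpha+1}\tfrac{T_k}{(2k-1)!!}$, sums over $k\geq\alpha$, uses \eqref{Vir11} to replace $\widehat{L}_{k-\alpha}^\alpha$ by $\tfrac12\widehat{L}_{k-\alpha}-\tfrac{(2k+1)!!}{2\hbar}\tfrac{\p}{\p T_k}+\tfrac{\delta_{k,\alpha}}{16}$, and recognizes that the terms proportional to $\tfrac{1}{\hbar}\tfrac{\p}{\p T_k}$ assemble precisely into $\hbar\tfrac{\p}{\p\hbar}$ acting on $\tau_\alpha$ via \eqref{dimV} after reindexing — this bookkeeping with the double-factorial coefficients is where care is needed, but it is routine given the commutation relations \eqref{comrel}. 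Once this identity is in hand, the theorem is a one-line consequence of uniqueness, as above.
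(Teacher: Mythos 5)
Your proposal is correct and follows essentially the same route as the paper: combine the Virasoro constraints \eqref{VirC} with the dimension constraint \eqref{dimV} to obtain the cut-and-join equation \eqref{cajalp}, then conclude by uniqueness of its solution in ${\mathbb C}[\![{\bf T}]\!][\![\hbar]\!]$ with $Z\big|_{\hbar=0}=1$, established order by order in $\hbar$ via the linear recursion $k\,\tau_\alpha^{(k)}=\widehat{W}_\alpha\cdot\tau_\alpha^{(k-1)}$. The only slip is the summation range in your contraction of the constraints: it should be $k\geq 0$ rather than $k\geq\alpha$ (the constraint $\widehat{L}^{\alpha}_{k-\alpha}$ is available for all $k\geq 0$, and for $\alpha=1$ the $k=0$ term is precisely what produces $T_0^3/6$ and part of the quadratic-in-$T$ piece of $\widehat{W}_1$), consistent with the sum $\sum_{k=0}^\infty$ in \eqref{leadcaj}.
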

Let us describe this solution in more detail.
\begin{definition}
We introduce a ${\mathbb Z}$ grading on the space of formal series in the variables ${\bf T}$ and $\hbar$ and differential operators on this space, by assigning 
\be\label{deg}
\deg T_k =2k+1,\quad\quad \deg \frac{\p}{\p T_k} =-2k-1, \quad \quad \deg \hbar =0.
\ee
\end{definition}
Then $\deg  \widehat{J}_k  =1-2k$, $\deg \widehat{L}_k = -2k$, and $\deg \widehat{W}_\alpha=2\alpha+1$.
Let us consider the {\em topological expansion} of the tau-functions
\be
\tau_{\alpha}= \sum_{k=0}^\infty \tau_\alpha^{(k)}\hbar^k. 
\ee
The coefficients $\tau_\alpha^{(k)}$ are homogenous polynomials in ${\bf T}$ of degree $(2\alpha+1)k$.
By Theorem \ref{TKW} they satisfy a linear recursion relation.
\begin{corollary}\label{cor1}
\be\label{rec1}
 \tau_\alpha^{(k)} =\frac{\widehat{W}_\alpha}{k}\cdot  \tau_\alpha^{(k-1)}.
\ee
\end{corollary}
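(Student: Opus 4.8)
The plan is to extract the recursion directly from the closed formula \eqref{cajkdv} of Theorem \ref{TKW} by comparing powers of $\hbar$. Since the coefficients of $\widehat{W}_\alpha$ are independent of $\hbar$, differentiating the exponential series term by term gives
\[
\frac{\p}{\p \hbar}\tau_\alpha = \frac{\p}{\p \hbar}\exp\left(\hbar \widehat{W}_\alpha\right)\cdot 1 = \widehat{W}_\alpha \exp\left(\hbar \widehat{W}_\alpha\right)\cdot 1 = \widehat{W}_\alpha\cdot \tau_\alpha,
\]
which is precisely the cut-and-join equation \eqref{cajalp}; alternatively one may simply start from \eqref{cajalp}, already derived above from the dimension constraint \eqref{dimV} and the Virasoro constraints \eqref{VirC}.

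Next I would substitute the topological expansion $\tau_\alpha = \sum_{k\geq 0}\tau_\alpha^{(k)}\hbar^k$ into both sides of \eqref{cajalp}. On the left, $\frac{\p}{\p \hbar}\tau_\alpha = \sum_{k\geq 1} k\,\tau_\alpha^{(k)}\hbar^{k-1}$. On the right, since $\widehat{W}_\alpha$ acts only on the variables ${\bf T}$, one has $\widehat{W}_\alpha\cdot \tau_\alpha = \sum_{k\geq 0}\left(\widehat{W}_\alpha\cdot \tau_\alpha^{(k)}\right)\hbar^k$. Equating the coefficients of $\hbar^{k-1}$ for each $k\geq 1$ yields $k\,\tau_\alpha^{(k)} = \widehat{W}_\alpha\cdot \tau_\alpha^{(k-1)}$, which is \eqref{rec1} after dividing by $k$; the initial condition $\tau_\alpha^{(0)} = 1$ is obtained by setting $\hbar = 0$ in \eqref{cajkdv}.

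The only point requiring a word of care is that $\widehat{W}_\alpha$ is a formal completion of polynomial differential operators, so one should check that the right-hand side of \eqref{rec1} is well defined. This is immediate: each $\tau_\alpha^{(k-1)}$ is a polynomial in ${\bf T}$ — indeed, by the grading \eqref{deg}, a homogeneous one of degree $(2\alpha+1)(k-1)$ — and from the explicit expressions \eqref{CAJ1} for $\widehat{W}_0$ and $\widehat{W}_1$ one sees that acting on a polynomial involves only finitely many nonzero terms and produces again a polynomial. The degree count $\deg \widehat{W}_\alpha = 2\alpha+1$ then confirms that $\widehat{W}_\alpha\cdot \tau_\alpha^{(k-1)}$ is homogeneous of degree $(2\alpha+1)k$, matching the asserted degree of $\tau_\alpha^{(k)}$. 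There is no genuine obstacle: the entire content of the corollary is carried by Theorem \ref{TKW}, and the argument is the same bookkeeping that will reappear in the proof of Corollary \ref{CorI}.
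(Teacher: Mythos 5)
Your proposal is correct and follows essentially the same route the paper intends: Corollary \ref{cor1} is just the coefficient-of-$\hbar^{k-1}$ extraction from the cut-and-join equation \eqref{cajalp} (equivalently, from expanding $\exp(\hbar\widehat{W}_\alpha)\cdot 1$ as $\sum_k \hbar^k\widehat{W}_\alpha^k/k!\cdot 1$), together with the observation that only finitely many terms of $\widehat{W}_\alpha$ act nontrivially on each polynomial $\tau_\alpha^{(k-1)}$. The degree bookkeeping you add matches the paper's remarks following the corollary, so nothing is missing.
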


We call it the {\em algebraic topological recursion} to distinguish it from the Chekhov--Eynard--Orantin topological recursion considered in Section \ref{S4}. Operators $\widehat{W}_\alpha$ describe the change of topology, that's why slightly abusing notation we call such operators the {\em cut-and-join operators}. These operators are given by infinite sums \eqref{CAJ1}, however, on each step of the recursion \eqref{rec1} only a finite number of terms contribute. Hence, the recursion is given by the action of the polynomial differential operators on polynomials.

The main goal of this paper is to prove a generalization of Theorem \ref{TKW} and Corollary \ref{cor1} for a huge family of partition functions, related to cohomological field theories and the Chekhov--Eynard--Orantin topological recursion.


\subsection{From translations to Virasoro group elements}
Let us describe a relation between the action of  translation and Virasoro group operators on the tau-functions $\tau_\alpha$. Let
\be\label{virw}
{\mathtt l}_m=-z^{2m+1}\frac{\p}{\p z}.
\ee
For a Virasoro group element of the form
\be\label{Virg}
\widehat{V}_{\bf \chi}=\exp\left({\sum_{k>0} \chi_k \widehat{L}_k }\right),
\ee
where the Virasoro operators $\widehat{L}_k$ are given by \eqref{virfull} and coefficients $\chi_k\in {\mathbb C}$ are independent of ${\bf T}$,
consider the series
\be
\Xi(\widehat{V}_{\bf \chi}):=e^{\sum_{k>0} \chi_k {\mathtt l}_k } \,z \, e^{-\sum_{k>0} \chi_k {\mathtt l}_k } \in z+z{\mathbb C}[\![z^2]\!].
\ee

For a given set of parameters $\chi_k\in {\mathbb C}$, $k>0$ we denote the  series  associated with the operator ${\widehat{V}}_{\bf \chi}$ by $f(z;{\bf \chi})=\Xi(\widehat{V}_{\bf \chi})$.
We also introduce the inverse formal series $h(z;{\bf \chi})  \in z+z{\mathbb C}[\![z^2]\!]$,
\be\label{hfun}
f(h(z;{\bf \chi});{\bf \chi})=z.
\ee

For any series $f(z)\in z+z{\mathbb C}[\![z^2]\!]$ we construct the corresponding element of the Virasoro group 
\be\label{V}
\widehat{V}=\Xi^{-1}(f(z)),
\ee
where the coefficients $\chi_k$ are defined implicitly by
\be\label{fdef}
f(z)=e^{\sum_{k>0} \chi_k {\mathtt l}_k } \,z \, e^{-\sum_{k>0} \chi_k {\mathtt l}_k }.
\ee

Using the Virasoro constraints \eqref{VirC} we can substitute the action of translation operators on the tau-functions $\tau_\alpha$ by the action of the Virasoro group elements. Let the functions $v^\alpha(z;{\bf \chi})$ and $f(z;{\bf \chi})$ be related by 
\be\label{vfunc}
{v}^\alpha(z;{\bf \chi})=\frac{ z^{2\alpha+1}-f(z;{\bf \chi})^{2\alpha+1}}{2\alpha+1}  \in z^{3+2\alpha}{\mathbb C}[\![z^2]\!].
\ee
Consider the coefficients
\be
\Delta T_k^{\alpha}= [z^{2k+1}](2k+1)!! {v}^\alpha(z;{\bf \chi}).
\ee

\begin{lemma}[\cite{H3}]\label{virtos}
We have
\be
\exp\left({\hbar^{-1} \sum_{k>\alpha} \Delta T_k^{\alpha} \frac{\p}{\p T_{k}}}\right) \cdot \tau_\alpha=\exp\left({\sum_{k>0} \chi_{k} \widehat{L}_{k} }\right) \cdot \tau_\alpha.
\ee
\end{lemma}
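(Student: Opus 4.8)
\emph{Proof sketch.} The plan is to upgrade the identity to a one–parameter family and reduce it to a linear ODE. Fix the coefficients ${\bf\chi}=(\chi_k)_{k>0}$, set $D_{\bf\chi}:=\sum_{k>0}\chi_k{\mathtt l}_k=-\sum_{k>0}\chi_k z^{2k+1}\p_z$, and for a formal parameter $t$ let $f_t(z):=e^{tD_{\bf\chi}}z\in z+z{\mathbb C}[\![z^2]\!]$, so that $f_1=f(z;{\bf\chi})$ by \eqref{fdef}. Define $v^\alpha_t$ and $\Delta T^\alpha_{k,t}$ from $f_t$ via \eqref{vfunc} (so $\Delta T^\alpha_{k,t}=0$ for $k\le\alpha$, since $v^\alpha_t\in z^{3+2\alpha}{\mathbb C}[\![z^2]\!]$), and put $\widehat X:=\sum_{k>0}\chi_k\widehat L_k$, so that $\widehat V_{\bf\chi}=e^{\widehat X}$ by \eqref{Virg}. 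By \eqref{deg} and \eqref{virfull}, each $\widehat L_k$ with $k>0$ is homogeneous of degree $-2k<0$, hence $e^{t\widehat X}$ is well defined on ${\mathbb C}[{\bf T}][\![\hbar]\!]$. At $t=0$ we have $f_0=z$, hence $v^\alpha_0=0$, $\Delta T^\alpha_{k,0}=0$, so both sides of the asserted identity specialize to $\tau_\alpha$ at $t=0$. It therefore suffices to prove
\be\label{sk-goal}
\exp\Bigl(\hbar^{-1}\!\!\sum_{k\ge\alpha+1}\!\Delta T^\alpha_{k,t}\,\tfrac{\p}{\p T_k}\Bigr)\cdot\tau_\alpha\;=\;e^{t\widehat X}\cdot\tau_\alpha
\ee
for all $t$, and then set $t=1$.

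For the left–hand side of \eqref{sk-goal}: since the $\tfrac{\p}{\p T_k}$ commute, it is simply $\tau_\alpha$ evaluated at the shifted arguments $T_k\mapsto T_k+\hbar^{-1}\Delta T^\alpha_{k,t}$, so by the chain rule its $t$–derivative is obtained by applying $\hbar^{-1}\sum_{k\ge\alpha+1}(\p_t\Delta T^\alpha_{k,t})\tfrac{\p}{\p T_k}$ to it. To evaluate $\p_t\Delta T^\alpha_{k,t}$ I would differentiate \eqref{vfunc} and use that $f_t$ is the flow of the vector field $D_{\bf\chi}$, which yields the flow equation in its two equivalent forms $\p_t f_t=D_{\bf\chi}f_t=-\sum_{m>0}\chi_m f_t^{2m+1}$; inserting the second form gives $\p_t v^\alpha_t=\sum_{m>0}\chi_m f_t^{2m+2\alpha+1}$, hence
\be\label{sk-coeff}
\p_t\Delta T^\alpha_{k,t}=(2k+1)!!\,[z^{2k+1}]\sum_{m>0}\chi_m f_t^{2m+2\alpha+1}\;=:\;c_{k,t}.
\ee

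For the right–hand side: $\p_t(e^{t\widehat X}\tau_\alpha)=e^{t\widehat X}(\widehat X\tau_\alpha)$, and the Virasoro constraints \eqref{VirC}–\eqref{Vir11} for $k\ge1$ say precisely $\widehat L_k\tau_\alpha=\hbar^{-1}(2k+2\alpha+1)!!\,\tfrac{\p}{\p T_{k+\alpha}}\tau_\alpha$, so $\widehat X\tau_\alpha=\hbar^{-1}\sum_{k>0}\chi_k(2k+2\alpha+1)!!\,\tfrac{\p}{\p T_{k+\alpha}}\tau_\alpha$. It remains to conjugate the operators $\tfrac{\p}{\p T_{k+\alpha}}$ through $e^{t\widehat X}$. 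Writing $\tfrac{\p}{\p T_j}=\widehat J_{j+1}/(2j+1)!!$ and using the middle commutator in \eqref{comrel}, the adjoint action $\mathrm{ad}_{\widehat X}$ on the positive modes $\{\widehat J_m\}_{m\ge1}$ is intertwined with $D_{\bf\chi}$ acting on $\{z^{2m-1}\}_{m\ge1}$; since $e^{tD_{\bf\chi}}$ is an algebra automorphism, $e^{t\widehat X}\widehat J_m e^{-t\widehat X}=\sum_{n\ge m}([z^{2n-1}]f_t^{2m-1})\,\widehat J_n$, a combination of positive modes only. Plugging this in with $j=k+\alpha$, the double factorial $(2k+2\alpha+1)!!$ cancels, and after reindexing one finds that $\p_t(e^{t\widehat X}\tau_\alpha)$ equals $\hbar^{-1}\sum_{j\ge\alpha+1}c_{j,t}\,\tfrac{\p}{\p T_j}$ applied to $e^{t\widehat X}\tau_\alpha$, with exactly the coefficients $c_{j,t}$ of \eqref{sk-coeff}.

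Both sides of \eqref{sk-goal} therefore solve the linear ODE $\p_t G=\hbar^{-1}\sum_{j\ge\alpha+1}c_{j,t}\,\tfrac{\p}{\p T_j}G$ with $G|_{t=0}=\tau_\alpha$. Because every $c_{j,t}$ is $O({\bf\chi})$, expanding $G$ by total weight in the variables $\chi_k$ makes the recursion triangular — the weight–$n$ part of $\p_t G$ involves only the weight–$<n$ parts of $G$ — so the solution with prescribed initial value is unique; hence the two sides of \eqref{sk-goal} coincide, and $t=1$ yields the lemma. The step I expect to be most delicate is the conjugation identity for $e^{t\widehat X}\tfrac{\p}{\p T_{k+\alpha}}e^{-t\widehat X}$: one must match the Heisenberg–Virasoro action on the current modes with the flow of $D_{\bf\chi}$ and check that the double factorials cancel, so that the ODE coefficients produced there agree with those coming from \eqref{vfunc} — this agreement is exactly the content of the two equivalent forms of the flow equation for $f_t$, which is where the geometry of the Virasoro coordinate change enters.
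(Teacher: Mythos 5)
The paper does not actually prove Lemma \ref{virtos}: it is imported verbatim from \cite{H3}, so there is no in-text argument to compare against. Judged on its own, your proof is correct and complete in all essentials. The one-parameter family $f_t=e^{tD_{\bf\chi}}z$ is the right interpolation, and every key identity checks out: the two forms of the flow equation give $\p_t v^\alpha_t=\sum_m\chi_m f_t^{2m+2\alpha+1}$, hence the coefficients $c_{k,t}$; the Virasoro constraints \eqref{VirC}--\eqref{Vir11} for $k\geq 1$ convert $\widehat{X}\tau_\alpha$ into $\hbar^{-1}\sum_k\chi_k(2k+2\alpha+1)!!\,\p_{T_{k+\alpha}}\tau_\alpha$; and the intertwining $\Phi(z^{2m-1})=\widehat{J}_m$, $\Phi\circ D_{\bf\chi}=\mathrm{ad}_{\widehat{X}}\circ\Phi$, which follows from the middle line of \eqref{comrel}, gives $e^{t\widehat{X}}\widehat{J}_m e^{-t\widehat{X}}=\sum_{n\geq m}\bigl([z^{2n-1}]f_t^{2m-1}\bigr)\widehat{J}_n$ with only positive (derivative) modes appearing, so the double factorials cancel exactly as you say and both sides solve the same first-order linear ODE in $t$. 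Your uniqueness argument via the grading by total $\chi$-weight is also sound, since each $c_{j,t}$ is $O({\bf\chi})$ and the initial condition pins down the weight-zero part. Two small points worth making explicit in a polished write-up: (i) $e^{t\widehat{X}}$ is well defined degree-by-degree because each $\widehat{L}_k$, $k>0$, is homogeneous of degree $-2k$ and each $\hbar$-coefficient of $\tau_\alpha$ is a polynomial (you note this); and (ii) the translated series $\tau_\alpha(T+\hbar^{-1}\Delta T_t)$ stays in ${\mathbb C}[{\bf T}][\![\hbar]\!]$ because $\Delta T^\alpha_{k,t}$ vanishes for $k\leq\alpha$ while $\tau_\alpha^{(n)}$ has weighted degree $(2\alpha+1)n$, so no negative powers of $\hbar$ survive.
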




\section{Cohomological field theories}\label{S3}

The main goal of this section is to recall the basic ingredients of the Givental--Teleman approach to cohomological field theories. For more details, see \cite{KM,Giv1,Giv0,Teleman,PPZ,PpandeCFT}.

\subsection{Cohomological field theory} Cohomological field theories  were introduced by Kontsevich and Manin \cite{KM}. A cohomological field theory (CohFT) is given by a family of cohomology classes on the moduli spaces of stables curves. CohFTs were introduced to axiomatize  universal properties of Gromov--Witten theory, they are closely related to the structure of the Frobenius manifolds \cite{Dub,Maninb}.

On the Deligne--Mumford moduli space  $\overline {\mathcal M}_{g,n}$ consider the gluing map, which glues the last two marked points 
\be\label{qmap}
q : \overline {\mathcal M}_{g-1,n+2} \longrightarrow \overline {\mathcal M}_{g,n}.
\ee
Similarly, identifying the last marked points on the two separate curves, we have a map
\be\label{pmap}
p :  \overline {\mathcal M}_{g_1,n_1+1} \times  \overline {\mathcal M}_{g_2,n_2+1}  \longrightarrow \overline {\mathcal M}_{g,n}
\ee
with $g=g_1+g_2$ and $n=n_1+n_2$. The images of both maps $q$ and $p$ are in the boundary $\p \overline {\mathcal M}_{g,n} =\overline {\mathcal M}_{g,n} \backslash {\mathcal M}_{g,n}$.

Let $\pi$ be the forgetting map for the marked point $n+1$,
\be\label{forg}
\pi: \overline{\mathcal{M}}_{g,n+1} \longrightarrow \overline{\mathcal{M}}_{g,n}. 
\ee

Consider a pair $(V,\eta)$, consisting of an $N$-dimensional complex vector space $V$ equipped with a non-degenerate symmetric 2-form $\eta$. We assume that $N$ is finite. A cohomological field theory consists of a system of tensors
\be
\Omega_{g,n} \in H^*(\overline {\mathcal M}_{g,n}, {\mathbb Q}) \otimes (V^*)^{\otimes n}
\ee
defined for all stable cases with $2g-2+n>0$. For $n=0$, $\Omega_{g,0}\in H^*(\overline {\mathcal M}_{g}, {\mathbb Q})$. The elements $\Omega_{g,n}$ must satisfy the following conditions

\begin{enumerate}[label=\roman*)]

\item\label{r1} Each $\Omega_{g,n}$ is $S_n$-invariant, where the symmetric group $S_n$ acts by permutation of both the $n$ marked points on $\overline {\mathcal M}_{g,n}$ and the $n$ copies of $V^*$.

\item\label{r2} The collection $\Omega$ is compatible with the gluing maps 
\begin{equation}
\begin{split}
q^* \Omega_{g,n}(v_1\otimes\dots\otimes v_n)&=\Omega_{g-1,n+2}(v_1\otimes\dots\otimes v_n\otimes \Delta),\\
p^* \Omega_{g,n}(v_1\otimes\dots\otimes v_n)&=\Omega_{g_1,n_1+1}\otimes \Omega_{g_2,n_2+1} \left(\bigotimes_{j=1}^{n_1} v_j \otimes \Delta\otimes \bigotimes_{j=n_1+1}^n v_j \right)
\end{split}
\end{equation}
for all $v_j \in V$. Here $\Delta \in V^{\otimes 2}$ is the bivector dual to $\eta$ and $q^*, p^*$ are the pull-backs of \eqref{qmap}, \eqref{pmap}.

\end{enumerate}
If the space $V$ also contains a distinguished {\em unit} element ${{\mathbbm 1}} \in V$, it must also satisfy
\begin{enumerate}[label=\roman*),resume]

\item\label{r3} 
\be\label{id}
 \Omega_{0,3}(v_1\otimes v_2\otimes {\mathbbm 1}) =\eta(v_1,v_2)
 \ee
 and
\begin{equation}\label{flat}
\begin{split}
 \Omega_{g,n+1}(v_1\otimes\dots\otimes v_n\otimes {\mathbbm 1} )&= \pi^*  \Omega_{g,n}(v_1\otimes\dots\otimes v_n )
\end{split}
\end{equation}
for all $v_j \in V$. Here $\pi^*$ is the pull-back of the forgetful map \eqref{forg}. Equation
\eqref{id} is essentially a non-degeneracy condition.
\end{enumerate}

\begin{definition}\label{CohFTdef}
A collection of tensors $\Omega_{g,n}$ satisfying properties \ref{r1} and \ref{r2} is a {\em cohomological field theory} (CohFT). If \ref{r3} is also satisfied, then it is a {\em CohFT with flat unit}. A CohFT is {\em degenerate} if \eqref{id} is not satisfied.
\end{definition}

A CohFT defines a commutative and associative {\em quantum product} $\bullet$ on $V$
\be
\eta(v_1\bullet v_2,v_3)=\Omega_{0,3}(v_1\otimes v_2\otimes v_3).
\ee
If a CohFT has a unit, then $ {\mathbbm 1} $ is the identity for the quantum product. 

With any CohFT we associate its {\em partition function}
\be\label{PF}
Z_\Omega=\exp\left(\sum_{g,n} \frac{\hbar^{2g-2+n}}{n!} \sum_{k_j} \int_{\overline{\mathcal{M}}_{g,n}} \Omega_{g,n}(e_{a_1}\otimes \dots \otimes e_{a_n} ) \prod_{j=1}^n \psi_j^{k_j} T_{k_j}^{a_j}\right)
\in {\mathbb C}[{\bf T}][\![\hbar]\!]
\ee
with respect to a basis $\{e_1,\dots,e_N\}$ of $V$. The partition function depends on the choice of the basis, however, the functions associated with different basis are related to each other by linear changes of variables ${\bf T}$.
Here and below the summation is over all stable cases, and contributions with $2g-2+n\leq 0$ vanish by definition. In the framework of Gromov--Witten theory these partition functions are the ancestor potentials. 

\begin{remark}\label{Rmk21}
A fundamental example of CohFT is given by the trivial CohFT
\be
V={\mathbb C},\quad \eta(e_1,e_1)=1,\quad {\mathbbm 1}=e_1, \quad\Omega_{g,n}(e_1\otimes\dots \otimes e_1)=1.
\ee
Its partition function is given by the Kontsevich--Witten tau-function, described in Section \ref{S2}.
\end{remark}

\subsection{Semi-simple CohFTs and normalized canonical basis}
A finite dimensional algebra given by the quantum product is {\em semi-simple} if there exists a basis $e_a^{\can} \in V$, such that
\be
e_a^{\can} \bullet e_b^{\can}=\delta_{ab} e_a^{\can}.
\ee
No summation over $a$ is assumed here.
A CohFT is semi-simple if the algebra $(V,\bullet)$ is semi-simple. Below in this section we consider only the semi-simple cases. 
In the {\em canonical} basis  $e_a^{\can}$ the form $\eta$ is also diagonal
\be
\eta(e_a^{\can},e_b^{\can})=\frac{\delta_{a,b}}{\Delta_a}.
\ee
\begin{remark}
The set $(V,\eta,\bullet)$ describes a semi-simple Frobenius algebra. By certain translation one can extend this algebra, at least locally, to a Frobenius manifold \cite{Maninb}.
The basis  $e_a^{\can} \in V$ is associated with the {\em canonical} coordinates on the Frobenius manifold. 
\end{remark}

The degree $0$ part 
\be
\Omega^0_{g,n}  \in H^0(\overline {\mathcal M}_{g,n}, {\mathbb Q}) \otimes (V^*)^{\otimes n}
\ee
of a CohFT $\Omega$ is a {\em topological quantum field theory} (TQFT). It is uniquely determined by $\Omega_{0,3}=\Omega_{0,3}^0$ and the bilinear form $\eta$. If $\Omega$ is 
a CohFT with a flat unit, then TQFT $\Omega^0_{g,n} $ is also a CohFT with a flat unit. 
The trivial CohFT, described in Remark \ref{Rmk21}, is a TQFT with $\Delta=1$.

We also introduce a {\em normalized canonical} basis $\tilde{e}_a:=\sqrt{\Delta_a}e_a^{\can}$ where we fix some choice of the sign of the square root. This basis is orthonormal,
\be\label{etanc}
\eta(\tilde{e}_a,\tilde{e}_b)=\delta_{a,b}.
\ee
TQFT corresponding to a semi-simple CohFT is completely described by
\be\label{diag}
\Omega^0_{g,n}(\tilde{e}_{a_1}\otimes\dots \otimes \tilde{e}_{a_n})=
\begin{cases}
\displaystyle{\Delta_a^{g-1+\frac{n}{2}}} \quad \quad &\mathrm{if} \,\,\, \tilde{e}_{a_1}=\dots =\tilde{e}_{a_n}= \tilde{e}_{a},\\[2pt]
\displaystyle{0}&\mathrm{otherwise}. 
\end{cases}
\ee
\begin{proposition}
A semi-simple CohFT always contains a unit element ${\mathbbm 1}$ satisfying \eqref{id}. Moreover, this unit element is always flat for the corresponding TQFT.
\end{proposition}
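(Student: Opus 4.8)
The plan is to produce $\mathbbm 1$ explicitly in the normalized canonical basis and then verify the two claims by direct computation, so that the whole argument reduces to linear algebra together with the explicit TQFT formula \eqref{diag}. \textbf{Construction of the unit.} Since the algebra is semi-simple, set $\mathbbm 1:=\sum_{a=1}^N e_a^{\can}=\sum_{a=1}^N \Delta_a^{-1/2}\tilde e_a$. The idempotent relations $e_a^{\can}\bullet e_b^{\can}=\delta_{ab}e_a^{\can}$ give $\mathbbm 1\bullet e_b^{\can}=e_b^{\can}$ for every $b$, so by linearity $\mathbbm 1$ is the (necessarily unique) identity for the quantum product $\bullet$.

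\textbf{Property \eqref{id}.} Because $\overline{\mathcal M}_{0,3}$ is a point we have $\Omega_{0,3}=\Omega^0_{0,3}$, so \eqref{id} depends only on the Frobenius algebra $(V,\eta,\bullet)$ and reads $\eta(v_1\bullet v_2,\mathbbm 1)=\eta(v_1,v_2)$. Writing $v_i=\sum_a v_i^a e_a^{\can}$ and using $e_a^{\can}\bullet e_b^{\can}=\delta_{ab}e_a^{\can}$ together with $\eta(e_a^{\can},e_b^{\can})=\delta_{ab}/\Delta_a$, both sides collapse to $\sum_a v_1^a v_2^a/\Delta_a$; hence \eqref{id} holds, and in particular the CohFT is non-degenerate.

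\textbf{Flatness for the TQFT.} Here I would first note that $\Omega^0$ takes values in $H^0$ and that $\pi^*\colon H^0(\overline{\mathcal M}_{g,n})\to H^0(\overline{\mathcal M}_{g,n+1})$ is the canonical identification of $\mathbb Q$ with $\mathbb Q$; therefore the flat-unit axiom \eqref{flat} for $\Omega^0$ is simply the numerical identity $\Omega^0_{g,n+1}(v_1\otimes\cdots\otimes v_n\otimes\mathbbm 1)=\Omega^0_{g,n}(v_1\otimes\cdots\otimes v_n)$, which by multilinearity it suffices to test on the orthonormal vectors $\tilde e_{a_i}$. Substituting $\mathbbm 1=\sum_b\Delta_b^{-1/2}\tilde e_b$ and invoking \eqref{diag}, the only nonvanishing summand occurs when $\tilde e_b=\tilde e_{a_1}=\cdots=\tilde e_{a_n}=:\tilde e_a$, and it contributes $\Delta_a^{-1/2}\cdot\Delta_a^{\,g-1+(n+1)/2}=\Delta_a^{\,g-1+n/2}$, which by \eqref{diag} is exactly $\Omega^0_{g,n}(\tilde e_{a_1}\otimes\cdots\otimes\tilde e_{a_n})$; when the $\tilde e_{a_i}$ are not all equal both sides vanish. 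This establishes flatness.

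The computation is essentially bookkeeping, and I expect the only step requiring care to be the reduction of the flatness axiom for the degree-zero theory to a statement about numbers: once one observes that $\pi^*$ is an isomorphism on $H^0$, formula \eqref{diag} makes everything explicit. One should also keep the stability ranges $2g-2+n>0$ in view so that each moduli space appearing is nonempty.
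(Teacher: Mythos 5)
Your proposal is correct and follows essentially the same route as the paper's (much terser) proof: the paper also takes $\mathbbm 1=\sum_a\Delta_a^{-1/2}\tilde e_a$ and asserts that \eqref{id} and the flatness of the unit for $\Omega^0$ follow from \eqref{diag} and \eqref{flat}. You have simply written out the verifications that the paper leaves implicit, and they check out.
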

\begin{proof}
By \eqref{diag}
\be
{\mathbbm 1}=\sum_{a=1}^N \frac{\tilde{e}_{a}}{\sqrt{\Delta_a}}
\ee
satisfies \eqref{id}. For the corresponding TQFT $\Omega^0_{g,n}$ this unit is flat, as follows from \eqref{diag} and \eqref{flat}.
\end{proof}
If not specified otherwise, below we will work in the normalized canonical basis and $V$ can be identified with ${\mathbb C}^N$.

\subsection{Givental--Teleman group}

Givental \cite{Giv1} introduced a group action on the CohFT partition functions. Givental's group action can be lifted to the action on the space of CohFTs themselves \cite{Shadrin,Teleman}. This approach uses the stable graphs; we do not consider it here and focus on the action on the partition functions, for more details see \cite{PpandeCFT,PPZ, Norb}. 

The loop group $LGL(V)$ consists of $GL(V)$ valued formal series in $z$. The elements  $U(z)\in LGL(V)$ satisfying the symplectic condition
\be
U(z) U^\star (-z)=\Id
\ee
constitute the {\em twisted loop group} $L^{(2)}GL(V) \subset LGL(V)$. Here $\star$ denotes the adjoint with respect to the metric $\eta$.

 We consider the so-called upper-triangular and lower-triangular subgroups, respectively $L^{(2)}_+GL(V)$ and $L^{(2)}_-GL(V)$. Below we denote the elements of  $L^{(2)}_+GL(V)$ (respectively $L^{(2)}_-GL(V)$) by R (respectively S).

\begin{definition}\label{Rmat}
An $R$-matrix is an element
\be
R(z)=\Id+\sum_{j=1}^\infty R_j z^j \in \Id + z \End (V)[\![z]\!]
\ee
satisfying the symplectic condition
\be\label{symp}
R(z) R^\star (-z)=\Id.
\ee
An $S$-matrix is an element
\be
S(z)=\Id+\sum_{j=1}^\infty S_j z^{-j} \in \Id + z^{-1} \End (V)[\![z^{-1}]\!]
\ee
satisfying the symplectic condition
\be
S(z) S^\star (-z)=\Id.
\ee
\end{definition}
We have $R(z) \in L^{(2)}_+GL(V)$ and $S(z) \in L^{(2)}_-GL(V)$. The group $L^{(2)}_+GL(V)$ acts on the space of semi-simple CohFTs \cite{Teleman}, $\Omega\rightarrow R\Omega$.

Let $T \in z^2 V[\![z]\!]$. For a CohFT $\Omega$ we define a collection of cohomology classes 
\be\label{TOm}
T\Omega_{g,n}(v_1\otimes\dots\otimes v_n)=\sum_{m\geq 0} \frac{1}{m!} \pi_{m,*} \Omega_{g,n+m}(v_1\otimes\dots \otimes v_n \otimes T(\psi_{n+1})\otimes\dots \otimes T(\psi_{n+m})),
\ee
where $\pi_m: \overline{\mathcal{M}}_{g,n+m} \mapsto \overline{\mathcal{M}}_{g,n}$ is the map forgetting the last $m$ marked points. By the dimensional reason this action is well defined, $T\Omega_{g,n} \in H^*(\overline {\mathcal M}_{g,n}, {\mathbb Q}) \otimes (V^*)^{\otimes n}$. For the partition function of the CohFT $\Omega$ the action \eqref{TOm} is given by the translations of the variables ${\bf T}$, see the next section. 

\begin{proposition}[\cite{Teleman}]
The tensors $R\Omega$ and $T\Omega$ form CohFTs on $(V,\eta)$.
\end{proposition}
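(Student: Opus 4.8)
The plan is to verify the two defining axioms of a CohFT — the $S_n$-symmetry \ref{r1} and the compatibility with the gluing maps \ref{r2} — separately for $T\Omega$ and for $R\Omega$. Since no claim about a flat unit is made, these are the only properties that need checking; well-definedness of $T\Omega$ has already been recorded after \eqref{TOm}, and that of $R\Omega$ will fall out of the polynomiality of the edge contributions discussed below.

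For $T\Omega$ I would argue as follows. Axiom \ref{r1} is immediate from \eqref{TOm}: the forgetful map $\pi_m$ drops only the last $m$ points, $\Omega_{g,n+m}$ is $S_{n+m}$-invariant, and the factor $\frac1{m!}$ symmetrises over the inserted classes $T(\psi_{n+1}),\dots,T(\psi_{n+m})$, so permuting $v_1,\dots,v_n$ changes nothing. For \ref{r2} the key geometric input is that the forgetful and gluing maps fit into a Cartesian square: with $q:\overline{\mathcal M}_{g-1,n+2}\to\overline{\mathcal M}_{g,n}$ as in \eqref{qmap}, the square whose two vertical arrows forget the same $m$ extra points and whose top arrow $q'$ glues the two distinguished points on $\overline{\mathcal M}_{g-1,n+2+m}$ is a fibre product, giving $q^*\pi_{m,*}=\pi'_{m,*}\,q'^*$. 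Since a gluing map leaves the cotangent lines at the non-glued marked points untouched, $q'^*\psi_i=\psi_i$ for the extra points, so $q'^*$ carries each $T(\psi)$-insertion to the same class; applying the gluing axiom \ref{r2} for $\Omega$ then inserts a factor $\Delta$ at the two distinguished points, and summing over $m$ yields $q^*(T\Omega)_{g,n}=(T\Omega)_{g-1,n+2}(\,\cdots\otimes\Delta)$. The separating map $p$ of \eqref{pmap} is handled identically, with the extra feature that the $m$ forgotten points distribute between the two components; the binomial count $\binom{m}{m_1}$ turns $\frac1{m!}$ into $\frac1{m_1!\,m_2!}$, and the two symmetrised sums factorise into $(T\Omega)_{g_1,n_1+1}\otimes(T\Omega)_{g_2,n_2+1}$.

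For $R\Omega$ I would start from the standard stable-graph formula: $(R\Omega)_{g,n}$ is a sum over stable graphs $\Gamma$ of genus $g$ with $n$ legs of $\frac1{|\mathrm{Aut}\,\Gamma|}$ times the pushforward along the boundary map $\xi_\Gamma:\overline{\mathcal M}_\Gamma\to\overline{\mathcal M}_{g,n}$ of a product over vertices of $\Omega_{g_v,n_v}$, a factor $R^{-1}(\psi)$ on each leg, and on each edge the bivector-valued class $\bigl(\Delta-(R^{-1}(\psi')\otimes R^{-1}(\psi''))\Delta\bigr)/(\psi'+\psi'')$. The symplectic condition \eqref{symp} forces the numerator of this edge class to vanish on $\psi'+\psi''=0$, so the edge classes are polynomials in $\psi',\psi''$; this both makes the pushforwards well-defined and is the source of all cancellations below. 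Axiom \ref{r1} holds because relabelling the $n$ legs is a bijection of the graph sum carrying each summand to the corresponding one, the leg factors depending only on their leg. For \ref{r2} I would pull the graph sum back along $q$ (resp. $p$): a stable graph for $(g,n)$ lying on the relevant boundary stratum is built from a stable graph $\Gamma'$ for $(g-1,n+2)$ (resp. a pair of graphs) either by adding a new edge joining the two distinguished legs, or by absorbing the node into a single vertex. In the first case one picks up one more edge class; in the second the vertex theory $\Omega_{g_v,n_v}$ is itself pulled back by a gluing map, and the gluing axiom for $\Omega$ converts it into an insertion of $\Delta$. A direct computation, using \eqref{symp} to recombine these two kinds of contributions with the leg factors $R^{-1}(\psi)$ at the distinguished legs, should show that the total equals $(R\Omega)_{g-1,n+2}(\,\cdots\otimes\Delta)$, and likewise the product $(R\Omega)\otimes(R\Omega)$ for $p$.

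The hard part will be exactly this last step for $R\Omega$: organising the pullback of the stable-graph sum and checking that the ``new edge'' and ``absorbed node'' contributions reassemble, via \eqref{symp}, into precisely the gluing pattern of $R\Omega$ — this is the genuine content of the statement. For $T\Omega$ the only real subtlety is the base-change identity $q^*\pi_{m,*}=\pi'_{m,*}q'^*$, which needs some care because the forgotten points may collide with the node; this is standard. Both arguments are carried out in full by Teleman \cite{Teleman}, and in the stable-graph language in \cite{PPZ}.
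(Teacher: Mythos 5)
The paper itself offers no proof of this proposition --- it is recalled verbatim from Teleman \cite{Teleman} (see also \cite{PPZ}), so there is no internal argument to compare yours against. Your outline follows the standard route: direct verification of \ref{r1} and \ref{r2} for $T\Omega$ via push--pull compatibility of forgetful and gluing maps, and the stable-graph expansion together with the symplectic condition \eqref{symp} for $R\Omega$. The $T\Omega$ half is essentially complete: the $S_n$-invariance is indeed immediate from \eqref{TOm}, and the base-change identity $q^*\pi_{m,*}=\pi'_{m,*}\,q'^*$, together with the distribution of the $m$ forgotten points over the two components in the case of \eqref{pmap}, is the right mechanism.

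As a self-contained proof, however, the $R\Omega$ half has a genuine gap, which you yourself flag: the claim that the ``new edge'' and ``absorbed node'' contributions to the pullback of the stable-graph sum reassemble into the gluing pattern of $R\Omega$ \emph{is} the content of the proposition for the $R$-action, and you assert it (``a direct computation \dots should show'') rather than perform it. The mechanism you identify is correct --- the vanishing of the edge-class numerator on $\psi'+\psi''=0$, forced by \eqref{symp}, is both what makes the pushforwards well defined and what drives the cancellation --- but without actually writing out how the boundary strata of $\overline{\mathcal M}_{g,n}$ sort the stable graphs into these two types and how the edge factor at the new node combines with the $\Delta$-insertion coming from axiom \ref{r2} for $\Omega$ at the absorbed vertex, the verification is not done. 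Since the paper only cites \cite{Teleman} here, deferring to the reference is consistent with the paper's own treatment, but it makes your text an accurate roadmap rather than a proof.
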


In general, the action of $R$-matrices and translations $T$ do not respect the property of a CohFT to have a  flat unit. However, if they act consistently, they do not spoil this property.
\begin{proposition}[\cite{Teleman}]
Let  $\Omega$ be a CohFT with a flat unit on $(V,\eta,{\mathbbm 1})$. Then for an R-matrix $R(z)$ the translation
\be\label{Tflat}
T(z)=z({\mathbbm 1}-R^{-1}(z){\mathbbm 1})
\ee
combined with $R$ define a collection of tensors $R T \Omega$, which form a CohFT with a flat unit.
\end{proposition}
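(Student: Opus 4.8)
The plan is to verify the two unit axioms \eqref{id} and \eqref{flat} for $RT\Omega$, retaining the same unit $\mathbbm 1$; that $RT\Omega$ is a CohFT on $(V,\eta)$ is already supplied by the preceding proposition. Axiom \eqref{id} I would dispatch by a dimension count: on $\overline{\mathcal M}_{0,3}$ all $\psi$-classes vanish, so the decorations $R^{-1}(\psi_i)$ on the external legs are trivial, and every marked point produced by the translation sits on some $\overline{\mathcal M}_{0,3+m}$ with $m\ge 1$ and---since $T(z)\in z^2V[\![z]\!]$---carries a class of cohomological degree at least $2m>m=\dim\overline{\mathcal M}_{0,3+m}$, while any stable graph with an internal edge is destabilized in this range. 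Hence $(RT\Omega)_{0,3}=\Omega_{0,3}$, so \eqref{id} for $RT\Omega$ reduces to \eqref{id} for $\Omega$ and $\mathbbm 1$ remains a unit.

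The substance is \eqref{flat}. I would work with the stable-graph presentation of the Givental--Teleman action of the pair $(R,T)$: a sum over stable graphs in which each external leg $i$ is decorated by $R^{-1}(\psi_i)$, each internal edge by the usual $R$-bivector $\tfrac{\Delta-(R^{-1}(\psi')\otimes R^{-1}(\psi''))\Delta}{\psi'+\psi''}$ (with $\Delta$ the bivector dual to $\eta$), and each vertex carrying, in addition, some number of extra marked points decorated by $T(\psi)$ and then pushed forward along the relevant forgetful map. Setting $v_{n+1}=\mathbbm 1$, the distinguished leg is decorated by
\be
R^{-1}(\psi_{n+1})\,\mathbbm 1 \;=\; \mathbbm 1-\psi_{n+1}^{-1}\,T(\psi_{n+1}),
\ee
which is precisely the algebraic identity underlying the choice \eqref{Tflat}.

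Splitting the graph sum along this decomposition, in the first piece the distinguished leg carries the $\psi$-independent vector $\mathbbm 1$; since $\Omega$ has a flat unit, the marked point at its vertex can be absorbed by the forgetful map, and after the standard comparison $\psi_i=\pi^*\psi_i+(\text{boundary})$ on $\overline{\mathcal M}_{g,n+1}$ together with the removal of the genus-$0$ three-valent vertices that become destabilized, this reproduces $\pi^*(RT\Omega)_{g,n}$ up to explicit correction terms. In the second piece the distinguished leg carries $-\psi_{n+1}^{-1}T(\psi_{n+1})$, i.e.\ a \emph{would-be} $T$-leg bearing one extra inverse power of $\psi$; pushing it forward and using string/dilaton-type relations on $\overline{\mathcal M}_{g,n}$ against the $T$-legs already present, one checks that it cancels exactly those correction terms. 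Demanding this cancellation is what forces $T$ to take the form \eqref{Tflat}.

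The main obstacle I anticipate is precisely this graph-by-graph bookkeeping---matching the $\psi_i$-versus-$\pi^*\psi_i$ corrections, the contributions of the destabilized genus-$0$ three-valent vertices, and the push-forward of the $-\psi_{n+1}^{-1}T(\psi_{n+1})$-leg. A more economical route, which I would prefer, is to reduce to the infinitesimal statement: writing $R_t(z)=\exp(t\,a(z))$, it suffices to show that the combined infinitesimal action $\delta_a+\delta_{\dot T}$, with $\dot T(z)=z\,a(z)\mathbbm 1$, preserves the flat-unit locus, and then to integrate along the flow, the composition law for $(R,T)$-pairs being compatible with \eqref{Tflat}. Infinitesimally the leg terms $-\Omega_{g,n+1}(\cdots\otimes a_l\psi_i^l v_i\otimes\cdots\otimes\mathbbm 1)$ and the node terms carrying $a_l\mathbbm 1$ are handled directly by the flat unit of $\Omega$, while the translation term $(\delta_{\dot T}\Omega)_{g,n+1}(\cdots\otimes\mathbbm 1)=\pi_*\Omega_{g,n+2}(\cdots\otimes\mathbbm 1\otimes\psi^{l+1}a_l\mathbbm 1)$ supplies the remaining pieces. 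The structural reason---and the one that should organize the whole computation---is that in Givental's symplectic-loop-space picture the flat unit is equivalent to the dilaton shift being $-z\mathbbm 1$; the $R$-action moves it to $-zR^{-1}(z)\mathbbm 1$, and \eqref{Tflat} is the unique translation restoring it to $-z\mathbbm 1$.
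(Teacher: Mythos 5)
The paper itself does not prove this proposition: it is imported verbatim from \cite{Teleman} (the construction going back to Givental), and the text only remarks afterwards that this consistent action was originally introduced by Givental. So there is no in-paper argument to compare yours against; the reference proofs live in \cite{Teleman} and in the exposition of \cite{PPZ}. Measured against those, your strategy is the standard one, and you have correctly isolated the pivot of the whole statement: with $T(z)=z(\mathbbm 1-R^{-1}(z)\mathbbm 1)$ one indeed has $R^{-1}(\psi)\mathbbm 1=\mathbbm 1-\psi^{-1}T(\psi)$. Your dimension count showing $(RT\Omega)_{0,3}=\Omega_{0,3}$, hence axiom \eqref{id}, is fine, and your closing observation (flat unit $\Leftrightarrow$ dilaton shift $-z\mathbbm 1$; the $R$-action moves the shift to $-zR^{-1}(z)\mathbbm 1$; \eqref{Tflat} is the unique translation restoring it) is the correct conceptual explanation of where the formula comes from.

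The gap is that the decisive step --- the verification of the pull-back axiom \eqref{flat} --- is announced but not carried out in either of your two routes. In the graph-sum route, the entire mathematical content of the proposition is compressed into the sentence \enquote{one checks that it cancels exactly those correction terms}: the comparison $\psi_i=\pi^*\psi_i+(\text{boundary divisor})$, the contraction of the destabilized trivalent genus-zero vertices (which is precisely where the edge kernel interacts with the unit vector), and the pushforward of the leg decorated by $-\psi_{n+1}^{-1}T(\psi_{n+1})$ must all be matched term by term, and none of this bookkeeping is displayed --- you yourself flag it as the main obstacle. In the infinitesimal route you would additionally have to justify that integrating the flow of pairs $(R_t, z(\mathbbm 1-R_t^{-1}\mathbbm 1))$ recovers the finite group action, i.e.\ that \eqref{Tflat} is compatible with the composition law of the Givental--Teleman group; this too is asserted rather than checked. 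As written, then, this is a correct and well-organized plan of the standard proof rather than a proof; to complete it, carry out the cancellation explicitly along the lines of Section 2 of \cite{PPZ}.
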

This consistent action on the partition functions was originally introduced by Givental \cite{Giv1, Giv0}.

A semi-simple CohFT with a flat unit is uniquely determined \cite{Teleman} by the following two structures:
\begin{itemize}
\item The TQFT $\Omega^0$ of $\Omega$,

\item The $R$-matrix  $R(z)$.
\end{itemize}

In this paper the flat unit does not play any essential role, hence we consider the space of all (partition functions of all) semi-simple CohFTs. This space is described by a more general group.
\begin{definition}
The {\em Givental--Teleman group} is given by a semidirect product $G=z^2V[z] \rtimes L^{(2)}_+GL(V)$.
\end{definition}
The Givental--Teleman group describes an independent action of the Givental operators $R$ and translations $T$.

\begin{theorem}[\cite{Teleman}]\label{TT}
The orbit of the Givental--Teleman group $G$ containing TQFT \eqref{etanc}--\eqref{diag} consists of all CohFTs with the Frobenius algebra $(V,\eta,\bullet)$.
\end{theorem}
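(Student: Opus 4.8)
The plan is to prove the two inclusions separately: first, that every CohFT in the $G$-orbit of the TQFT \eqref{etanc}--\eqref{diag} has Frobenius algebra $(V,\eta,\bullet)$; and second, that $G$ acts transitively on the set of all semi-simple CohFTs with this Frobenius algebra, with the given TQFT as a base point. (That $R\Omega$ and $T\Omega$ are again CohFTs on $(V,\eta)$ has already been recorded, so the $G$-action is well defined.)

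For the first inclusion it suffices to show that both factors of $G$ fix the degree-zero part $\Omega^0$: indeed $\Omega^0$ is determined by $\eta$ together with $\Omega_{0,3}=\Omega_{0,3}^0$, and $\Omega_{0,3}(v_1\otimes v_2\otimes v_3)=\eta(v_1\bullet v_2,v_3)$ is exactly the structure tensor of the Frobenius algebra, while $\eta$ itself is preserved because the twisted loop group consists of $\eta$-symplectic maps. For an $R$-matrix, the stable-graph formula for the $R$-action attaches to each edge and each leaf a factor built from $R(\psi)-\Id=O(\psi)$, so on $H^0$ only the empty graph survives and $(R\Omega)^0=\Omega^0$. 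For a translation $T\in z^2V[\![z]\!]$, in \eqref{TOm} each inserted $T(\psi_{n+j})$ lies in $H^{\geq 4}$ while $\pi_{m,*}$ lowers cohomological degree by $2m$, so for $m\geq 1$ the term lands in degree $\geq 2m>0$ and again $(T\Omega)^0=\Omega^0$. Hence the entire orbit shares the TQFT \eqref{etanc}--\eqref{diag}, so the Frobenius algebra $(V,\eta,\bullet)$.

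For the second inclusion, let $\Omega$ be a semi-simple CohFT with $\Omega^0$ as in \eqref{diag}. The first step is to produce a Givental element from $\Omega$: its genus-zero part defines, near the chosen semi-simple point, a Frobenius structure, and Givental's construction \cite{Giv1,Giv0} associates to it a symplectic loop-group element $R(z)\in L^{(2)}_+GL(V)$ together with a translation $T(z)\in z^2V[z]$. Let $\widetilde\Omega$ be the CohFT obtained from $\Omega$ by applying $R^{-1}$ and the corresponding inverse translation; by the first part $\widetilde\Omega$ is again semi-simple with TQFT \eqref{diag}. It then remains to prove that $\widetilde\Omega=\Omega^0$, i.e. that $\widetilde\Omega$ carries no part of positive cohomological degree.

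This last vanishing is the crux, and I would invoke it rather than reprove it: it is the content of Teleman's reconstruction theorem \cite{Teleman}. Its proof shows that the obstruction to removing the positive-degree part of a semi-simple CohFT by a Givental element is governed by the cohomology of an explicit deformation complex, which vanishes in the relevant range; the essential geometric inputs are the tautological (Getzler--Ionel type) relations on $\overline{\mathcal M}_{g,n}$ rewriting high-$\psi$-degree classes through boundary contributions, together with semisimplicity, which propagate the genus-zero identity to all genera. I expect this step to be the main obstacle; the remaining parts are elementary degree bookkeeping and the formal properties of the group action. Finally, the group in the statement is the full semidirect product $G=z^2V[z]\rtimes L^{(2)}_+GL(V)$, not the subgroup cut out by the flat-unit constraint \eqref{Tflat}: since we classify all semi-simple CohFTs with a prescribed Frobenius algebra and impose no unit, every translation in $z^2V[z]$ is admissible.
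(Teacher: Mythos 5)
The paper does not actually prove this statement: it is quoted directly from Teleman's classification of semi-simple CohFTs, and the citation \cite{Teleman} is the entirety of the paper's justification. Measured against that, the first half of your plan is a correct and genuinely self-contained piece of degree bookkeeping: the $R$-action fixes $\Omega^0$ because $R(z)-\Id=O(z)$ leaves only the trivial stable graph in $H^0$, the translation action \eqref{TOm} fixes $\Omega^0$ because each inserted $T(\psi_{n+j})$ has degree at least $4$ while $\pi_{m,*}$ lowers degree by $2m$, and since $\overline{\mathcal M}_{0,3}$ is a point, $\Omega_{0,3}=\Omega^0_{0,3}$ determines $\bullet$ through $\eta(v_1\bullet v_2,v_3)=\Omega_{0,3}(v_1\otimes v_2\otimes v_3)$ while $\eta$ is untouched. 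That is the easy inclusion and you have it right.

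The second inclusion is where all the content of the theorem lives, and there your plan is circular: the assertion that $\widetilde\Omega$ has no positive-degree part \emph{is} Teleman's reconstruction theorem, i.e.\ the statement to be proved, so invoking it yields a citation rather than a proof --- which is, to be fair, exactly what the paper does. Two further inaccuracies in how you set up that step. First, for a general semi-simple CohFT without an Euler field, Givental's genus-zero construction does not produce the $R$-matrix for you; the paper's own remark following Theorem \ref{TT} emphasizes that $R$ is known explicitly only in the presence of an Euler field, and the existence of the pair $(R,T)$ with $\Omega=RT\Omega^0$ is part of Teleman's \emph{conclusion}, not an input you can manufacture from genus-zero data beforehand. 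Second, your gloss on the interior of the black box is off: Teleman's argument does not run through Getzler--Ionel type tautological relations but rests on the Madsen--Weiss theorem (Mumford's conjecture) computing the stable cohomology of mapping class groups, combined with an analysis of the restriction to the Deligne--Mumford boundary. So your proposal agrees with the paper in outsourcing the hard part, but it should be labelled as a citation of \cite{Teleman} rather than a proof outline, and the sketched mechanism of that cited proof is not the right one.
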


\begin{remark}
In practice, it is highly challengeable to find the $R$-matrix explicitly for a given semi-simple CohFT.
The $R$-matrix, associated with a given CohFT by Teleman's Theorem \ref{TT}, is known explicitly only for a limited number of cases,
often with additional structure -- the Euler field \cite{Giv0,Giv1, Dub}. 
\end{remark}

\subsection{Givental--Teleman group action on the partition functions}\label{S34}
Following Givental \cite{Giv1}, on the space $H=V(\!(z^{-1})\!)$ of formal Laurent series in $z$ we consider the symplectic form
\be
\Omega(f,g)=\Res_z \eta(f(-z),g(z)).
\ee

For the natural decomposition 
\be
{ H}={H}_+\oplus{H}_-,
\ee
where the subspaces ${ H}_+=\sspan_{V}\{1,z,z^2,\dots\}$ and ${H}_-=\sspan_{V}\{z^{-1},z^{-2},z^{-3},\dots\}$ are Lagrangian, we introduce the Darboux coordinate system $(p_{k}^a,q_{k,a})$ with
\be
{\mathcal J}(z):=\sum_{k=0}^\infty \left(q_k^a z^{k}\tilde{e}_a  +(-z)^{-k-1}p_{k,a} \tilde{e}^a\right) \in H.
\ee
Here $\tilde{e}^a$ is a basis in $V$ dual to $\tilde{e}_a$ with respect to $\eta$. Since  $\{\tilde{e}_a\}$ is the normalized canonical basis,  we have $\tilde{e}^a = \tilde{e}_a$.
\begin{definition}

An operator $A: H \rightarrow H$ is an {\em infinitesimal symplectic transformation} if 
\be
\Omega(Af,g)+\Omega(f,Ag)=0.
\ee
Operator $B: H \rightarrow H$ is a {\em symplectic transformation} if
\be
\Omega(Bf,Bg)-\Omega(f,g)=0.
\ee
\end{definition} 
After Givental, for an infinitesimal symplectic transformation $A$ on ${H}$ we consider the quadratic Hamiltonian
\be
H_A:=\frac{1}{2}\Omega(A{\mathcal J}, {\mathcal J}).
\ee
This defines a Lie algebra isomorphism:
\be
H_{\left[A,B\right]}=\left\{H_A,H_B\right\},
\ee
where the Poisson bracket is given by
\be
\left\{H_A,H_B\right\}:=\sum_{i=0}^\infty\left(\frac{\p H_A}{\p p_{i,a}}\frac{\p H_B}{\p q_i^a}-\frac{\p H_B}{\p p_{i,a}}\frac{\p H_A}{\p q_i^a}\right).
\ee

Using the standard Weyl quantization we quantize these operators to order $\leq 2$ linear differential operators 
\be\label{Gquant}
\widehat{A}=\widehat{H}_A:=\frac{1}{2}\normordboson\Omega(A \widehat{\mathcal J}, \widehat{\mathcal J})\normordboson,
\ee
where
\be\label{Givcur}
\widehat{\mathcal J}(z)=\sum_{k=0}^{\infty}\left( z^k T_k^a \tilde{e}_a+ (-z)^{-k-1}\frac{\p}{\p T_k^a} \tilde{e}^a\right)
\ee
and $\normordboson\dots \normordboson$ denotes the standard bosonic normal ordering, which puts all $\frac{\p}{\p T_k^a}$ to the right of all $T_k^a$. 
\begin{remark}
Note that we do not introduce the so-called {\em dilaton shift} here. This provides a modification of the original Givental formalism \cite{Giv1, Giv0}, where the dilaton shift plays an essential role.
\end{remark}
This leads a central extension of the original algebra, with the commutator
\be\label{Givcom}
\left[\widehat{A},\widehat{B}\right]=\widehat{\left[A,B\right]}+\mathcal{C}(H_A,H_B),
\ee
where the so-called 2-cocycle term satisfies
\be
\mathcal{C}\left(p_{i,a}p_{j,b},q_{k}^c q_{m}^d\right)=-\mathcal{C}\left(q_k^c q_m^d,p_{i,a}p_{j,b}\right)=\delta_{i,k}\delta_{j,m}\delta_{a,c}\delta_{b,d}+\delta_{i,m}\delta_{j,k} \delta_{a,d}\delta_{b,c}
\ee
and vanishes for all other elements $H_A$. For the symplectic transformations we define
\be
\widehat{e^{A}}:=e^{\widehat{A}}.
\ee

It is easy to see that elements of the twisted loop group $L^{(2)}GL(V)$ describe symplectic transformations. 
For the R-matrices we have
\be
R(z)=\exp(r(z)),
\ee
where $r(z)=\sum_{k=1}^\infty r_k z^k$, $r_k$ are endomorphisms of $V$. The endomorphisms $r_k$ are self-adjoint for $k$ odd and skew-self-adjoint for $k$ even with respect to $\eta$. 
Note, that we are working in the normalized canonical basis, for which $\eta$ is elementary \eqref{etanc}. 
After quantization, using (\ref{Gquant}) we obtain 
\be\label{UTS}
\widehat{R}:=\exp\left(\widehat{r}\right),
\ee
where
\be
\widehat{r}=\sum_{k=1}^\infty  \left( \sum_{j=0}^\infty (r_k)_a^b T_j^a \frac{\p}{\p T_{k+j}^b}+\frac{1}{2}\sum_{j=0}^{k-1} (-1)^{j+1} (r_k)^{ab}\frac{\p^2}{\p T_j^a \p T_{k-j-1}^b}\right).
\ee
Here $(r_k)^{ab} = \eta^{ac}  (r_k)^b_c$. Let us recall that  the Einstein summation convention is assumed  for the indices $a,b,c,d$ associated with the space $V$.

For a translation $T(z)= \sum_{j=2}^\infty \Delta T_j^a z^j \tilde{e}_a  \in z^2 V[\![z]\!]$ we put
\be
\widehat{T}:= \exp\left(\frac{1}{\hbar}  \sum_{j=2}^\infty \Delta T_j^a \frac{\p}{\p T_{j}^a} \right). 
\ee

Let us also introduce the operator $\widehat{\Delta}$ acting on the product of the KW tau-functions as
\be\label{Delt}
\widehat{\Delta} \cdot \prod_{a=1}^N \tau_{1}(\hbar, {\bf{T}}^a) := \prod_{a=1}^N \tau_{1}(\hbar \sqrt{\Delta_a}, {\bf{T}}^a). 
\ee
The right-hand side is a partition function of TQFT described by \eqref{etanc}--\eqref{diag}. 

By  Theorem \ref{TT},  the partition function of a semi-simple CohFT satisfies {\em Givental's decomposition formula}:
\begin{corollary}\label{Cor3.4}
For the CohFT associated with a Frobenius algebra $(V,\eta,\bullet)$ and an element $(R,T)$ of the Givental--Teleman group the partition function is given by
\be\label{PartG}
Z_\Omega=\widehat{R} \widehat{T} \widehat{\Delta} \cdot \prod_{a=1}^N \tau_1(\hbar, {\bf{T}}^a). 
\ee
\end{corollary}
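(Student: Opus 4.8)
The plan is to obtain \eqref{PartG} by assembling three ingredients already set up in the paper: Teleman's transitivity statement (Theorem \ref{TT}), the correspondence between the Givental--Teleman group action on CohFTs and the quantized operators on partition functions, and an explicit evaluation of the partition function of the diagonal TQFT \eqref{etanc}--\eqref{diag}. Given the element $(R,T)\in G$, write $\Omega=(R,T)\cdot\Omega^0$ for the resulting CohFT (so that, by Theorem \ref{TT}, every semi-simple CohFT with Frobenius algebra $(V,\eta,\bullet)$ arises this way). Then \eqref{PartG} reduces to two claims: (a) on generating functions the action of $(R,T)$ is realized by the operators $\widehat R$ and $\widehat T$ of Section \ref{S34}, i.e. $Z_{(R,T)\cdot\Omega^0}=\widehat R\,\widehat T\,Z_{\Omega^0}$; and (b) $Z_{\Omega^0}=\widehat\Delta\cdot\prod_{a=1}^N\tau_1(\hbar,{\bf T}^a)$.

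I would settle step (b) first, as it is a short computation. Because $\Omega^0$ is diagonal in the normalized canonical basis, only the monomials with $a_1=\dots=a_n=a$ survive in \eqref{PF}, and \eqref{diag} contributes the scalar $\Delta_a^{g-1+n/2}$. Using $\hbar^{2g-2+n}\Delta_a^{g-1+n/2}=(\hbar\sqrt{\Delta_a})^{2g-2+n}$ and $\int_{\overline{\mathcal M}_{g,n}}\prod_j\psi_j^{k_j}=\langle\tau_{k_1}\cdots\tau_{k_n}\rangle_g$, the exponent in \eqref{PF} splits into a sum over $a$ and the partition function factorizes as $\prod_{a=1}^N\tau_{KW}(\hbar\sqrt{\Delta_a},{\bf T}^a)$; by Remark \ref{Rmk21} this equals $\prod_{a=1}^N\tau_1(\hbar\sqrt{\Delta_a},{\bf T}^a)$, which by the definition \eqref{Delt} of $\widehat\Delta$ with all $\alpha_a=1$ is precisely $\widehat\Delta\cdot\prod_a\tau_1(\hbar,{\bf T}^a)$.

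For step (a) I would invoke the Givental quantization results of \cite{Giv1,Giv0}, lifted to the level of CohFTs in \cite{Shadrin,Teleman}. The $R$-matrix part is Givental's quantization theorem: the stable-graph formula for the action of $R(z)$ on a CohFT, evaluated against the insertions $\prod_j\psi_j^{k_j}T^{a_j}_{k_j}$, reproduces the action of the operator $\widehat R$ of \eqref{UTS}, where $\widehat r$ is the order $\leq 2$ linear differential operator obtained by Weyl-quantizing the infinitesimal symplectic transformation $r(z)=\log R(z)$. The translation part turns the pushforward formula \eqref{TOm} for $T(z)\in z^2V[\![z]\!]$ into the shift operator $\widehat T$ of Section \ref{S34}; the factor $\hbar^{-1}$ there appears because each extra marked point inserted by \eqref{TOm} raises the power of $\hbar$ in the topological expansion by one. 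The only adjustments to the conventions of \cite{Giv1} are the absence of the dilaton shift and the use of the topological rather than the genus expansion, and neither affects the exponentiated form of the operators nor the $2$-cocycle in \eqref{Givcom}. Combining (a) with (b) gives \eqref{PartG}.

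The step I expect to demand the most care is matching the geometric $R$- and $T$-actions with the quantized operators: the pullback corrections $\pi^*\psi_i=\psi_i-D_i$ along the forgetful maps must be resummed into a clean variable shift for the translation part, the bosonic normal ordering and the symplectic $2$-cocycle must be carried correctly through the quantization of $r(z)$, and the powers of $\hbar$ and of $\sqrt{\Delta_a}$ must be distributed correctly among $\widehat R$, $\widehat T$, $\widehat\Delta$ and the Kontsevich--Witten tau-functions. Everything else is either a direct citation of Theorem \ref{TT} or the one-line rescaling of step (b).
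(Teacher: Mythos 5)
Your proposal is correct and follows essentially the same route as the paper, which presents \eqref{PartG} as an immediate consequence of Teleman's Theorem \ref{TT} combined with the quantization dictionary of Section \ref{S34} and the identification of the TQFT partition function with $\widehat\Delta\cdot\prod_a\tau_1$ (the paper asserts this last point right after \eqref{Delt}, and your diagonal computation with $\hbar^{2g-2+n}\Delta_a^{g-1+n/2}=(\hbar\sqrt{\Delta_a})^{2g-2+n}$ is exactly the verification it leaves implicit). The matching of the geometric $R$- and $T$-actions with the operators $\widehat R$ and $\widehat T$ is likewise cited to \cite{Giv1,Giv0,Teleman} in both your argument and the paper, so the two proofs coincide.
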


\section{Topological recursion and Givental decomposition for degenerate cases}\label{S4}

In this section, we recall the 
Chekhov--Eynard--Orantin topological recursion and its relation to (degenerate) CohFTs. For more details see \cite{EO,EO1,DOSS,CN,Norb}.

\subsection{Degenerate CohFTs and \texorpdfstring{$\Theta$}--classes}

One can consider various degenerations of CohFTs when some of the properties change in a controllable way. An important class of such degenerate CohFTs can be associated with Norbury's classes. The Norbury classes $\Theta_{g,n}$ were considered in Section \ref{S2.1}. They satisfy the properties \ref{r1} and \ref{r2} of CohFT. Instead of \eqref{flat} we have
\be
\Theta_{g,n+1}=\psi_{n+1}\cdot \pi^* \Theta_{g,n}
\ee
and $\Theta_{0,3}=0$. By Definition \ref{CohFTdef}, the sequence of classes $\Theta$ constitute a degenerate CohFT \cite{NorbS}.

The Norbury classes can be paired with any CohFT. Such a pairing always leads to a degenerate CohFT.
\begin{definition}
For any CohFT $\Omega_{g,n}$ on $(V,\eta)$ define $\Omega_{g,n}^\Theta$ to be the sequence of the tensors $\Omega_{g,n}^\Theta  \in H^*(\overline {\mathcal M}_{g,n}, {\mathbb Q}) \otimes (V^*)^{\otimes n}$ given by
\be
 \Omega_{g,n}^\Theta(v_1\otimes\dots\otimes v_n)=\Theta_{g,n}  \Omega_{g,n}(v_1\otimes\dots\otimes v_n).
\ee
\end{definition}
If $\Omega_{g,n}$ is a CohFT  with flat unit, then
\be
\Omega_{g,n+1}^\Theta(v_1\otimes\dots \otimes{\mathbbm 1})=\psi_{n+1}\cdot \pi^*\Omega_{g,n}^\Theta(v_1\otimes\dots\otimes v_n)
\ee
and $\Omega_{0,3}^\Theta=0$.

The multiplication of a CohFT $\Omega$ by $\Theta$ classes,  $\Omega \mapsto \Omega^\Theta$, commutes with the action of Givental R-matrices and commutes with the action of translations $T$ up to rescaling. Namely, for a CohFT $\Omega$,  for any $R(z)\in L^{(2)}_+GL(V)$ and $T(z)\in z V[\![z]\!]$ we have \cite{Norb}
\be
(R\cdot \Omega)^\Theta= R\cdot   \Omega^\Theta, \quad (zT\cdot \Omega)^\Theta= T\cdot   \Omega^\Theta.
\ee 

For the family of classes $\Omega^\Theta$ we define a partition function $Z_{\Omega^\Theta}({\bf T})$ by \eqref{PF}. 
\begin{proposition}[\cite{Norb}]\label{TheCoh}
For a semi-simple CohFT $\Omega$ with the partition function \eqref{PartG}, the partition function of $\Omega^\Theta$ is given by
\be\label{degpf}
Z_{\Omega^\Theta}=\widehat{R} \widehat{T^\Theta} \widehat{\Delta} \cdot \prod_{a=1}^N \tau_\Theta(\hbar, {\bf{T}}^a), 
\ee
where $T^{\Theta}(z)=z^{-1}T(z)\in zV[\![z]\!]$.
\end{proposition}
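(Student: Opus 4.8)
The plan is to reduce the statement to Givental's decomposition \eqref{PartG}, the behaviour of $\Theta$-multiplication under the Givental--Teleman group action recalled just above, and a direct computation for the underlying topological quantum field theory. First I would establish the base case: I compute $Z_{(\Omega^0)^\Theta}$ for the semi-simple TQFT $\Omega^0$ given by \eqref{etanc}--\eqref{diag}. Because of the diagonality in \eqref{diag} the partition function factorizes over the normalized canonical basis, and the contribution of the index $a$ equals
\be
\exp\left(\sum_{g,n}\frac{\hbar^{2g-2+n}}{n!}\sum_{k_j}\int_{\overline{\mathcal{M}}_{g,n}}\Theta_{g,n}\,\Delta_a^{g-1+n/2}\prod_{j}\psi_j^{k_j}T^a_{k_j}\right).
\ee
Using the elementary identity $\hbar^{2g-2+n}\Delta_a^{g-1+n/2}=(\hbar\sqrt{\Delta_a})^{2g-2+n}$ together with $\tau_\Theta=\tau_{BGW}=\tau_0$, this contribution is exactly $\tau_0(\hbar\sqrt{\Delta_a},{\bf T}^a)$, so by the definition \eqref{Delt} of $\widehat{\Delta}$ one gets $Z_{(\Omega^0)^\Theta}=\widehat{\Delta}\cdot\prod_{a=1}^N\tau_0(\hbar,{\bf T}^a)$. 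The same computation without the $\Theta$-insertion gives $Z_{\Omega^0}=\widehat{\Delta}\cdot\prod_a\tau_1(\hbar,{\bf T}^a)$, which is consistent with Corollary \ref{Cor3.4} in the base case.

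Next I would propagate the $\Theta$-twist through the group action. By Theorem \ref{TT} the CohFT $\Omega$ with Frobenius algebra $(V,\eta,\bullet)$ and partition function \eqref{PartG} is of the form $\Omega=R\cdot T\cdot\Omega^0$ for an $R$-matrix $R$ and a translation $T\in z^2V[\![z]\!]$ of the Givental--Teleman group. Writing $T=z\,T^\Theta$ with $T^\Theta=z^{-1}T\in zV[\![z]\!]$ and applying the two identities quoted from \cite{Norb}, namely $(R\cdot\Omega')^\Theta=R\cdot(\Omega')^\Theta$ and $(z\,T''\cdot\Omega')^\Theta=T''\cdot(\Omega')^\Theta$, successively, I obtain
\be
\Omega^\Theta=\bigl(R\cdot (z\,T^\Theta)\cdot\Omega^0\bigr)^\Theta=R\cdot T^\Theta\cdot(\Omega^0)^\Theta.
\ee
Passing to partition functions, the class-level actions of the $R$-matrix and of the translation are implemented by the quantized operators $\widehat{R}$ and $\widehat{T^\Theta}$ (Section \ref{S34}: the translation action \eqref{TOm} on classes corresponds to a shift of the variables ${\bf T}$). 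Combining this with the base case gives $Z_{\Omega^\Theta}=\widehat{R}\,\widehat{T^\Theta}\cdot Z_{(\Omega^0)^\Theta}=\widehat{R}\,\widehat{T^\Theta}\,\widehat{\Delta}\cdot\prod_{a=1}^N\tau_0(\hbar,{\bf T}^a)$, which is \eqref{degpf}.

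The main obstacle, as I see it, is not a single hard computation but the bookkeeping needed to lift the class-level identity $\Omega^\Theta=R\cdot T^\Theta\cdot(\Omega^0)^\Theta$ faithfully to the quantized level. One must check that the quantization of Section \ref{S34} is compatible with composition on the relevant vectors, so that the central term in \eqref{Givcom} does not interfere; that the ordering $\widehat{R}\,\widehat{T^\Theta}\,\widehat{\Delta}$ is indeed the correct one; and — most delicately — that the $z^{-1}$-rescaling $T\mapsto T^\Theta$ of the translation class operation corresponds exactly to replacing $\widehat{T}$ by $\widehat{T^\Theta}$, with the $\hbar^{-1}$ prefactor in the translation operator and the topological (rather than genus) normalization of the times handled consistently. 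By contrast, the base-case identity is elementary, its only inputs being $\hbar^{2g-2+n}\Delta_a^{g-1+n/2}=(\hbar\sqrt{\Delta_a})^{2g-2+n}$ and $\tau_\Theta=\tau_0$.
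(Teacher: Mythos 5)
Your argument is correct and is essentially the proof the paper intends: the paper itself only cites \cite{Norb} for this proposition, but the two ingredients you assemble --- the identification of the $\Theta$-twisted TQFT partition function with $\widehat{\Delta}\cdot\prod_{a}\tau_0(\hbar,{\bf T}^a)$ via $\hbar^{2g-2+n}\Delta_a^{g-1+n/2}=(\hbar\sqrt{\Delta_a})^{2g-2+n}$ and $\tau_\Theta=\tau_{BGW}=\tau_0$, together with the commutation rules $(R\cdot\Omega)^\Theta=R\cdot\Omega^\Theta$ and $(zT\cdot\Omega)^\Theta=T\cdot\Omega^\Theta$ --- are exactly the ones the surrounding text records for this purpose. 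The bookkeeping you flag (that the $\hbar^{-1}$ prefactor in the translation operator matches the topological normalization of the class-level action \eqref{TOm}, and that the index shift in $T\mapsto z^{-1}T$ is absorbed by $\Theta_{g,n+1}=\psi_{n+1}\cdot\pi^*\Theta_{g,n}$) does check out, so there is no gap.
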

Here the operator $\widehat{\Delta}$ acts on the product of the tau-functions by \eqref{Delt}.

Note that we use the BGW tau-functions instead of the KW tau-functions used for the regular case \eqref{PartG}.
The role of the BGW tau-function in the Givental decomposition construction was first indicated in \cite{AMM1,AMM2}. 

Below we will show that the partition functions of the degenerate CohFTs \eqref{degpf}, as well as the more general partition functions associated with the Chekhov--Eynard--Orantin topological recursion, can be described by the cut-and-join formulas. 


\subsection{Chekhov--Eynard--Orantin topological recursion}
CohFTs, including the degenerate ones associated with the Norbury classes, are closely related to another universal construction of mathematical physics. The Chekhov--Eynard--Orantin topological recursion allows us to recover an infinite tower of correlation functions starting from a (not necessarily globally defined) {\em spectral curve} $C$, two meromorphic differentials on it, $\hbox{d} x$ and $\hbox{d} y$,  and a canonical bilinear differential $B(z_1,z_2)$ for $z_1,z_2 \in C$. We call the set
\be
S=\left(C, B, x, y\right)
\ee
the {\em topological recursion data}. For the specific choices of the topological recursion data obtained differentials encode different invariants of mathematics and physics, e.g. ones related to the Gromov--Witten theory, knots, topological strings and Hurwitz numbers.  Let us briefly review the formalism of topological recursion, see \cite{EO,EO1,CN} for more details.

Let $C$ be a Riemann surface (actually, it can be an open subset of a Riemann surface, possibly non-compact, that is, the spectral curve can be {\em local}). On this curve we consider two functions, $x$ and $y$; sometimes the spectral curve can be specified by these functions. More specifically, we consider the differential $\hbox{d} x $  with an assumption that it is a meromorphic differential with finitely many simple critical points $p_1,\dots,p_N$. Below we call these points the {\em simple ramification points}. We assume that the function $y$ is meromorphic in the neighbourhoods of the zeros of $\hbox{d} x $. We allow the spectral curve to be {\em irregular} \cite{CN}, that is, the singularities of $y$ (which  are at most simple poles) may coincide with the zeroes of  $\hbox{d} x$. The bidifferential $B$ is symmetric on $C \times C$, with a double pole on the diagonal with double residue equal to 1, and has no other poles in the neighbourhoods of the zeros of $\hbox{d} x $. 
In particular, on $\mathbb{C}\mathrm{P}^1$ the canonical bilinear differential is the Cauchy differentiation kernel
\be\label{Cauch}
B_C(z_1,z_2)=\frac{\hbox{d} z_1 \hbox{d} z_2}{(z_1-z_2)^2},
\ee
where $z$ is a global coordinate.

With this input, on the Cartesian product $C^{n}$ we construct a system of symmetric differentials (or {\em correlators}) $\omega_{g,n}$, $g\geq 0$, $n\geq 1$,  given by 
\begin{align}
	& \omega_{0,1}(z_1) =  y(z_1)  \hbox{d} x (z_1) ; 
	\\ \notag
	& \omega_{0,2}(z_1,z_2) = B(z_1,z_2);
\end{align}
where $z$ are some local coordinates in the neighbourhoods of the zeros of $\hbox{d} x$.
For $2g-2+n>0$ we use the {\em Chekhov--Eynard--Orantin topological recursion}
\begin{equation}
\begin{split}\label{TRff}
	\omega_{g,n} (z_1,\dots,z_n) \coloneqq \ & \frac 12 \sum_{i=1}^N \Res_{z\to p_i} 
	\frac{\int^z_{\sigma_i(z)} B_C(z_1,\cdot)} {  \omega_{0,1}(z_1) -\omega_{0,1}(\sigma_i(z_1))  }\Bigg(
	\omega_{g-1,n+1}(z,\sigma_i(z),z_{\llbracket n \rrbracket \setminus \{1\}})
	\\
	& + \sum_{\substack{g_1+g_2 = g, I_1\sqcup I_2 = {\llbracket n \rrbracket \setminus \{1\}} \\
			(g_1,|I_1|),(g_2,|I_2|) \not= (0,0) }} \omega_{g_1,1+|I_1|}(z,z_{I_1})\omega_{g_2,1+|I_2|}(\sigma_i(z), z_{I_2})\Bigg),
\end{split}
\end{equation}
where  $\sigma_i$ is the deck transformation of $x$ near $p_i$, $i=1,\dots, N$. For the stable cases $2g-2+n>0$ these differentials are meromorphic. We would not go into the discussion of the version of this topological recursion, for more detail e.g. \cite{EO}. The topological recursion \eqref{TRff} has a natural description in terms of graphs.

The $g$th {\em symplectic invariant} for $g\geq 2$ associated with the spectral recursion data $S$ is defined by
\be
F^{(g)}=\frac{1}{2-2g} \sum_{i=1}^N \Res_{z\to p_i}  \Phi(z) \omega_{g,1}(z),
\ee
where $ \Phi(z)=\int^z  y  \hbox{d} x  $ is a primitive of $y(z)  \hbox{d} x(z)$. Definitions of $F^{(0)}$ and $F^{(1)}$ can be found in \cite{EO}. We will identify $\omega_{g,0}=F^{(g)}$. All correlators $\omega_{g,n}$
 with $ 2- 2g- n < 0$ are called {\em stable}, and the others are called unstable.

For any topological recursion data $S$ for $a=1,\dots,N$ let us define the auxiliary differentials on $C$
\be\label{xia}
\xi^a_0(q)=\Res_{p=p_a} \frac{B(p,q)}{\sqrt{2(x(p)-x(p_a))}}, \quad \xi^a_k = -\hbox{d}  \left(-\frac{\p}{ \p x }\right)^{k-1} \frac{\xi^a_0}{\hbox{d} x},\quad k \in {\mathbb Z}_{>0},
\ee
where $p,q\in C$.
The correlators $\omega_{g,n} (z_1,\dots,z_n)$, obtained by the topological recursion \eqref{TRff}, are polynomials in $\xi^a_k$.
With a topological recursion data $S$ we associate a generating function made of only stable correlators with neglected unstable terms,
 \be
Z_S= \exp\left(\sum_{2g-2+n>0}\frac{\hbar^{2g-2+n}}{n!}\omega_{g,n}(z_1,\dots,z_n)\Big|_{ \xi^a_k=T_k^a}\right).
 \ee

Let us consider the simplest case with one critical point of $\hbox{d} x$, that is, $N=1$. On the genus zero spectral curve $C=\mathbb{C}\mathrm{P}^1$ with a fixed global coordinate $z$, we consider a differential $\hbox{d} x$ with a unique zero. The KW tau-function $\tau_1({\bf t})$ is described by the topological recursion on the Airy curve $x=\frac{1}{2}y^2$ \cite{EO1}
\be
S=\left(\mathbb{C}\mathrm{P}^1,B_C,\frac{1}{2}z^2, z\right).
\ee
It was conjectured in \cite{KS2} and proven in \cite{DN} that the BGW tau-function $\tau_0({\bf t})$ is described by the topological recursion on the Bessel curve $xy^2=\frac{1}{2}$ with the topological recursion data
\be
S=\left(\mathbb{C}\mathrm{P}^1,B_C, \frac{1}{2}z^2, \frac{1}{z}\right).
\ee
For these cases the stable correlators of topological recursion are related to the intersection numbers by
\be
\omega_{g,n}^{\alpha}(z_1,\dots,z_n)=\hbox{d}_1\dots \hbox{d}_n \sum_{a_1,\dots,a_n\ge 0} \int_{\overline{\mathcal{M}}_{g,n}}
\Theta_{g,n}^{1-\alpha}\prod_{j=1}^n \psi_j^{a_j}
\left(-\frac{1}{z_j}\frac{\p}{\p z_j}\right)^{a_j} z_j^{-1}
\ee
for $\alpha=1$ and $\alpha=0$ respectively.

For these two tau-functions translation of variables ${\bf T}$ leads to the change of $y$ described by Proposition 4.8 of Chekhov--Norbury:
\begin{proposition}[\cite{CN}]
The generating function of the topological recursion data
\be
S=\left(\mathbb{C}\mathrm{P}^1,B_C,  \frac{1}{2}z^2, z^{2\alpha-1}+\sum_{k=2\alpha}^\infty y_k z^k\right)
\ee
is obtained via translation of the appropriate tau-function
\be\label{transl}
Z_S=\tau_\alpha\left(\left\{T_k -\frac{(2k-1)!!}{\hbar}y_{2k-1}\right\}\right).
\ee
\end{proposition}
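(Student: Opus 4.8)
The plan is to prove the Proposition directly from the Chekhov--Eynard--Orantin recursion \eqref{TR}, exploiting the deck symmetry of the base $x=\frac{1}{2}z^2$ together with the standard variational (loop-insertion) formula for the correlators.

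First I would reduce to the odd part of $y$. For $x=\frac{1}{2}z^2$ the deck transformation is $\sigma(z)=-z$, and $y$ enters \eqref{TR} only through the denominator $\omega_{0,1}(z)-\omega_{0,1}(\sigma(z))$ of the recursion kernel. Since $\omega_{0,1}=y(z)\,z\,\hbox{d} z$ and $z\,\hbox{d} z$ changes sign under $z\mapsto -z$, one gets $\omega_{0,1}(z)-\omega_{0,1}(\sigma(z))=2\,y^{\odd}(z)\,z\,\hbox{d} z$, where $y^{\odd}$ denotes the odd-in-$z$ part of $y$. Hence every stable correlator, and therefore $Z_S$, depends on $y$ only through its odd coefficients; the even coefficients $y_{2k}$ of the spectral curve (including the constant $y_0$ when $\alpha=0$) are irrelevant. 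So we may assume $y=z^{2\alpha-1}+\sum_{k\ge\alpha+1}y_{2k-1}z^{2k-1}$, which already matches the range of the shift claimed in \eqref{transl}.

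Next I would run a variational argument along the family $y_t:=z^{2\alpha-1}+t\sum_{k\ge\alpha+1}y_{2k-1}z^{2k-1}$, $t\in[0,1]$; each $y_t$ is meromorphic near $z=0$ with the same leading term, so the recursion applies throughout. The $t$-derivative of the only affected datum is the exact differential, holomorphic at the ramification point $z=0$,
\be
\frac{\p}{\p t}\,\omega_{0,1}^{[y_t]}=\hbox{d}\Lambda,\qquad \Lambda(z)=\sum_{k\ge\alpha+1}\frac{y_{2k-1}}{2k+1}\,z^{2k+1}.
\ee
By the loop-insertion formula for topological recursion \cite{EO,EO1} (applicable because $\Lambda$ is holomorphic at $z=0$, so the variation of the kernel produces a residue only there), for every stable $(g,n)$
\be
\frac{\p}{\p t}\,\omega_{g,n}^{[y_t]}(z_1,\dots,z_n)=-\Res_{z\to 0}\Lambda(z)\,\omega_{g,n+1}^{[y_t]}(z,z_1,\dots,z_n),
\ee
in the standard sign convention. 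The differentials $\xi^a_k$ of \eqref{xia} depend only on $B_C$ and $x$, hence are $t$-independent, and a direct computation from \eqref{xia} gives $\xi_k=(2k+1)!!\,z^{-2k-2}\hbox{d} z$; therefore
\be
\Res_{z\to0}\Lambda(z)\,\xi_k(z)=(2k+1)!!\sum_{m}\frac{y_{2m-1}}{2m+1}\,\Res_{z\to0}z^{2m-2k-1}\hbox{d} z=(2k-1)!!\,y_{2k-1}.
\ee
Since every $\omega_{g,n+1}$ is a polynomial in the $\xi^a_k$, setting $\xi^a_k(z_i)=T^a_k$ in all slots and applying $\sum_{2g-2+n>0}\hbar^{2g-2+n}/n!$ to the loop-insertion identity (the index shift $2g-2+n\mapsto 2g-2+(n+1)$ supplying one extra power of $\hbar$) turns it into the transport equation
\be
\frac{\p}{\p t}\log Z_S[y_t]=-\frac{1}{\hbar}\sum_{k\ge\alpha+1}(2k-1)!!\,y_{2k-1}\,\frac{\p}{\p T_k}\log Z_S[y_t].
\ee
Integrating from $t=0$, where $Z_S[y_0]=\tau_\alpha$ by the Airy and Bessel cases \cite{EO1,DN}, to $t=1$ yields $Z_S=\tau_\alpha(\{T_k-\frac{(2k-1)!!}{\hbar}y_{2k-1}\})$, which is \eqref{transl}. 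Equivalently, this identifies the Givental decomposition of $Z_S$ as $\widehat{T}\cdot\tau_\alpha$ with trivial $\widehat{R}$ and $\widehat{\Delta}$, consistent with Corollary \ref{Cor3.4} and Proposition \ref{TheCoh}.

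The main obstacle is the clean justification of the loop-insertion identity, with its precise sign and normalization, in this possibly irregular local setting: one must verify that $\Lambda$ is holomorphic at the ramification point so the variation contributes only a residue at $z=0$, and then pin the overall sign by a single low-genus check (for instance on $\omega_{0,3}$, or on the linear-in-$y$ term of $\log Z_S$). Everything else — the deck-symmetry reduction, the evaluation of $\xi_k$, and the re-indexing of the $\hbar$-graded sum that converts $\p_t$ into $\p_{T_k}$ — is routine bookkeeping.
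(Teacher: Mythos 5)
The paper does not actually prove this Proposition: it is quoted from Chekhov--Norbury \cite{CN} (their Proposition~4.8), so there is no internal proof to compare against. Your variational route --- reduce to the odd part of $y$, deform $y$ linearly in $t$, apply the Eynard--Orantin loop-insertion formula for a deformation $\partial_t\omega_{0,1}=\hbox{d}\Lambda$ with $\Lambda$ holomorphic at the ramification point, evaluate $\Res_{z\to0}\Lambda\,\xi_k=(2k-1)!!\,y_{2k-1}$, and integrate the resulting transport equation from the Airy/Bessel initial data of \cite{EO1,DN} --- is a standard and essentially correct way to establish exactly this kind of statement, and the computations you display (the kernel denominator $2y^{\odd}(z)\,z\,\hbox{d} z$, the formula $\xi_k=(2k+1)!!\,z^{-2k-2}\hbox{d} z$, the residue evaluation, and the resulting shift of $T_k$ for $k\geq\alpha+1$ only) are all correct.

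Two points need repair or explicit attention. First, a local slip: $z\,\hbox{d} z=\hbox{d} x$ is \emph{invariant} under $z\mapsto-z$, it does not change sign; it is precisely this invariance that gives $\omega_{0,1}(z)-\omega_{0,1}(\sigma(z))=(y(z)-y(-z))\,z\,\hbox{d} z=2y^{\odd}(z)\,z\,\hbox{d} z$. The justification you wrote would instead produce the even part of $y$, contradicting the (correct) conclusion you then use. Second, the ``routine re-indexing'' converting the loop-insertion identity into $\partial_t\log Z_S=-\hbar^{-1}\sum_k(2k-1)!!\,y_{2k-1}\,\partial_{T_k}\log Z_S$ has a genuine boundary issue at $2g-2+n=1$: summing $\hbar^{2g-2+n}\partial_t F_{g,n}$ over stable $(g,n)$ produces $\partial_{T_k}F_{g,m}$ only for $2g-2+m>1$, whereas $\partial_{T_k}\log Z_S$ also contains $\partial_{T_k}F_{0,3}$ and $\partial_{T_k}F_{1,1}$. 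The transport equation is nevertheless exact, because along the whole flow $\omega_{0,3}[y_t]$ and $\omega_{1,1}[y_t]$ are combinations of $\xi_0$ only (irregular case) or of $\xi_0,\xi_1$ only (regular case), so $\partial_{T_k}F_{0,3}=\partial_{T_k}F_{1,1}=0$ for every $k\geq\alpha+1$ in the range of the shift; this short verification must be included, otherwise the transport equation as stated is not justified. With those two repairs, and granting the loop-insertion formula in the local/formal setting (which you correctly flag as the main thing to justify, e.g.\ by induction on $2g-2+n$ through the recursion \eqref{TR}), the argument is complete and consistent with the Givental-decomposition form \eqref{PFS} with trivial $\widehat{R}$ and $\widehat{\Delta}$.
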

Note that only $T_k$ with $k\geq \alpha+1$ are translated.
\begin{remark}
Actually, the proposition of Chekhov--Norbury is more general and describes $y=\sum_{k=2\alpha-1}^\infty y_k z^k$ with $y_{2\alpha-1}\neq 1$. 
We do not need this more general version here.
\end{remark}


\subsection{From topological recursion to Givental's decomposition}\label{S43}

The general relationship between the Chekhov--Eynard--Orantin topological recursion and intersection numbers on the moduli spaces was investigated by Eynard in  \cite{Eyncohft}. 
In \cite{DOSS} Dunin-Barkowski,  Orantin,  Shadrin, and Spitz prove that the partition function of a semi-simple CohFT with flat unit in the Givental decomposition form coincides with the partition function of the  Chekhov--Eynard--Orantin topological recursion on a certain regular local spectral curve. The identification uses the comparison between two graph expansions. Later Chekhov and Norbury \cite{CN} generalized this identification to the case of CohFTs without flat unit. In general, they correspond to the topological recursion on the irregular spectral curves. In this section we recall the identification of the main ingredients of two theories. We mostly follow \cite{CN,Norb}.

Let $S$ be an arbitrary topological recursion data with local spectral curve (possibly irregular), introduced in the previous section.
With any critical point $p_a$, $a=1,\dots,N$, we associate an index $\alpha_a\in \{0,1\}$ such that $\alpha_a=1$ for regular critical points and $\alpha_a=0$ for the irregular critical points.
In the neighbourhood of a critical point $p_a$ we consider a local coordinate $\zeta$ such that $x=x(p_a)+\zeta^2/2$. Then in this neighbourhood
\be
y=\sum_{k=\alpha_a-1}^\infty y_k^a \zeta^k
\ee
with $y_{2\alpha_a-1}^a\neq0$.
We put 
\be\label{Delttr}
\Delta_a=\left(y_{2\alpha_a-1}^a\right)^{-2}.
\ee

Under the {\em rescaling} $y \mapsto \epsilon y$  the correlators of the topological recursion change as
\be
\omega_{g,n} \mapsto \epsilon^{2-2g-n} \omega_{g,n}.
\ee
Therefore, the generating function $\tau_{\alpha_a} (\hbar \sqrt{\Delta_a}, {\bf T^a})$, obtained by the action of the operator $\widehat{\Delta}$
\be\label{Delta}
\widehat{\Delta} \prod_{a=1}^N \tau_{\alpha_a} (\hbar, {\bf T^a})=\prod_{a=1}^N \tau_{\alpha_a} (\hbar \sqrt{\Delta_a}, {\bf T^a}),
\ee
describes the partition function of the Chekhov--Eynard--Orantin topological recursion on the curve $S=\left(\mathbb{C}\mathrm{P}^1,B_C,\frac{1}{2}z^2, z/\sqrt{\Delta_a}\right)$ if $\alpha_a=1$ or $S=\left(\mathbb{C}\mathrm{P}^1,B_C, \frac{1}{2}z^2, \frac{1}{z\sqrt{\Delta_a}}\right)$
if $\alpha_a=0$.

Let 
\be\label{Ttr}
T^a(z)=
\begin{cases}
\displaystyle{z\left(1 -\sqrt{\frac{\Delta_a}{2\pi z}} \int_{\gamma_a}  \hbox{d} y \,e^{\frac{x(p_a)-x}{z}} \right)},\quad & p_a\,\,\, \mathrm{regular},\\[5pt]
\displaystyle{1 -\sqrt{\frac{\Delta_a}{2\pi z}} \int_{\gamma_a} y \hbox{d} x \,e^{\frac{x(p_a)-x}{z}}},\quad & p_a\,\,\, \mathrm{irregular},
\end{cases}
\ee
where $\gamma_a$ is an arc of the steepest descent contour around the critical point $p_a$ and we take the asymptotic expansion of the integral at small $|z|$. We have $T^a(z)=\sum_{k=1+\alpha_a} \Delta T_k^a z^k$ with 
\be
\Delta T_k^a=-(2k-1)!! \frac{y_{2k-1}^a}{y_{2\alpha_a-1}^a}.
\ee
For the regular points we put $\Delta T_1^a:=0$.
Consider the corresponding translation operator
\be\label{Tgen}
\widehat{T}=\exp\left(\frac{1}{\hbar}\sum_{k=1}^\infty\Delta T_k^a \frac{\p}{\p T_{k}^a}\right).
\ee

Using \eqref{xia} we also introduce the matrix 
\be\label{Rtr}
\left(R^{-1}(z)\right)^a_b=-\sqrt{\frac{z}{2\pi}}\int_{\gamma_b}  \xi^a_0 e^{\frac{x(p_b)-x}{z}}
\ee
and its quantization $\widehat{R}$ given by \eqref{UTS}.
\begin{remark}
To get the CohFT with a flat unit we have to consider the regular spectral curve, where the bidifferential $B$ and the function $y$ are related by
\be
\left(R^{-1}(z)\right)^a_b \frac{1}{\sqrt{\Delta_b}}=\frac{1}{\sqrt{2\pi z}} \int_{\gamma_a}  \hbox{d} y \,e^{\frac{x(p_a)-x}{z}}. 
\ee
In this case $R(z)$ and $T(z)$ satisfy \eqref{Tflat}.
\end{remark}

Givental's decomposition of the partition function for the Chekhov--Eynard--Orantin topological recursion is described by the following theorem.
\begin{theorem}[\cite{DOSS,CN}]\label{DOSST} 
Consider topological recursion data $S=(C,B,x,y)$ with a local (possible irregular) spectral curve $S$. There exist operators $\widehat{R}$, $\widehat{T}$ and $\widehat{\Delta}$ defined by \eqref{Rtr}, \eqref{Tgen}, and \eqref{Delta} determined explicitly by $S$, such that
\be\label{PFS}
Z_S=\widehat{R}\widehat{T}\widehat{\Delta} \cdot \prod_{a=1}^N \tau_{\alpha_a} (\hbar, {\bf T^a}).
\ee
\end{theorem}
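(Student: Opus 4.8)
The plan is to expand both sides of \eqref{PFS} as sums over stable graphs and to check that the two expansions are built from the same local pieces; this is essentially the strategy of \cite{DOSS} in the regular case and of \cite{CN} in the irregular one. First I would recall Eynard's resolution of the recursion \eqref{TR} as a stable-graph sum \cite{Eyncohft}. Unfolding all the residues in \eqref{TR} at the simple ramification points, and using near each $p_a$ the local coordinate $\zeta$ with $x=x(p_a)+\zeta^2/2$ and the expansion $y=\sum_{k\geq\alpha_a-1}y^a_k\zeta^k$, $y^a_{2\alpha_a-1}\neq0$, one obtains
\be
\omega_{g,n}(z_1,\dots,z_n)=\sum_{\Gamma}\frac{1}{|\mathrm{Aut}\,\Gamma|}\sum_{\text{dec.}}\Bigl(\prod_{v}\mathcal{V}_v\Bigr)\Bigl(\prod_{e}\mathcal{E}_e\Bigr)\prod_{i=1}^n\xi^{a_i}_{k_i}(z_i),
\ee
the sum over stable graphs $\Gamma$ of genus $g$ with $n$ labelled legs: a vertex of genus $g_v$ and valence $n_v$ sitting at $p_a$ carries the local amplitude of the recursion at $p_a$, which by the steepest-descent expansion of $\omega_{0,1}$ is (up to the $\Delta_a$-rescaling) the intersection number $\int_{\overline{\mathcal M}_{g_v,n_v}}\Theta^{1-\alpha_a}_{g_v,n_v}\prod\psi^{k}$; an edge between $p_a$ and $p_b$ carries the regular part $\mathcal{E}_e$ of the local bidifferential $B$ at $p_a\times p_b$; a leg carries $\xi^a_k$. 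Specialising $\xi^a_k=T^a_k$ turns this into the stable-graph expansion of $Z_S$, while $\omega_{0,1}=y\,\hbox{d}x$ and $\omega_{0,2}=B$ are the unstable data from which everything is built.

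Second I would recall the stable-graph formula for the Givental group action \cite{Giv1,Shadrin,PPZ,PpandeCFT}: the right-hand side $\widehat{R}\widehat{T}\widehat{\Delta}\cdot\prod_a\tau_{\alpha_a}$ admits an identical expansion, in which the $\widehat{\Delta}$-rescaled factors $\tau_{\alpha_a}(\hbar\sqrt{\Delta_a},\cdot)$ supply the vertices — and these vertex weights are precisely the $\psi$- and $\Theta$-intersection numbers, by Kontsevich's theorem when $\alpha_a=1$ and by the Norbury/Do--Norbury description recalled in Section \ref{S2.1} when $\alpha_a=0$ — the quadratic (two-derivative) part of $\widehat{R}$ supplies the edge propagator, a bilinear expression in the edge variables determined by $R(z)$ together with the symplectic condition \eqref{symp}, the linear part of $\widehat{R}$ dresses the legs by $R^{-1}$, and $\widehat{T}$ inserts extra legs carrying $T^a(z)$. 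The crux is then to match the local pieces, and this is exactly what the formulas \eqref{Rtr}, \eqref{Tgen}, \eqref{Delta} encode: the Laplace (steepest-descent) transform \eqref{Rtr} of $\hbox{d}\xi^a_0$ along $\gamma_b$ reproduces, order by order in $z$, the regular part of $B$ at $p_a\times p_b$, and the resulting matrix automatically obeys \eqref{symp} because $B$ is symmetric; the transform \eqref{Ttr} of $\hbox{d}y$ (resp.\ $y\,\hbox{d}x$) reproduces the non-leading part of $y$ near $p_a$, i.e.\ the shifts of \eqref{Tgen}; and the leading coefficient $y^a_{2\alpha_a-1}$ yields the $\hbar$-rescaling $\Delta_a=(y^a_{2\alpha_a-1})^{-2}$ of \eqref{Delttr}. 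Since $\omega_{0,1}$ and $\omega_{0,2}$ match by construction, agreement of all vertices, edges and legs makes the two stable-graph sums coincide termwise, which is \eqref{PFS}.

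The main obstacle is the propagator identification: checking that \eqref{TR}, once all residues are fully expanded, reorganises into precisely Givental's stable-graph sum with the $R$-matrix \eqref{Rtr}. This is where the global analytic input is used — that $\hbox{d}x$ has only simple zeros and that $B$ has the prescribed double pole on the diagonal and no other poles near the $p_a$ — both for convergence of the Laplace transforms and for the symplecticity of $R$. A second subtlety, specific to irregular $p_a$, is that the local vertex is the Bessel-curve amplitude rather than the Airy one; identifying it with $\tau_0$ is the Do--Norbury computation \cite{DN} together with Norbury's geometric interpretation \cite{NorbS}, so a fully self-contained treatment would need that separate topological-recursion computation on the Bessel curve. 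An alternative that bypasses the combinatorics would be to observe that both sides satisfy the same abstract loop (Virasoro/$W$-)constraints with the same unstable initial data $\omega_{0,1},\omega_{0,2}$ and to invoke uniqueness of the solution, trading the graph bookkeeping for the verification that the Givental-decomposed side solves those constraints.
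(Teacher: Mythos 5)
The paper does not prove Theorem \ref{DOSST} at all: it is quoted from \cite{DOSS} and \cite{CN}, and Section \ref{S43} merely records the dictionary \eqref{Rtr}, \eqref{Ttr}, \eqref{Delttr} between the spectral-curve data and the Givental operators. Your outline faithfully reproduces the strategy of those references --- Eynard's stable-graph resolution of \eqref{TR} on one side, the Givental stable-graph expansion of $\widehat{R}\widehat{T}\widehat{\Delta}\cdot\prod_a\tau_{\alpha_a}$ on the other, matched piece by piece via Laplace transforms at the ramification points, with the Do--Norbury Bessel-curve computation supplying the irregular vertices --- so as a reconstruction of the cited proof it is on target. The one place where your sketch is too quick is the claim that the matrix \eqref{Rtr} ``automatically'' satisfies the symplectic condition \eqref{symp} because $B$ is symmetric: in \cite{DOSS} this is a separate lemma, resting on the identity expressing $\sum_c R^{-1}(z)^a_c\,R^{-1}(-z)^b_c$ as a Laplace transform of $B$ itself, in which the double pole of $B$ on the diagonal produces exactly the $\delta^{ab}$ needed for \eqref{symp}; symmetry of $B$ alone does not give this. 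Likewise the reorganisation of the fully expanded residues of \eqref{TR} into Givental's graph sum with precisely that propagator is the main technical content of \cite{DOSS} and is only named, not carried out, in your proposal --- which is acceptable for a result the paper itself imports wholesale, but should be flagged as the part that genuinely requires the references.
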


Let us stress that for a local spectral curve the matrix \eqref{Rtr} in general does not satisfy the symplectic condition \eqref{symp}, $R\notin L^{(2)}_+GL(V)$. However, the operator $\widehat{R}$ in \eqref{PFS} is defined by \eqref{UTS}.

If for the regular spectral curve the matrix \eqref{Rtr} satisfies the symplectic condition, then the generating function $Z_S$ is identified with a generating function of the associated semi-simple CohFT described by Teleman's Theorem \ref{TT}.
\begin{remark}
It is easy to see that the approach of \cite{DOSS} is applicable to the general case of Theorem \ref{DOSST}, and the inverse statement for the general case is also correct. Namely, for a set of operators $\widehat{R}$, $\widehat{T}$, and $\widehat{\Delta}$ there exists a local (possibly irregular) spectral curve with simple ramification points, such that the partition function of the Chekhov--Eynard--Orantin topological recursion on this curve coincides with the right-hand side of \eqref{PFS}.
\end{remark}

\section{Cut-and-join operators}\label{S5}

In this section, we prove that the partition functions of  semi-simple CohFTs and Chekhov--Eynard--Orantin topological recursion, considered in the previous sections, can be described by cubic cut-and-join operators. It implies, in particular, that these partition functions satisfy the algebraic topological recursion. 

\subsection{Generalized ancestor potentials}\label{S51}

Let us consider $V={\mathbb C}^N$ with an orthonormal basis $\tilde{e}_a$. We choose $\alpha_a\in\{0,1\}$ for $a=1,\dots,N$. For any $R(z)=\exp(r(z))\in \Id + z \End (V)[\![z]\!]$, $T(z)=  \sum_{j=1}^\infty \Delta T_j^a z^j \tilde{e}_a \in zV[\![z]\!]$ with $ \Delta T_1^a=0$ if $\alpha_a=1$, and $\Delta=(\Delta_1,\dots,\Delta_N) \in {\mathbb C}^N$ we consider a Givental operator
\be
\widehat{R}:=\exp\left(
\sum_{k=1}^\infty  \left( \sum_{j=0}^\infty (r_k)_a^b T_j^a \frac{\p}{\p T_{k+j}^b}+\frac{1}{2}\sum_{j=0}^{k-1} (-1)^{j+1} (r_k)^{ab}\frac{\p^2}{\p T_j^a \p T_{k-j-1}^b}\right)\right),
\ee
a translation operator
\be
\widehat{T}:=\exp\left(\frac{1}{\hbar}\sum_{k=1}^\infty\Delta T_k^a \frac{\p}{\p T_{k}^a}\right),
\ee
 and a rescaling operator
\be
\widehat{\Delta} \cdot \prod_{a=1}^N \tau_{\alpha_a}(\hbar, {\bf{T}}^a) := \prod_{a=1}^N \tau_{\alpha_a}(\hbar \sqrt{\Delta_a}, {\bf{T}}^a)
\ee
introduced in \eqref{UTS}, \eqref{Tgen}, \eqref{Delt} respectively. 
\begin{definition}\label{Defgap}
With the operators $\widehat{R}$, $\widehat{T}$, and $\widehat{\Delta}$  we define the {\em generalized ancestor potential}
\be\label{ZSi}
Z({\bf T}):=\widehat{R}\widehat{T}\widehat{\Delta} \cdot \prod_{a=1}^N \tau_{\alpha_a} (\hbar, {\bf T^a}).
\ee
\end{definition}
Below we show that this is a well defined element of ${\mathbb C}[{\bf T}][\![\hbar]\!]$.

If $\alpha_a=1$ for all $a$ and  $R(z)\in L^{(2)}_+GL(V)$, then $(R(z),T(z))$ is nothing but an arbitrary element of the Givental--Teleman group, and $Z$ coincides with the partition function $Z_\Omega$ of the corresponding  semi-simple CohFT in the coordinates associated with normalized canonical basis, see Corollary \ref{Cor3.4}. If $\alpha_a=0$ for all $a$ and  $R(z)\in L^{(2)}_+GL(V)$, then $Z$ coincides with the partition function $Z_{\Omega^\Theta}$ of semi-simple CohFT coupled to $\Theta$ classes, see Proposition \ref{TheCoh}. If $R(z)$, $T(z)$ and $\Delta$ are given by \eqref{Rtr}, \eqref{Ttr},  and \eqref{Delttr} respectively, then $Z$ coincides with the partition function $Z_S$ of the Chekhov--Eynard--Orantin topological recursion on a possibly irregular local spectral curve, see Theorem \ref{DOSST}. 

In this section, we construct a cut-and-join description for the generalized ancestor potential $Z$. The cut-and-join operators for the tau-functions $ \tau_{\alpha_a}$ are described by Theorem \ref{TKW},
their deformations given by the subsequent action of the operators $\widehat{\Delta}$, $\widehat{T}$, and $\widehat{R}$ are described correspondingly in Sections \ref{S5.1}, \ref{S5.2}, and \ref{S5.3}.

\subsection{Cut-and-join operators for semi-simple TQFTs}\label{S5.1}

Let  
\be
\widehat{J}(z)=\diag \left(\widehat{J}^1(z),\dots,\widehat{J}^N(z)\right)
\ee
 be a diagonal matrix with the entries $\widehat{J}^a(z)=\sum \widehat{J}_k^a z^{-2k}$ given by the bosonic current \eqref{bc} acting on the space of functions of ${\bf T}^a$. Let also 
\be
v(z)=\diag\left(\frac{1}{2\alpha_1+1}\frac{1}{z^{2\alpha_1-1}},\dots,\frac{1}{2\alpha_N+1}\frac{1}{z^{2\alpha_N-1}}\right),
\ee
and $\widehat{L}(z)=\normordboson\widehat{J}(z)^2\normordboson$.
We consider the operator
\be
\widehat{W}_D=\Res_z \Tr v(z) \left(\widehat{J}(z)_+\widehat{L}(z)+\frac{1}{8z^2}\widehat{J}(z)\right),
\ee
where we take the trace in the space $V$.
By Theorem \ref{TKW},
\be
\prod_{a=1}^N \tau_{\alpha_a} (\hbar, {\bf T^a})=\exp\left(\hbar \widehat{W}_D\right)\cdot 1.
\ee

Let $\sqrt{\Delta}=\diag(\sqrt{\Delta_1},\dots,\sqrt{\Delta_N})$. We introduce the operator
\be\label{WD}
\widehat{W}_\Delta=\Res_z \Tr v(z) \sqrt{\Delta} \left(\widehat{J}(z)_+\widehat{L}(z)+\frac{1}{8z^2}\widehat{J}(z)\right).
\ee
From the definition of the operator $\widehat{\Delta}$ we have the following lemma.
\begin{lemma}
\be
\widehat{\Delta} \cdot \prod_{a=1}^N \tau_{\alpha_a}(\hbar, {\bf{T}}^a) =\exp\left( \hbar \widehat{W}_\Delta \right) \cdot 1.
\ee
\end{lemma}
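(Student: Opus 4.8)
The plan is to deduce the lemma directly from Theorem~\ref{TKW} by tracking the effect of the $\hbar$-rescaling factor by factor. First I would record what is already used just above: by Theorem~\ref{TKW} each factor satisfies $\tau_{\alpha_a}(\hbar,{\bf T}^a)=\exp(\hbar\widehat{W}_{\alpha_a})\cdot 1$, where $\widehat{W}_{\alpha_a}$ is the operator \eqref{Wa} acting on the block of variables ${\bf T}^a$ only, so that $\widehat{W}_D=\sum_{a=1}^N\widehat{W}_{\alpha_a}$ is block-diagonal; crucially, $\widehat{W}_{\alpha_a}$ contains no $\hbar$.

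Since $\widehat{W}_{\alpha_a}$ is $\hbar$-free, substituting $\hbar\mapsto\hbar\sqrt{\Delta_a}$ in $\tau_{\alpha_a}(\hbar,{\bf T}^a)=\exp(\hbar\widehat{W}_{\alpha_a})\cdot 1$ gives
\be
\tau_{\alpha_a}(\hbar\sqrt{\Delta_a},{\bf T}^a)=\exp\left(\hbar\sqrt{\Delta_a}\,\widehat{W}_{\alpha_a}\right)\cdot 1,
\ee
which is still a well-defined element of ${\mathbb C}[{\bf T}^a][\![\hbar]\!]$ because $\tau_{\alpha_a}(\hbar,{\bf T}^a)$ is (on each power of $\hbar$ only finitely many terms of $\widehat{W}_{\alpha_a}$ contribute, by the grading of Section~\ref{S2}). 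By the definition \eqref{Delt} of $\widehat{\Delta}$, the left-hand side of the lemma equals $\prod_{a=1}^N\tau_{\alpha_a}(\hbar\sqrt{\Delta_a},{\bf T}^a)=\prod_{a=1}^N\exp(\hbar\sqrt{\Delta_a}\,\widehat{W}_{\alpha_a})\cdot 1$. Since the operators $\widehat{W}_{\alpha_1},\dots,\widehat{W}_{\alpha_N}$ act on pairwise disjoint sets of variables they commute, so the product of exponentials is $\exp\bigl(\hbar\sum_{a=1}^N\sqrt{\Delta_a}\,\widehat{W}_{\alpha_a}\bigr)$.

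It remains to identify $\sum_{a=1}^N\sqrt{\Delta_a}\,\widehat{W}_{\alpha_a}$ with $\widehat{W}_\Delta$ of \eqref{WD}. This is pure bookkeeping: $\widehat{J}(z)$ is diagonal, hence so are $\widehat{J}(z)_+$ and $\widehat{L}(z)=\normordboson\widehat{J}(z)^2\normordboson$, with $a$-th entries the ${\bf T}^a$-currents, and $v(z)$ and $\sqrt{\Delta}$ are diagonal as well, so
\be
\Tr v(z)\sqrt{\Delta}\left(\widehat{J}(z)_+\widehat{L}(z)+\frac{1}{8z^2}\widehat{J}(z)\right)=\sum_{a=1}^N\frac{\sqrt{\Delta_a}}{2\alpha_a+1}\,\frac{1}{z^{2\alpha_a-1}}\left(\widehat{J}^a(z)_+\widehat{L}^a(z)+\frac{1}{8z^2}\widehat{J}^a(z)\right);
\ee
taking $\Res_z$ and comparing with \eqref{Wa} gives $\widehat{W}_\Delta=\sum_{a}\sqrt{\Delta_a}\,\widehat{W}_{\alpha_a}$ --- equivalently, this is exactly the $\Delta_a\equiv 1$ computation already implicit in $\widehat{W}_D=\sum_a\widehat{W}_{\alpha_a}$, with the single extra scalar factor $\sqrt{\Delta_a}$ inserted in each block. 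There is no genuine obstacle in this argument; the only points that warrant (routine) care are that the rescaling $\hbar\mapsto\hbar\sqrt{\Delta_a}$ may be pushed into the exponent precisely because $\widehat{W}_{\alpha_a}$ carries no $\hbar$, and that the $N$ exponentials may be merged because the corresponding operators commute.
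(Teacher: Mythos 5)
Your proof is correct and follows the same route the paper intends: the paper states the lemma as an immediate consequence of the definition \eqref{Delt} of $\widehat{\Delta}$ together with Theorem \ref{TKW} and the identity $\prod_a\tau_{\alpha_a}=\exp(\hbar\widehat{W}_D)\cdot 1$, and your argument simply spells out the three routine steps (pushing $\hbar\mapsto\hbar\sqrt{\Delta_a}$ into the exponent since $\widehat{W}_{\alpha_a}$ is $\hbar$-free, merging the commuting block exponentials, and matching $\sum_a\sqrt{\Delta_a}\,\widehat{W}_{\alpha_a}$ with \eqref{WD}) that the paper leaves implicit.
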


\begin{remark}
Operators $\widehat{W}_0$ and $\widehat{W}_1$ in \eqref{cajalp} are of degree $1$ and $3$ respectively. Therefore, in general, the operator $\widehat{W}_\Delta $ is a sum of components of degree $1$ and $3$. 
This operator is homogeneous if all $\alpha_a$ are the same, in this case the degree is equal to $1$ (if all $\alpha_a=0$) or to $3$ (if all $\alpha_a=1$).
\end{remark}

For a semi-simple CohFT  $\alpha_a=1$ for all $a=1,\dots,N$. Therefore, in the normalized canonical basis the partition function of the TQFT for a semi-simple CohFT is given by 
\be\label{ZOO}
Z_{\Omega^0}=\exp \left(\hbar \sqrt{\Delta_a} \widehat{W}_1^a\right)\cdot 1.
\ee

Now, if we consider any basis $\check{e}_a$ in $V$ related to the normalized canonical basis by
\be\label{banew}
\check{e}_b=\Psi^a_b \tilde{e}_a, \quad \Psi \in \GL(V),
\ee
then the formal variables for this basis are related to the formal variables in \eqref{ZOO} by the linear transformation
\be
T^a_k=\Psi^a_b \check{T}^b_k.
\ee
Following Givental \cite{Giv1}, we introduce the operator $\widehat{\Psi}$ responsible for the change of the variables, that acts as follows
\be\label{Psiac}
\left.\widehat{\Psi} \cdot f({\bf T})=f({\bf T})\right|_{T^a_k=\Psi^a_b \check{T}^b_k}.
\ee
Then the partition function of a semi-simple TQFT in the basis \eqref{banew} is equal to
\be
Z_{\Omega^0}({\bf \check{T}})=\widehat{\Psi}\cdot Z_{\Omega^0}({\bf T}).
\ee
Let
\begin{equation}
\begin{split}
\widehat{W}_{\Omega^0}&=\frac{1}{3!}\sum_{a=1}^N\sqrt{\Delta_a}\left(
\Psi_b^a(\Psi^{-1})^c_a(\Psi^{-1})^d_a\sum_{k,m=0}^\infty\frac{(2k+1)!!(2m+1)!!}{(2k+2m+3)!!}\check{T}_{k+m+2}^b\frac{\p^2}{\p \check{T}_k^c \p \check{T}_m^d}
\right.\\
&+
2\Psi_b^a\Psi_c^a(\Psi^{-1})^d_a\sum_{k,m=0}^\infty \frac{(2k+2m-1)!!}{(2k-1)!!(2m-1)!!}\check{T}_{k}^b\check{T}_{m}^c\frac{\p}{\p \check{T}_{k+m-1}^d}
\\
&\left.+\Psi_b^a\Psi_c^a\Psi_d^a \check{T}_0^b\check{T}_0^c\check{T}_0^d+\Psi_b^a\frac{\check{T}_1^b}{4}\right).
\end{split}
\end{equation}
\begin{proposition}\label{PropTQFT}
The partition function of a semi-simple TQFT is given by the cut-and-join operator $\widehat{W}_{\Omega^0}$,
\be
Z_{\Omega^0}({\bf \check{T}})=\exp \left(\hbar \widehat{W}_{\Omega^0}\right)\cdot 1.
\ee
\end{proposition}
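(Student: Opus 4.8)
The plan is to reduce the proposition to the cut-and-join description of a product of Kontsevich--Witten tau-functions, i.e.\ to formula \eqref{ZOO} (which follows from Theorem~\ref{TKW}), and then to conjugate by the change-of-variables operator $\widehat{\Psi}$ of \eqref{Psiac}.

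First I would rewrite \eqref{ZOO} as
\be
Z_{\Omega^0}({\bf T})=\exp\left(\hbar\sum_{a=1}^N\sqrt{\Delta_a}\,\widehat{W}_1^a\right)\cdot 1,
\ee
where $\widehat{W}_1^a$ denotes the cubic operator $\widehat{W}_1$ (displayed after \eqref{CAJ1}) written in the variables ${\bf T}^a$; distinct $\widehat{W}_1^a$ act on disjoint sets of variables and hence commute, so the product of exponentials collapses to a single one. Since $\widehat{\Psi}$ is invertible with $\widehat{\Psi}\cdot 1=1$ and conjugation by $\widehat{\Psi}$ is an algebra homomorphism, hence commutes with the formal exponential, applying $\widehat{\Psi}$ to both sides yields
\be
Z_{\Omega^0}({\bf\check T})=\widehat{\Psi}\cdot Z_{\Omega^0}({\bf T})=\exp\left(\hbar\sum_{a=1}^N\sqrt{\Delta_a}\,\widehat{\Psi}\,\widehat{W}_1^a\,\widehat{\Psi}^{-1}\right)\cdot 1.
\ee
It then remains to compute the conjugated operator and check that it equals $\widehat{W}_{\Omega^0}$.

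For the conjugation I would use \eqref{Psiac} together with $T^a_k=\Psi^a_b\check{T}^b_k$ to obtain the elementary rules
\be
\widehat{\Psi}\,T^a_k\,\widehat{\Psi}^{-1}=\Psi^a_b\,\check{T}^b_k,\qquad
\widehat{\Psi}\,\frac{\p}{\p T^a_k}\,\widehat{\Psi}^{-1}=(\Psi^{-1})^c_a\,\frac{\p}{\p\check{T}^c_k},
\ee
the derivatives transforming by the inverse transpose. Substituting these into the four-term explicit form of $\widehat{W}_1$: the $T_{k+m+2}\,\frac{\p^2}{\p T_k\p T_m}$ term acquires $\Psi^a_b(\Psi^{-1})^c_a(\Psi^{-1})^d_a$, the $T_kT_m\,\frac{\p}{\p T_{k+m-1}}$ term acquires $\Psi^a_b\Psi^a_c(\Psi^{-1})^d_a$, the $T_0^3$ term acquires $\Psi^a_b\Psi^a_c\Psi^a_d$, and the $T_1$ term acquires $\Psi^a_b$. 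Collecting the numerical prefactors $\tfrac16,\tfrac13,\tfrac16,\tfrac1{24}$ together with $\sqrt{\Delta_a}$ and summing over $a$ reproduces $\widehat{W}_{\Omega^0}$ term by term; in particular the factor $2$ appearing in its first-order term is just the prefactor $\tfrac13$ of the corresponding term of $\widehat{W}_1$ rewritten as $\tfrac{1}{3!}\cdot 2$, with $k$ and $m$ summed independently so that no further symmetrization factor enters.

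The computation above is the only substantive step, and it is routine; the point requiring care is well-definedness rather than any real obstacle. Since $N<\infty$ and every monomial in $\tau_1$, hence in each $\widehat{W}_1^a$ applied to a polynomial, involves only finitely many of the $T^a_k$, the substitution $T^a_k\mapsto\Psi^a_b\check{T}^b_k$ makes sense term by term, so $\widehat{\Psi}$ and all the manipulations above are well defined on ${\mathbb C}[{\bf\check T}][\![\hbar]\!]$, and the resulting $Z_{\Omega^0}({\bf\check T})$ is again an element of this ring.
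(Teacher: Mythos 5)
Your proof is correct and follows the same route the paper takes (the paper leaves the verification implicit): starting from the cut-and-join form \eqref{ZOO} of the TQFT partition function in the normalized canonical basis, conjugate by the change-of-variables operator $\widehat{\Psi}$, using $\widehat{\Psi}\cdot 1=1$ and the transformation rules $\widehat{\Psi}\,T^a_k\,\widehat{\Psi}^{-1}=\Psi^a_b\check{T}^b_k$, $\widehat{\Psi}\,\frac{\p}{\p T^a_k}\,\widehat{\Psi}^{-1}=(\Psi^{-1})^c_a\frac{\p}{\p\check{T}^c_k}$. The coefficient bookkeeping ($\tfrac{1}{3!}\cdot(1,2,1,\tfrac14)=(\tfrac16,\tfrac13,\tfrac16,\tfrac1{24})$) and the placement of $\Psi$ on multiplication operators versus $\Psi^{-1}$ on derivatives both match $\widehat{W}_{\Omega^0}$ exactly.
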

\begin{remark}
According to Dubrovin \cite{Dub},
\be
\Omega_{0,3}(\check{e}_b,\check{e}_c,\check{e}_d)= \sum_{a=1}^N\sqrt{\Delta_a} \Psi_b^a\Psi_c^a\Psi_d^a,
\ee
which coincides with a contribution in $Z_{\Omega^0}({\bf \check{T}})$ for $g=0,n=3$ in Proposition \ref{PropTQFT}.
\end{remark}

\subsection{Translations and Virasoro group action}\label{S5.2}
In this section, we consider the operator $\widehat{T}$ in \eqref{ZSi}. Let us construct an element of the Virasoro group, associated with the translation operator $\widehat{T}$ by Lemma \ref{virtos}.
For $a\in \{1,\dots,N\}$ consider
\be
v^a(z)=\sqrt{\Delta_a}\sum_{k=1+\alpha_{a}}^\infty \frac{\Delta T_k^a}{(2k+1)!!}z^{2k+1}\in z^{3+2\alpha_a}{\mathbb C}[\![z^2]\!].
\ee
With these functions we associate 
\be
f^a(z)=z\left(1-(2\alpha_a+1)\frac{v^a(z)}{z^{2\alpha_a+1}}\right)^\frac{1}{2\alpha_a+1}\in z{\mathbb C}[\![z^2]\!].
\ee
By \eqref{V} we construct the elements of the Virasoro group
\be
\widehat{V}^a=\Xi^{-1}(f^a(z)).
\ee
Let 
\be\label{Virm}
\widehat{V}:=\prod_{a=1}^N\widehat{V}^a.
\ee
 Then by Lemma \ref{virtos}, one can substitute a translation operator by an element of the Virasoro group.
\begin{lemma}
\be
 \widehat{T} \widehat{\Delta} \cdot \prod_{a=1}^N \tau_{\alpha_a}(\hbar, {\bf{T}}^a)=\widehat{V}\widehat{\Delta} \cdot \prod_{a=1}^N \tau_{\alpha_a}(\hbar, {\bf{T}}^a).
\ee
\end{lemma}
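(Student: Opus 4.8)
The plan is to reduce the identity to Lemma \ref{virtos}, applied one component at a time, the only genuine subtlety being the interplay of the $\hbar^{-1}$ prefactor in $\widehat{T}$ with the $\hbar$-rescaling performed by $\widehat{\Delta}$. First I would observe that all three operators act separately on each group of variables ${\bf T}^1,\dots,{\bf T}^N$: the shift operator factors as $\widehat{T}=\prod_{a=1}^N\widehat{T}^a$ with $\widehat{T}^a=\exp\bigl(\hbar^{-1}\sum_{k\geq 1+\alpha_a}\Delta T_k^a\,\p/\p T_k^a\bigr)$ (the exponents for different $a$ commute); the Virasoro element is by construction $\widehat{V}=\prod_a\widehat{V}^a$ with each $\widehat{V}^a$ involving only the variables ${\bf T}^a$; and $\widehat{\Delta}\cdot\prod_a\tau_{\alpha_a}(\hbar,{\bf T}^a)=\prod_a\tau_{\alpha_a}(\hbar\sqrt{\Delta_a},{\bf T}^a)$ is a product of functions in disjoint variables. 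Hence it suffices to prove, for each fixed $a$,
\be
\widehat{T}^a\cdot\tau_{\alpha_a}(\hbar\sqrt{\Delta_a},{\bf T}^a)=\widehat{V}^a\cdot\tau_{\alpha_a}(\hbar\sqrt{\Delta_a},{\bf T}^a),
\ee
and then multiply over $a=1,\dots,N$.

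For fixed $a$ I would apply Lemma \ref{virtos} to the single tau-function $\tau_{\alpha_a}$ with $\hbar$ replaced throughout by $\hbar\sqrt{\Delta_a}$; this is legitimate since the operators $\widehat{L}_k$ and the coefficients $\chi_k$ are $\hbar$-independent, so the identity of Lemma \ref{virtos} persists under this substitution. The relevant Virasoro coefficients are the $\chi_k^a$ defined by $\widehat{V}^a=\Xi^{-1}(f^a)$, so the series $f$ attached to $\chi^a$ through \eqref{V} is $f^a$ itself; by \eqref{vfunc} the corresponding function is $v^{\alpha_a}(z;\chi^a)=\dfrac{z^{2\alpha_a+1}-f^a(z)^{2\alpha_a+1}}{2\alpha_a+1}$, which by the very definition of $f^a$ in Section \ref{S5.2} equals $v^a(z)=\sqrt{\Delta_a}\sum_{k\geq 1+\alpha_a}\dfrac{\Delta T_k^a}{(2k+1)!!}z^{2k+1}$. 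Reading off coefficients, the shift amounts produced by the lemma are $(2k+1)!!\,[z^{2k+1}]\,v^a(z)=\sqrt{\Delta_a}\,\Delta T_k^a$, so Lemma \ref{virtos} reads
\be
\exp\Bigl(\tfrac{1}{\hbar\sqrt{\Delta_a}}\sum_{k\geq 1+\alpha_a}\sqrt{\Delta_a}\,\Delta T_k^a\,\tfrac{\p}{\p T_k^a}\Bigr)\cdot\tau_{\alpha_a}(\hbar\sqrt{\Delta_a},{\bf T}^a)=\widehat{V}^a\cdot\tau_{\alpha_a}(\hbar\sqrt{\Delta_a},{\bf T}^a).
\ee

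The last step is the cancellation $\tfrac{1}{\hbar\sqrt{\Delta_a}}\cdot\sqrt{\Delta_a}=\tfrac1\hbar$: the factor $\sqrt{\Delta_a}$ deliberately inserted in the definition of $v^a(z)$ in Section \ref{S5.2} is exactly what compensates the rescaling $\hbar\mapsto\hbar\sqrt{\Delta_a}$ coming from $\widehat{\Delta}$, so the left-hand side above is precisely $\widehat{T}^a\cdot\tau_{\alpha_a}(\hbar\sqrt{\Delta_a},{\bf T}^a)$. This gives the single-component identity, and taking the product over $a$ yields the lemma. I expect no serious obstacle beyond this bookkeeping of the $\hbar$-prefactor against the rescaling — all of the substance is already packaged in Lemma \ref{virtos} — and, as a minor technical remark, since all shifted indices satisfy $k\geq 1+\alpha_a\geq 1$ these operators act by well-defined (graded, hence locally finite on each $\hbar$-order) transformations of ${\mathbb C}[{\bf T}][\![\hbar]\!]$, so every manipulation above is valid in that completed ring.
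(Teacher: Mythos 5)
Your proposal is correct and follows essentially the same route as the paper: both reduce the identity to Lemma \ref{virtos} via the observation that the factor $\sqrt{\Delta_a}$ built into $v^a(z)$ exactly compensates the rescaling $\hbar\mapsto\hbar\sqrt{\Delta_a}$ effected by $\widehat{\Delta}$. The only cosmetic difference is that the paper first commutes $\widehat{T}$ past $\widehat{\Delta}$ (turning the shift coefficients into $\sqrt{\Delta_a}\,\Delta T_k^a$) and applies Lemma \ref{virtos} at the unrescaled $\hbar$, whereas you apply the lemma componentwise at $\hbar\sqrt{\Delta_a}$ — the same cancellation either way.
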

\begin{proof}
The operators $\widehat{V}$ and $\widehat{\Delta}$ commute, $\left[\widehat{V},\widehat{\Delta}\right]=0$, because $\widehat{V}$ is independent of $\hbar$. For the operator $\widehat{T}$ we have
\be
 \widehat{T} \widehat{\Delta}= \widehat{\Delta} \exp\left(\frac{1}{\hbar}\sum_{a=1}^N\sqrt{\Delta_a}\sum_{k=1+\alpha_a}^\infty\Delta T_k^a \frac{\p}{\p T_{k}^a}\right).
\ee
Then, from Lemma \ref{virtos} we have
\be
 \exp\left(\frac{1}{\hbar}\sum_{b=1}^N\sqrt{\Delta_b}\sum_{k=1+\alpha_b}^\infty\Delta T_k^b \frac{\p}{\p T_{k}^b}\right) \prod_{a=1}^N\tau_{\alpha_a}(\hbar ,{\bf T}^a)=\widehat{V}\cdot \prod_{a=1}^N \tau_{\alpha_a}(\hbar ,{\bf T}^a),
\ee
which completes the proof.
\end{proof}
\begin{remark}
For the partition function of the Chekhov--Eynard--Orantin topological recursion, see Section \ref{S43}, we have
\be
v^a(\zeta)=-\frac{\Delta_a}{2}\left( \Phi(\zeta)- \Phi(-\zeta)\right),
\ee
where $ \Phi(\zeta)=\int^z  y  \hbox{d} x  $ is a primitive of $y(\zeta)  \hbox{d} x(\zeta)$. Here we use the local coordinate $\zeta$ given by $x=x(p_a)+\zeta^2/2$.
\end{remark}

From this lemma it follows that for the generalized ancestor potential \eqref{ZSi} we have an equivalent expression
\be\label{Zalt}
Z=\widehat{R}\widehat{V}\widehat{\Delta} \cdot \prod_{a=1}^N \tau_{\alpha_a} (\hbar, {\bf T^a}).
\ee
\begin{remark}
Translation of the  variables ${\bf T}$ given by operators $\widehat{T}$ in $\tau_\alpha$ is equivalent to the insertion of the Miller--Morita--Mumford tautological $\kappa$ classes into the intersection numbers \eqref{eq:products} and \eqref{Tinter}.
Transformation of the cut-and-join operator in this case is described in \cite{AWP},  below we provide some details.  
\end{remark}

For the Virasoro group elements $\widehat{V}^a$  let us consider diagonal matrices $\widehat{{J}}_V(z)$ and $\widehat{{L}}_V(z)$
 with the diagonal elements
\begin{equation}
\begin{split}
\widehat{{J}}_V^a(z)&=\widehat{V}^a \widehat{J}^a(z)\left( \widehat{V}^a\right)^{-1},  \\
\widehat{{L}}_V^a(z)&= \widehat{V}^a \widehat{L}^a(z)\left( \widehat{V}^a\right)^{-1}. 
\end{split}
\end{equation}
Then for the operator
\be\label{WV}
\widehat{W}_V:=\widehat{V}  \widehat{W}_\Delta\widehat{V}^{-1},
\ee
where the operator $\widehat{W}_\Delta$ is given by \eqref{WD}, we have
\be
\widehat{W}_V=\Res_z \Tr v(z) \sqrt{\Delta} \left(\widehat{J}_V(z)_+\widehat{L}_V(z)+\frac{1}{8z^2}\widehat{J}_V(z)\right).
\ee
Since $\widehat{V}\cdot 1=1$, we have the following lemma.
\begin{lemma}
\be
 \widehat{V} \widehat{\Delta} \cdot \prod_{a=1}^N \tau_{\alpha_a}(\hbar, {\bf{T}}^a)=\exp\left( \hbar \widehat{W}_V \right) \cdot 1.
\ee
\end{lemma}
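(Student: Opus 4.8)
The plan is to deduce the statement from the identity $\widehat\Delta\cdot\prod_{a=1}^N\tau_{\alpha_a}(\hbar,{\bf T}^a)=\exp(\hbar\widehat W_\Delta)\cdot 1$ established just above, by a conjugation argument that uses only the invertibility of $\widehat V$ and the fact $\widehat V\cdot 1=1$; no further computation is needed. Applying $\widehat V$ to both sides of that identity and inserting $\widehat V^{-1}\widehat V=\Id$ gives
\be
\widehat V\widehat\Delta\cdot\prod_{a=1}^N\tau_{\alpha_a}(\hbar,{\bf T}^a)
=\widehat V\exp\!\bigl(\hbar\widehat W_\Delta\bigr)\cdot 1
=\Bigl(\widehat V\exp\!\bigl(\hbar\widehat W_\Delta\bigr)\widehat V^{-1}\Bigr)\cdot\bigl(\widehat V\cdot 1\bigr).
\ee
Now $\widehat V=\prod_a\widehat V^a$ with $\widehat V^a=\exp\bigl(\sum_{k>0}\chi^a_k\widehat L_k\bigr)$, where the $\widehat L_k$ here act only on the variables ${\bf T}^a$; in particular $\widehat V$ is invertible, with $\widehat V^{-1}=\prod_a\exp\bigl(-\sum_{k>0}\chi^a_k\widehat L_k\bigr)$. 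Consequently conjugation by $\widehat V$ is an automorphism of the completed algebra of differential operators on ${\mathbb C}[{\bf T}][\![\hbar]\!]$ and commutes with the exponential, so that $\widehat V\exp(\hbar\widehat W_\Delta)\widehat V^{-1}=\exp\bigl(\hbar\,\widehat V\widehat W_\Delta\widehat V^{-1}\bigr)=\exp(\hbar\widehat W_V)$ by the definition \eqref{WV}.

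It remains to verify $\widehat V\cdot 1=1$. The factors $\widehat V^a$ act on disjoint sets of variables and hence commute, so it suffices to show that each $\widehat L_k$ with $k\ge1$ annihilates the constant function; this is immediate from \eqref{virfull}, since for $k\ge1$ the purely quadratic summand $\tfrac12\sum_{i+j=-k-1}T_iT_j/\bigl((2i-1)!!(2j-1)!!\bigr)$ is empty (there are no non-negative $i,j$ with $i+j<0$) and every remaining term carries a factor $\partial/\partial T_\ell$. Expanding the exponentials then yields $\widehat V^a\cdot1=1$, hence $\widehat V\cdot1=1$, and substituting into the display above gives $\widehat V\widehat\Delta\cdot\prod_a\tau_{\alpha_a}(\hbar,{\bf T}^a)=\exp(\hbar\widehat W_V)\cdot1$, which is the assertion.

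There is no genuine obstacle here; the only point deserving a little care is the bookkeeping behind ``conjugation by $\widehat V$ is a well-defined automorphism of the completion.'' By the grading \eqref{deg} one has $\deg\widehat L_k=-2k<0$ for $k>0$, so $\operatorname{ad}_{\widehat L_k}$ strictly lowers degree, and on each homogeneous component of fixed degree only finitely many terms of a conjugated series contribute; this is exactly what makes $\widehat W_V$ an operator of the admissible type and $\exp(\hbar\widehat W_V)\cdot1$ an element of ${\mathbb C}[{\bf T}][\![\hbar]\!]$. The explicit ``bosonic current'' form of $\widehat W_V$ recorded just above the lemma is then the termwise outcome of this conjugation — $v(z)\sqrt\Delta$ is a scalar commuting with $\widehat V$, and $\widehat V^a$ carries $\widehat J^a(z)\mapsto\widehat J^a_V(z)$, $\widehat L^a(z)\mapsto\widehat L^a_V(z)$, while conjugation distributes over $\Res_z$, $\Tr$, sums and products — together with the compatibility of the conjugation with the $(\,\cdot\,)_+$ projection, which follows from $f^a(z)\in z+z^3{\mathbb C}[\![z^2]\!]$ (cf.\ \cite{AWP}).
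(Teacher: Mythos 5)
Your proof is correct and follows essentially the same route as the paper, which disposes of this lemma in one line by observing that $\widehat V\cdot 1=1$ and conjugating the preceding identity $\widehat\Delta\cdot\prod_a\tau_{\alpha_a}=\exp(\hbar\widehat W_\Delta)\cdot 1$ by $\widehat V$ via the definition $\widehat W_V=\widehat V\widehat W_\Delta\widehat V^{-1}$. Your additional verifications (that each $\widehat L_k$ with $k\ge 1$ kills constants, and that the conjugation is well defined on the completion because of the upper-triangular/degree-lowering structure) are correct and simply make explicit what the paper leaves implicit.
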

Because of the upper-triangular structure of the operator $\widehat{V}$ its action on the space ${\mathbb C}[{\bf T}][\![\hbar]\!]$ is well defined.

Let us describe the operator $ \widehat{W}_V$. Conjugation in \eqref{WV} does not spoil the diagonal structure of the operator,
\be
\widehat{W}_V=\sum_{a=1}^N \widehat{W}_V^a.
\ee
Using the commutation relations \eqref{comrel} and the property \eqref{fdef} we arrive at
\begin{equation}
\begin{split}\label{JL}
\widehat{J}_V^a(z)&=h_a'(z) \widehat{J}^a(h_a(z)) ,  \\
\widehat{L}_V^a(z)&=h_a'(z)^2 \widehat{L}^a(h_a(z)) +C_a(z), 
\end{split}
\end{equation}
where $h_a(z)$ is a series inverse to $f_a(z)$, $h_a(f_a(z))=1$, and $C_a(z)=\frac{\sqrt{\Delta_a}\Delta T^a_{1+\alpha_a}}{2(2\alpha_a+3)!!}+O(z^2)\in {\mathbb C}[\![z^2]\!]$. 
If $\alpha_a=0$, then the term $C_a(z)$ does not contribute to the residue in $\widehat{W}_V^a$. Only $C_a(0)$ part  for $\alpha_a=1$ contributes to the residue and we have
\begin{multline}
 \widehat{W}_V^a=\frac{\sqrt{\Delta_a}}{2 \alpha_a+1}\Res_z \frac{1}{z^{2\alpha_a-1}}\left(\left(h_a'(z) \widehat{J}^a(h_a(z))\right)_+\left(h_a'(z)^2 \widehat{L}^a(h_a(z)) +\frac{\sqrt{\Delta_a}\Delta T^a_{2}}{30}\right)\right.\\
 \left.+\frac{1}{8z^2}h_a'(z) \widehat{J}^a(h_a(z))\right).
\end{multline}
This operator can be represented as
\be
\widehat{W}_V^a=\sum_{j=-\alpha_a}^\infty \sum_{i=-j-\alpha_a}^\infty A_{V,a}^{i,j} \widehat{J}^a_i \widehat{L}^a_j + \sum_{j=-\alpha_a}^\infty B_{V,a}^j \widehat{J}^a_j,
\ee
where
\begin{equation}
\begin{split}\label{cof}
A_{V,a}^{i,j}&=\frac{\sqrt{\Delta_a}}{2\alpha_a+1}\Res_z\frac{1}{z^{2\alpha_a-1}} \left(\frac{h_a'(z)}{h_a(z)^{2i}}\right)_+\frac{h_a'(z)^2}{h_a(z)^{2j+2}},\\
B_{V,a}^j&=\frac{1}{2\alpha_a+1}\Res_z \left(\frac{\Delta_a}{30}\Delta T_2^a\delta_{\alpha_a,1}+\frac{\sqrt{\Delta_a}}{8z^{2\alpha_a}}\right)\frac{h_a'(z)}{z h_a(z)^{2j}}.
\end{split}
\end{equation}

In terms of the conjugated bosonic current $\widehat{J}_V(z)$ we have
\be
\widehat{W}_V=\Res_z \Tr v(z) \sqrt{\Delta} \left(\widehat{ J}_V(z)_+\normordboson\widehat{ J}_V(z)\widehat{ J}_V(z)\normordboson+I_V(z)\widehat{ J}_V(z)\right),
\ee
where
\be
I_V(z)=\frac{1}{8z^2}\Id+\frac{1}{30}\diag\left(\sqrt{\Delta_1}\Delta T^1_{2}\delta_{\alpha_1,1},\dots,\sqrt{\Delta_N}\Delta T^N_{2}\delta_{\alpha_N,1}\right).
\ee
We can also rewrite the cut-and-join operator as a normal ordered cubic combination of the operators $\widehat{J}_k^a$,
\be\label{Vasn}
\widehat{W}_V^a=\sum_{\substack{i\leq j\leq k,\\ i+j+k \geq 1-\alpha_a}}  A_{V,a}^{i,j,k} \widehat{J}^a_i \widehat{J}^a_j \widehat{J}^a_k+ \sum_{j=-\alpha_a}^\infty \tilde{B}_{V,a}^j \widehat{J}^a_j,
\ee
where the coefficients $A_{V,a}^{i,j,k}$ and $ \tilde{B}_{V,a}^j$ are finite linear combinations of the coefficients \eqref{cof}. These coefficients are polynomials in $\sqrt{\Delta_a}$ and $\sqrt{\Delta_a} \Delta T^a_k$.

For CohFTs the transformation \eqref{WV} does not affect the leading contributions, associated with TQFT, that is
\be
\widehat{W}_V^a=\widehat{W}_D^a+\dots,
\ee
where by $\dots$ we denote the terms of degree less than $\deg \widehat{W}_D^a=2\alpha_a+1$.

\subsection{Givental operator conjugation}\label{S5.3}
In this section, we describe the action of Givental's operator $\widehat{R}$. From the results of the previous section for the partition function \eqref{ZSi} we have
\be
Z=\widehat{R}\exp\left(\hbar \widehat{W}_V\right)\cdot 1,
\ee
where $\widehat{R}$ is a Givental operator of the form
\be\label{Ropp}
\widehat{R}=\exp\left(\sum_{k=1}^\infty  \left( \sum_{j=0}^\infty (r_k)_a^b T_j^a \frac{\p}{\p T_{k+j}^b}+\frac{1}{2}\sum_{j=0}^{k-1} (-1)^{j+1} (r_k)^{ab}\frac{\p^2}{\p T_j^a \p T_{k-j-1}^b}\right)\right).
\ee

Consider also the matrix
\be
\widehat{\mathcal{J}}(z)=\widehat{R} \widehat  { J}_V(z) \widehat{R}^{-1}.
\ee
In general, this matrix is not diagonal anymore.  Let
\be\label{Star}
\widehat{W}:= \widehat{R} \widehat{W}_V \widehat{R}^{-1}.
\ee
Then from \eqref{WV} we have
\be
\widehat{W}=\Res_z \Tr v(z) \sqrt{\Delta} \left(\widehat{\mathcal J}(z)_+\normordboson\widehat{\mathcal J}(z)\widehat{\mathcal J}(z)\normordboson+{\mathcal I}(z)\widehat{\mathcal J}(z)\right),
\ee
where ${\mathcal I}(z)=I_V(z)-\diag((r_1)^{11},\dots,(r_1)^{NN})$.

Again, one can consider the conjugated operators
\be
 \widehat{R} \widehat{J}^a_i  \widehat{R}^{-1}= \widehat{J}^a_i +\sum_{j=i+1}^\infty c_b^j(r)   \widehat{J}^b_j, 
\ee
where $c_b^j(r)$ are some polynomials in the entries of the matrices $r_k$.
The operator $\widehat{W}$ is cubic in the $\widehat{J}^a_k$, and from \eqref{Vasn} we have
\be\label{WOP}
\widehat{W}=\sum_{\substack{i\leq j\leq k,\\ i+j+k \geq 0}}   A_{a,b,c}^{i,j,k} \widehat{J}^a_i \widehat{J}^b_j \widehat{J}^c_k+ \sum_{j=-1}^\infty B_{a}^j \widehat{J}^a_j,
\ee
where $A_{a,b,c}^{i,j,k}$ and $B_{a}^j$ are linear combinations of the coefficients $A_{V,d}^{i',j',k'}$ and $B_{V,d}^{j'}$ in \eqref{Vasn} with the coefficients given by the polynomials in the entries of the matrices $r_{\ell}$. Because of the upper-triangular structure of the operator $\widehat{R}$ these coefficients are well defined. The coefficients $A_{a,b,c}^{i,j,k}$ and ${B}_{a}^j$ are independent of ${\bf T}$ and $\hbar$.

Note that the  components $\widehat{J}^a_j$ in the operator $\widehat{W}$ are normally ordered. For other orderings it is possible
 to get rid of the linear terms in the cut-and-join operator $\widehat{W}$. 

Since $\widehat{R}\cdot 1=1$, we have
\begin{theorem}\label{MTT} 
A generalized ancestor potential satisfies
\be
Z=\exp\left(\hbar \widehat{W}\right)\cdot 1.
\ee
\end{theorem}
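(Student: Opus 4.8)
The plan is to push the operators $\widehat{\Delta}$, $\widehat{T}$, and $\widehat{R}$ through the cut-and-join representation of $\prod_a \tau_{\alpha_a}$ one factor at a time, turning each application into a conjugation of an exponentiated cubic operator. The single tool used throughout is the elementary identity: if $\widehat{A}$ is an invertible operator with $\widehat{A}\cdot 1 = 1$, then $\widehat{A}\exp(\hbar\widehat{B})\cdot 1 = \big(\widehat{A}\exp(\hbar\widehat{B})\widehat{A}^{-1}\big)\cdot(\widehat{A}\cdot 1) = \exp\big(\hbar\,\widehat{A}\widehat{B}\widehat{A}^{-1}\big)\cdot 1$, together with the fact that conjugation by the operators $\widehat{V}^a$ and $\widehat{R}$ acts linearly on the currents $\widehat{J}^a_k$ and hence preserves cubic expressions in them.

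First I would invoke Theorem \ref{TKW}, which gives $\prod_a \tau_{\alpha_a}(\hbar,{\bf T}^a) = \exp(\hbar\widehat{W}_D)\cdot 1$, where $\widehat{W}_D$ is the block-diagonal sum of the Kontsevich--Witten and Br\'ezin--Gross--Witten cut-and-join operators, cubic in the currents $\widehat{J}^a_k$. Applying $\widehat{\Delta}$, which only rescales $\hbar \mapsto \hbar\sqrt{\Delta_a}$ in the $a$-th factor, converts this into $\exp(\hbar\widehat{W}_\Delta)\cdot 1$ with the diagonal matrix $\sqrt{\Delta}$ inserted, i.e. \eqref{WD}. Next, by Lemma \ref{virtos} the translation $\widehat{T}$ acting on the product of tau-functions can be replaced by the Virasoro group element $\widehat{V}$ of \eqref{Virm}, after commuting $\widehat{T}$ past $\widehat{\Delta}$ to absorb the $\sqrt{\Delta_a}$. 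Since $\widehat{V}$ is $\hbar$-independent, upper-triangular, and satisfies $\widehat{V}\cdot 1 = 1$, the conjugation identity yields $\widehat{V}\widehat{\Delta}\cdot\prod_a\tau_{\alpha_a} = \exp(\hbar\widehat{W}_V)\cdot 1$ with $\widehat{W}_V = \widehat{V}\widehat{W}_\Delta\widehat{V}^{-1}$; the computation \eqref{JL} --- conjugation of the bosonic current by $\widehat{V}^a$ is the change of variable $z\mapsto h_a(z)$, plus a central-term anomaly $C_a(z)$ in the Virasoro current --- shows $\widehat{W}_V$ stays cubic, with coefficients \eqref{cof}. Finally, since $\widehat{R}$ is likewise $\hbar$-independent, upper-triangular, and fixes the constant, one more conjugation gives $Z = \exp(\hbar\widehat{W})\cdot 1$ with $\widehat{W} = \widehat{R}\widehat{W}_V\widehat{R}^{-1}$, which is the cubic operator \eqref{WOP}, establishing the theorem.

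The point requiring care --- rather than any new idea --- is well-definedness. The operators $\widehat{V}$, $\widehat{R}$ and the resulting $\widehat{W}$ are infinite sums of differential operators, so one must check that all conjugations, and the final series $\exp(\hbar\widehat{W})\cdot 1$, make sense in ${\mathbb C}[{\bf T}][\![\hbar]\!]$. This follows from the upper-triangular (degree-non-decreasing) structure of $\widehat{R}$ and $\widehat{V}$ together with the $\mathbb{Z}$-grading \eqref{deg}: on each coefficient of the $\hbar$-expansion only finitely many terms of each operator act nontrivially, exactly as in the recursion of Corollary \ref{cor1}. Two residual bookkeeping checks remain: that commuting $\widehat{T}$ past $\widehat{\Delta}$ indeed produces the $\sqrt{\Delta_a}$-rescaled translation matching the $v^a(z)$ that define $\widehat{V}$, and that the anomaly $C_a(z)$ in \eqref{JL} contributes, via the residue, only the harmless constant and linear pieces absorbed into $I_V(z)$ and the coefficients $B^j_a$.
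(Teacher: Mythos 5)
Your proposal is correct and follows essentially the same route as the paper: Theorem \ref{TKW} for the product of tau-functions, the $\widehat{\Delta}$-rescaling giving \eqref{WD}, the replacement of $\widehat{T}$ by the Virasoro group element $\widehat{V}$ via Lemma \ref{virtos} (after commuting past $\widehat{\Delta}$), and the successive conjugations $\widehat{W}_V=\widehat{V}\widehat{W}_\Delta\widehat{V}^{-1}$ and $\widehat{W}=\widehat{R}\widehat{W}_V\widehat{R}^{-1}$ using $\widehat{V}\cdot 1=\widehat{R}\cdot 1=1$, exactly as in Sections \ref{S5.1}--\ref{S5.3}. Your well-definedness remarks (upper-triangular structure, the anomaly $C_a(z)$ contributing only to $I_V(z)$ and the linear coefficients) match the paper's own treatment.
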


\begin{corollary}
The partition function of the Chekhov--Eynard--Orantin topological recursion for a possibly irregular local spectral curve with simple ramification point 
 has a cut-and-join description
 \be\label{ZSS}
Z_S=\exp\left(\hbar  \widehat{W}\right)\cdot 1.
\ee
Here the operator $\widehat{W}$ is of the form \eqref{WOP}, and the coefficients $A_{a,b,c}^{i,j,k}$ and $B_{a}^j$ can be expressed in terms of the topological recursion data $S$.
\end{corollary}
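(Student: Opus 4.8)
The plan is to derive the corollary directly by combining the main theorem, Theorem~\ref{MTT}, with the Dunin-Barkowski--Orantin--Shadrin--Spitz / Chekhov--Norbury identification, Theorem~\ref{DOSST}. The first step is to note that for topological recursion data $S=(C,B,x,y)$ on a local, possibly irregular spectral curve with simple ramification points $p_1,\dots,p_N$, Theorem~\ref{DOSST} gives $Z_S=\widehat{R}\widehat{T}\widehat{\Delta}\cdot\prod_{a=1}^N\tau_{\alpha_a}(\hbar,{\bf T}^a)$, where $\alpha_a=1$ for regular and $\alpha_a=0$ for irregular ramification points, and the operators $\widehat{R},\widehat{T},\widehat{\Delta}$ are exactly those in \eqref{Rtr}, \eqref{Tgen}, \eqref{Delta}, determined explicitly by $S$. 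Hence $Z_S$ is a generalized ancestor potential in the sense of Section~\ref{S51}, and Theorem~\ref{MTT} applies verbatim to give $Z_S=\exp(\hbar\widehat{W})\cdot 1$ with $\widehat{W}$ of the cubic form \eqref{WOP}.

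It then remains to check that the coefficients $A_{a,b,c}^{i,j,k}$ and $B_a^j$ can be written in terms of $S$. For this I would follow the three-step construction of $\widehat{W}$ in Sections~\ref{S5.1}--\ref{S5.3} and substitute, at each stage, the spectral-curve expressions for the Givental data. In the local coordinate $\zeta$ with $x=x(p_a)+\zeta^2/2$ one has $y=\sum_{k\ge\alpha_a-1}y_k^a\zeta^k$; then $\Delta_a=(y^a_{2\alpha_a-1})^{-2}$ by \eqref{Delttr} enters the diagonal operator $\widehat{W}_\Delta$ of \eqref{WD}, and $\Delta T_k^a=-(2k-1)!!\,y^a_{2k-1}/y^a_{2\alpha_a-1}$ determines the series $v^a(z)$, $f^a(z)$ and its inverse $h_a(z)$, hence the coefficients \eqref{cof} of $\widehat{W}_V$ in \eqref{Vasn}. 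Finally the entries of $r_\ell$ are fixed by \eqref{Rtr}, i.e.\ by the small-$|z|$ asymptotic expansion of the steepest-descent integral $-\sqrt{z/2\pi}\int_{\gamma_b}\hbox{d} \xi^a_0\,e^{(x(p_b)-x)/z}$ built from $B$ and $x$ via the auxiliary differentials \eqref{xia}; conjugating $\widehat{W}_V$ by $\widehat{R}$ as in Section~\ref{S5.3} produces \eqref{WOP} with $A_{a,b,c}^{i,j,k}$ and $B_a^j$ the polynomial combinations, described there, of $\sqrt{\Delta_a}$, $\sqrt{\Delta_a}\,\Delta T_k^a$ and the entries of the $r_\ell$. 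Since each of these quantities is an explicit functional of $(C,B,x,y)$ by the displayed formulas, so are $A_{a,b,c}^{i,j,k}$ and $B_a^j$, which is the assertion.

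Given Theorems~\ref{MTT} and \ref{DOSST} there is essentially no obstacle here beyond the bookkeeping of composing the three conjugations by $\widehat{\Delta}$, $\widehat{V}$ and $\widehat{R}$; the well-definedness of the infinite sums is already handled by the upper-triangular structure of $\widehat{V}$ and $\widehat{R}$ exploited in Sections~\ref{S5.2}--\ref{S5.3}. The genuinely hard part, which this argument does not resolve and which the paper flags as an open problem, is to obtain a \emph{closed-form} description of $A_{a,b,c}^{i,j,k}$ and $B_a^j$ directly from $(C,B,x,y)$ --- for instance in terms of the Frobenius structure in the conformal semi-simple case --- rather than through the unwieldy composition of Givental operators, and to understand the non-uniqueness of $\widehat{W}$ stemming from the choice of normalized canonical frame.
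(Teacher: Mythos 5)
Your proposal is correct and follows essentially the same route as the paper: the corollary is an immediate consequence of Theorem~\ref{MTT} once Theorem~\ref{DOSST} identifies $Z_S$ as a generalized ancestor potential whose Givental data $\widehat{R}$, $\widehat{T}$, $\widehat{\Delta}$ are determined by the spectral curve via \eqref{Rtr}, \eqref{Tgen}, \eqref{Delta}. Your tracing of how $\Delta_a$, $\Delta T^a_k$ and the entries of $r_\ell$ feed into the coefficients $A_{a,b,c}^{i,j,k}$ and $B_a^j$ is exactly the content the paper leaves implicit in Sections~\ref{S43} and \ref{S5.1}--\ref{S5.3}.
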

\begin{corollary}
The partition function of a semi-simple CohFT has a cut-and-join description
 \be
Z_\Omega=\exp\left(\hbar  \widehat{W}\right)\cdot 1.
\ee
Here the operator $\widehat{W}$ is of the form \eqref{WOP}, and the coefficients $A_{a,b,c}^{i,j,k}$ and $B_{a}^j$ can be expressed in terms of the underlying TQFT and element of the Givental--Teleman group.
\end{corollary}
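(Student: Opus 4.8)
The goal is to prove $Z=\exp(\hbar\widehat{W})\cdot 1$, i.e. Theorem \ref{MTT} (equivalently \eqref{MTf}), after which the two displayed corollaries follow by specialization. The plan is to rewrite the generalized ancestor potential \eqref{ZSi} as a chain of operator conjugations applied to the constant function $1$, pushing the three group operators $\widehat{\Delta}$, $\widehat{T}$ and $\widehat{R}$ one at a time past an exponential of a cut-and-join operator; each conjugation will produce the next operator in the sequence $\widehat{W}_\Delta$, $\widehat{W}_V$, $\widehat{W}$ prepared in Sections \ref{S5.1}--\ref{S5.3}.

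First I would invoke Theorem \ref{TKW}: $\tau_{\alpha_a}(\hbar,{\bf T}^a)=\exp(\hbar\widehat{W}_{\alpha_a})\cdot 1$ in the variables ${\bf T}^a$, and since the $N$ factors involve disjoint sets of variables, $\prod_{a=1}^N\tau_{\alpha_a}=\exp(\hbar\widehat{W}_D)\cdot 1$. Applying $\widehat{\Delta}$ merely rescales $\hbar\mapsto\hbar\sqrt{\Delta_a}$ in the $a$-th factor, so by the lemma of Section \ref{S5.1} one gets $\widehat{\Delta}\cdot\prod_a\tau_{\alpha_a}=\exp(\hbar\widehat{W}_\Delta)\cdot 1$ with $\widehat{W}_\Delta$ as in \eqref{WD}. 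Next, by Lemma \ref{virtos} the translation $\widehat{T}$ may be traded, when acting on $\widehat{\Delta}\cdot\prod_a\tau_{\alpha_a}$, for the Virasoro group element $\widehat{V}$ of \eqref{Virm}; this is the one place where the Virasoro constraints \eqref{VirC} for the KW and BGW tau-functions enter. At this stage $Z=\widehat{R}\,\widehat{V}\exp(\hbar\widehat{W}_\Delta)\cdot 1$.

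The core is then the two remaining conjugations. Since every $\widehat{L}_k$ with $k>0$ annihilates constants, $\widehat{V}\cdot 1=1$, hence $\widehat{V}\exp(\hbar\widehat{W}_\Delta)\cdot 1=\exp\bigl(\hbar\,\widehat{V}\widehat{W}_\Delta\widehat{V}^{-1}\bigr)\cdot 1=\exp(\hbar\widehat{W}_V)\cdot 1$ with $\widehat{W}_V$ the operator \eqref{WV}. One then checks that $\widehat{W}_V$ is again a sum of cubic and linear terms in the currents $\widehat{J}^a_k$: this follows from the transformation law \eqref{JL}, since conjugation by $\widehat{V}^a$ sends $\widehat{J}^a(z)$ to $h_a'(z)\widehat{J}^a(h_a(z))$ and $\widehat{L}^a(z)$ to $h_a'(z)^2\widehat{L}^a(h_a(z))+C_a(z)$, so the normal-ordering anomaly $C_a(z)$ contributes only to the linear part after $\Res_z\Tr v(z)\sqrt{\Delta}(\cdots)$. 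Finally, because $\widehat{R}\cdot 1=1$ as well, $Z=\exp\bigl(\hbar\,\widehat{R}\widehat{W}_V\widehat{R}^{-1}\bigr)\cdot 1=\exp(\hbar\widehat{W})\cdot 1$, and $\widehat{W}$ has the cubic form \eqref{WOP} because $\widehat{R}$ conjugates each $\widehat{J}^a_i$ to an upper-triangular linear combination $\widehat{J}^a_i+\sum_{j>i}c_b^j(r)\widehat{J}^b_j$, the quadratic piece of $\widehat{r}$ producing only the scalar shift recorded in $\mathcal{I}(z)$.

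The step I expect to be the main obstacle is not the formal algebra of the conjugations but making the entire chain well defined in ${\mathbb C}[{\bf T}][\![\hbar]\!]$: every conjugation is an infinite sum, and one must use the $\deg$-grading of \eqref{deg} — under which $\widehat{W}_\Delta$, $\widehat{W}_V$ and $\widehat{W}$ raise degree while $\widehat{R}$ and $\widehat{V}$ are upper-triangular — to see that on each coefficient of $\hbar^k$, a polynomial in ${\bf T}$, only finitely many terms act, so that $\widehat{W}$ exists and the identities are valid term by term. The companion technicality is tracking the normal-ordering anomalies carefully enough that conjugation never generates quadratic terms and the cubic-plus-linear shape survives to the end. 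Once Theorem \ref{MTT} is established, the two corollaries are immediate specializations: taking all $\alpha_a=1$ and invoking Corollary \ref{Cor3.4} (or Proposition \ref{TheCoh} in the $\Theta$-twisted case) yields the semi-simple CohFT statement, while choosing $R$, $T$ and $\Delta$ as in \eqref{Rtr}, \eqref{Ttr} and \eqref{Delttr} and applying Theorem \ref{DOSST} yields the Chekhov--Eynard--Orantin statement.
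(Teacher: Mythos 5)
Your proposal follows the paper's own argument essentially verbatim: Theorem \ref{TKW} gives the seed, $\widehat{\Delta}$ yields $\widehat{W}_\Delta$, Lemma \ref{virtos} trades $\widehat{T}$ for the Virasoro group element $\widehat{V}$, and the two conjugations $\widehat{W}_V=\widehat{V}\widehat{W}_\Delta\widehat{V}^{-1}$ and $\widehat{W}=\widehat{R}\widehat{W}_V\widehat{R}^{-1}$ combined with $\widehat{V}\cdot 1=\widehat{R}\cdot 1=1$ give Theorem \ref{MTT}, from which the corollary follows by Corollary \ref{Cor3.4}. Your treatment of the anomaly terms $C_a(z)$ and $\mathcal{I}(z)$ and of the well-definedness via the upper-triangular structure matches Sections \ref{S5.2}--\ref{S5.3}, so the proof is correct and takes the same route as the paper.
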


\begin{remark}
Note that in this theorem we do not use an exact form of the operator $\widehat{R}$. A cubic cut-and-join description exists if we substitute $\widehat{R}$ with any operator
\be
 \exp\left( \sum_{0\leq k <j} \left((\tilde{r}_{j,k})_a^b T_k^a \frac{\p}{\p T_{j}^b}+ (\tilde{q}_{j,k})^{ab}\frac{\p^2}{\p T_j^a \p T_{k}^b}\right)\right)
\ee
for arbitrary $\tilde{r}_{j,k}$ and $\tilde{q}_{j,k}$.
\end{remark}

\begin{remark}
If all $\alpha_a=0$, then 
\be
\widehat{W}=\sum_{\substack{i\leq j\leq k,\\ i+j+k \geq 1}}   A_{a,b,c}^{i,j,k} \widehat{J}^a_i \widehat{J}^b_j \widehat{J}^c_k+ \sum_{j=0}^\infty {B}_{a}^j \widehat{J}^a_j.
\ee
\end{remark}

\begin{remark}
We construct the cut-and-join operator in the normalized canonical basis, however,  the form of the cut-and-join operator \eqref{WOP} is basis-independent in $V.$ A change of basis in $V$ is described by the operator $\widehat{\Psi}$ given by \eqref{Psiac}. For the CohFTs it may correspond to the transformation from the normalized canonical basis to the flat basis and may be important. 

Action of $\widehat{\Psi}$ does not change the structure of the cut-and-join operators, but affects the values of the coefficients. Namely, it acts on the upper indices of $A$ and $B$ by multiplication with $\Psi$ or $\Psi^{-1}$ depending on the sign of the corresponding index. For instance, for non-positive $i$ and positive $j$ and $k$ we have
\be
A_{i,j,k}^{a,b,c} \mapsto  \Psi^a_d (\Psi^{-1})^b_e (\Psi^{-1})^c_f A_{i,j,k}^{d,e,f}.
\ee
\end{remark}
The generalized ancestor potential $Z$ is a solution of the {\em cut-and-join equation}
\be
\frac{\p}{\p \hbar}Z=\widehat{W}\cdot Z.
\ee
Consider the topological expansion of the partition function
\be
Z =\sum_{m=0}^\infty Z^{(m)} \hbar^m.
\ee
The cut-and-join equation has a unique solution of this form with  the initial condition $Z^{(0)}=1$.
By Theorem \ref{MTT} we have the following corollary.
\begin{corollary}
The coefficients of the topological expansion of a generalized ancestor potential satisfy the algebraic topological recursion
\be\label{Zk}
Z^{(k)}=\frac{1}{k} \widehat{W}\cdot Z^{(k-1)}
\ee
with the initial condition $Z^{(0)}=1$.
\end{corollary}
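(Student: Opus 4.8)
The plan is to derive the recursion as an immediate formal consequence of Theorem~\ref{MTT}, which asserts $Z=\exp(\hbar\widehat W)\cdot 1$ with $\widehat W$ the cubic operator \eqref{WOP}. First I would differentiate this identity in $\hbar$: since $\widehat W$ does not depend on $\hbar$, one has $\frac{\p}{\p\hbar}Z=\widehat W\cdot\exp(\hbar\widehat W)\cdot 1=\widehat W\cdot Z$, i.e.\ the cut-and-join equation. For this step to make sense I must know that $\exp(\hbar\widehat W)\cdot 1$ is a genuine element of ${\mathbb C}[{\bf T}][\![\hbar]\!]$ on which term-by-term differentiation in $\hbar$ is legitimate; this was established in Sections~\ref{S5.1}--\ref{S5.3}, where the upper-triangular structure of the operators $\widehat V$ and $\widehat R$ with respect to the grading \eqref{deg} guarantees that every $\hbar$-coefficient is obtained from $1$ by finitely many operations and is therefore a polynomial in ${\bf T}$.

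Next I would plug the topological expansion $Z=\sum_{m\geq 0}Z^{(m)}\hbar^m$ into the cut-and-join equation. The left-hand side becomes $\sum_{m\geq 1}m\,Z^{(m)}\hbar^{m-1}$, and the right-hand side is $\sum_{m\geq 0}\bigl(\widehat W\cdot Z^{(m)}\bigr)\hbar^m$; here one uses once more that $\widehat W$, although an infinite sum of normally ordered cubic and linear monomials in the $\widehat J^a_k$, acts on a fixed polynomial $Z^{(m)}$ through only finitely many summands, so $\widehat W\cdot Z^{(m)}$ is a well-defined polynomial. Comparing the coefficients of $\hbar^{k-1}$ gives $k\,Z^{(k)}=\widehat W\cdot Z^{(k-1)}$, which is precisely \eqref{Zk}. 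The initial condition $Z^{(0)}=1$ is the evaluation of $\exp(\hbar\widehat W)\cdot 1$ at $\hbar=0$.

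For completeness I would also record the uniqueness statement: the recursion \eqref{Zk} together with $Z^{(0)}=1$ determines all $Z^{(m)}$ uniquely, so $Z$ is the only solution of the cut-and-join equation in ${\mathbb C}[\![{\bf T}]\!][\![\hbar]\!]$ with $Z|_{\hbar=0}=1$, while $\exp(\hbar\widehat W)\cdot 1$ visibly satisfies it. I do not expect any real obstacle here: the corollary is a short bookkeeping argument once Theorem~\ref{MTT} is in hand. The only point that genuinely needs care --- and that has already been taken care of in the construction of $\widehat W$ --- is the well-definedness of the action of the infinite-order operator $\widehat W$ on the space of formal series, which rests entirely on the grading \eqref{deg} and the triangular structure of the conjugating operators $\widehat V$ and $\widehat R$.
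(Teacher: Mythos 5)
Your proposal is correct and follows exactly the route the paper intends: the corollary is presented as an immediate consequence of Theorem~\ref{MTT}, obtained by differentiating $Z=\exp(\hbar\widehat W)\cdot 1$ in $\hbar$ to get the cut-and-join equation and then matching coefficients of $\hbar^{k-1}$ in the topological expansion. Your added remarks on well-definedness (finitely many terms of $\widehat W$ acting at each step, polynomiality of the $Z^{(m)}$) match the paper's own discussion surrounding the corollary.
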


The operator $\widehat{W}$ is given by a sum of infinitely many terms. However, only a finite number of them contribute on any finite step of the recursion \eqref{Zk}. Namely, by construction the degree of all terms in $\widehat{W}$ is less or equal to $3$ (or $1$ if all $\alpha_a=0$). Therefore, $Z^{(m)} \in{\mathbb C}[{\bf T}]$ is a polynomial with top degree $3k$ (or $k$ if all $\alpha_a=0$). This polynomial, in general, is not homogeneous, and we have
\be
Z^{(m)}=\frac{1}{m}\left(\sum_{\substack{i\leq j\leq k\leq\max(\lfloor 3m/2\rfloor -1,0),\\ i+j+k \geq 0}}   A_{a,b,c}^{i,j,k} \widehat{J}^a_i \widehat{J}^b_j \widehat{J}^c_k+ \sum_{j=-1}^{\max({\lfloor 3m/2\rfloor -1,0)}} B_{a}^j \widehat{J}^a_j\right) \cdot Z^{(m-1)}.
\ee

In particular
\be
Z^{(1)}=A_{a,b,c}^{0,0,0}T_0^a T_0^b T_0^c+B_a^0T_0^a+B_a^1 T_1^a
\ee
consists of contributions with $(g,n)=(0,3)$ and $(g,n)=(1,1)$.

\begin{remark}For CohFTs associated with the Gromov--Witten theory the partition function $Z$ is a generating function of the Gromov--Witten ancestor invariants. In this case, to arrive at the gravitational descendants one has to apply a properly chosen element $S\in L^{(2)}_-GL(V)$ supplemented by certain translations. It describes the so-called {\em calibration} of the Frobenius structure. The meaning of the group $L^{(2)}_-GL(V)$ for general CohFTs is not clear yet.

It is easy to see that such action does not respect the form \eqref{WOP} of the cut-and-join operator. However, we expect that for the homogeneous semi-simple Frobenius manifolds with flat unit and the calibration based on the solution of the Dubrovin's connection there exists a cubic cut-and-join operator, which describes the descendant generating function up to the unstable terms. 
\end{remark}

\subsection{Bosonic Fock space formalism}
Theorem \ref{MTT} can be reformulated in terms of the bosonic Fock space. Let us consider the bosonic operators $J_k^a$, $k\in {\mathbb Z}$, $a\in \{1,2,\dots,N\}$ satisfying the commutation relations
\be
\left[{J}_k^a,{J}_m^b\right]=(2k-1)\delta_{k+m,1}\delta_{a,b}.
\ee
By definition, the Fock space $\mathcal F$ is generated by commuting creation operators $J_k^a$, $k\in {\mathbb Z}_{\leq 0}$. These operators act on the vacuum vector $\rvac$, while the operators $J_k^a$ with  $k\in {\mathbb Z}_{>0}$ annihilate the vacuum, 
\be
J_k^a \rvac =0,\quad k>0.
\ee
For the co-vacuum $\lvac$ we have
\be
\lvac J_k^a=0,\quad k\leq0,
\ee
and this property together with the relation $\lvac 1 \rvac =1$ fixes a map ${\mathcal F} \rightarrow {\mathbb C}$. Then the identity \eqref{ZSS} has an equivalent  form 
\be
Z=\lvac e^{\sum_{k\geq 0} \sum_{a=1}^N T_k^a J_{k+1}^a} e^{\sum_{{i\leq j\leq k, i+j+k \geq 0}}   A_{a,b,c}^{i,j,k} {J}^a_i {J}^b_j {J}^c_k+ \sum_{j=-1}^\infty {B}_{a}^j {J}^a_j}\rvac.
\ee


\section{Virasoro constraints}\label{S6}

In this section, we describe $N$ families of the Virasoro constraints satisfied by the generalized ancestor potential \eqref{ZSi}. In particular, it gives the Virasoro constraints for the partition functions of all semi-simple CohFTs and Chekhov--Eynard--Orantin topological recursion for possibly irregular local spectral curves with simple ramification points. 

\begin{remark}
A family of the Virasoro constraints for the case of homogenous semi-simple CohFT with a flat unit  was constructed by Milanov \cite{Mil}. As it is mentioned in \cite{Mil}, homogeneity is not essential for the construction, hence Milanov's construction works for all semi-simple CohFTs with flat unit. 
Here we generalize this result.  
\end{remark}

We also derive a version of the dimension constraint for a generalized ancestor potential and show how the cut-and-join equation follows from this constraint and the Virasoro constraints. Moreover, we describe possible ambiguities of the cut-and-join operator.

\subsection{Virasoro constraints}\label{S6.1}
The tau-functions $\tau_\alpha$ satisfy the Virasoro constraints \eqref{VirC}. 
Let $\widehat {L}_k^{\alpha_a}$ be the Virasoro operator  \eqref{Vir11} acting on the space of functions of variables ${\bf T}^a$. With the generalized ancestor potential $Z$ given by \eqref{ZSi} we associate the operators
\be
\widehat{L}_k^a:=\widehat{R}\widehat{T}\widehat{\Delta} \widehat {L}_k^{\alpha_a}\widehat{\Delta}^{-1}\widehat{T}^{-1} \widehat{R}^{-1}, \quad \quad a=1,\dots,N,\quad k\geq -\alpha_a.
  \ee
Let us show that the operators $\widehat{L}_k^a$ are well defined.

Operator $\widehat{\Delta}$ rescales the parameter $\hbar$, therefore we have
\be\label{DeltaL}
\widehat{\Delta} \widehat {L}_k^{\alpha_a}\widehat{\Delta}^{-1}=\frac{1}{2}\widehat{L}_{k}({\bf T}^a)-\frac{(2k+2\alpha_a+1)!!}{2\hbar\sqrt{\Delta_a}}\frac{\p}{\p T_{k+\alpha_a}^a} +\frac{\delta_{k,0}}{16}.
\ee
Here $\widehat{L}_{k}({\bf T}^a)$ is the Virasoro operator \eqref{virfull} in the variables ${\bf T}^a$.
After conjugation with the translation operator 
\be
\widehat{T}=\exp\left(\frac{1}{\hbar}\sum_{k=1}^\infty\Delta T_k^a \frac{\p}{\p T_{k}^a}\right)
\ee
we have
\begin{multline}
\widehat{T}\widehat{\Delta} \widehat {L}_k^{\alpha_a}\widehat{\Delta}^{-1}\widehat{T}^{-1} =\frac{1}{2}\widehat{L}_{k}({\bf T}^a)-\frac{(2k+2\alpha_a+1)!!}{2\hbar\sqrt{\Delta_a}}\frac{\p}{\p T_{k+\alpha_a}^a} 
\\
+\frac{1}{2\hbar}\sum_{m=1+\alpha_a}^\infty \frac{(2k+2m+1)!!}{(2m-1)!!}\Delta T_m^a \frac{\p}{\p T_{k+m}^a}+
\frac{\delta_{k,0}}{16}.
\end{multline}
Then we have
\begin{multline}\label{Virnew}
\widehat{L}_k^a=\frac{1}{2}\widehat{L}_{k}({\bf T}^a)+ 
\sum_{\substack{m,\ell,\\ \ell-m> k}} X^{a,c;m}_{b;k,\ell} T^b_m \frac{\p}{\p T^c_\ell}+
\sum_{\substack{m,\ell,\\ \ell+m \geq k}} V^{a,b,c}_{k,m,\ell} \frac{\p^2}{\p T_m^b \p T_\ell^c}
\\
-\frac{1}{2\hbar} \sum_{m=k+\alpha_a}  U^{a,b}_{k,m}\frac{\p}{\p T_m^b}
+\frac{\delta_{k,0}}{16}- \frac{\delta_{k,-1}}{2}p_a.
\end{multline}
Here $X^{a,c;m}_{b;k,\ell}$, $V^{a,b,c}_{k,m,\ell}$, $U^{a,b}_{k,m}$, and $p^a=(r_1)^{aa}$ do not depend on $\hbar$ and ${\bf T}$. The leading coefficient $U^{a,b}_{k,m}$ is not affected by conjugation by $\widehat{R}$ and coincides with the one in \eqref{DeltaL},
\be
U^{a,b}_{k,k+\alpha_a}=\delta_{a,b}\frac{(2k+2\alpha_a+1)!!}{\sqrt{\Delta_a}}.
\ee
From the upper triangular structure of the operator $\widehat{R}$ it is clear that all coefficients  $X^{a,c;m}_{b;k,\ell}$, $V^{a,b,c}_{k,m,\ell}$ and $U^{a,b}_{k,m}$ are well defined. They are polynomials in the entries of the matrices $r_k$ and depend at most linearly on $\Delta_a^{-1/2}$ and $\Delta T_m^a$. 

The operators satisfy the Virasoro commutation relations
\be\label{BigVc}
\left[\widehat{L}_k^a,\widehat{L}_m^b\right]=\delta^{ab}(k-m)\widehat{L}_{k+m}^a.
\ee
Therefore, we have the following theorem.
\begin{theorem}
A generalized ancestor potential satisfies the Virasoro constraints
\be\label{Virf}
\widehat{L}_m^a \cdot Z=0, \quad \quad a=1\dots,N,\quad m\geq -\alpha_a.
\ee
\end{theorem}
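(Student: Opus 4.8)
The plan is to obtain the constraints \eqref{Virf} for $Z$ by transporting the Virasoro constraints of the tau-functions $\tau_{\alpha_a}$ through the operator $\widehat{R}\widehat{T}\widehat{\Delta}$ defining the generalized ancestor potential. By Section \ref{S2}, each factor $\tau_{\alpha_a}(\hbar,{\bf T}^a)$ satisfies $\widehat{L}_k^{\alpha_a}\cdot\tau_{\alpha_a}=0$ for $k\geq-\alpha_a$, where the operator $\widehat{L}_k^{\alpha_a}$ of \eqref{Vir11} acts only on the variables ${\bf T}^a$ and on $\hbar$. Since for $b\neq a$ the function $\tau_{\alpha_b}(\hbar,{\bf T}^b)$ does not depend on ${\bf T}^a$, the operator $\widehat{L}_k^{\alpha_a}$ commutes past every factor with $b\neq a$, so that
\be
\widehat{L}_k^{\alpha_a}\cdot\prod_{b=1}^N\tau_{\alpha_b}(\hbar,{\bf T}^b)=\Bigl(\widehat{L}_k^{\alpha_a}\cdot\tau_{\alpha_a}\Bigr)\prod_{b\neq a}\tau_{\alpha_b}=0.
\ee

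Next I would make precise the conjugation $\widehat{L}_k^a=\widehat{R}\widehat{T}\widehat{\Delta}\,\widehat{L}_k^{\alpha_a}\,\widehat{\Delta}^{-1}\widehat{T}^{-1}\widehat{R}^{-1}$ and verify that it is a well-defined operator on ${\mathbb C}[{\bf T}][\![\hbar]\!]$, exactly as for the cut-and-join operator $\widehat{W}$ in Section \ref{S5}. Conjugation by $\widehat{\Delta}$, which rescales $\hbar\mapsto\hbar\sqrt{\Delta_a}$ in the $a$-th block of variables, gives \eqref{DeltaL}; conjugation by $\widehat{T}$ adds one translation-dependent first-order term, the only surviving commutator of the generator of $\widehat{T}$ with $\widehat{L}_k$; and conjugation by $\widehat{R}=\exp(\widehat{r})$, through the Baker--Campbell--Hausdorff formula and the explicit shape of $\widehat{r}$ (built from $T_j^a\frac{\p}{\p T_{k+j}^b}$ and $\frac{\p^2}{\p T_j^a\p T_{k-j-1}^b}$ with $k\geq1$), produces the general expression \eqref{Virnew}. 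The upper-triangular structure of $\widehat{R}$ and $\widehat{T}$ guarantees that only finitely many terms contribute to each coefficient of the resulting differential operator, so the formal series converge, $\widehat{R}^{-1}\widehat{R}=\widehat{T}^{-1}\widehat{T}=\widehat{\Delta}^{-1}\widehat{\Delta}=\Id$ on this space, and the coefficients $X^{a,c;m}_{b;k,\ell}$, $V^{a,b,c}_{k,m,\ell}$, $U^{a,b}_{k,m}$, $p_a$ are well defined; in particular $U^{a,b}_{k,k+\alpha_a}=\delta_{a,b}(2k+2\alpha_a+1)!!/\sqrt{\Delta_a}$, since the translation and $\widehat{R}$ conjugations do not touch this leading term.

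With these two ingredients the theorem is immediate. Writing $Z=\widehat{R}\widehat{T}\widehat{\Delta}\cdot\prod_{b=1}^N\tau_{\alpha_b}$, for every $a=1,\dots,N$ and $k\geq-\alpha_a$ we get
\be
\begin{split}
\widehat{L}_k^a\cdot Z&=\widehat{R}\widehat{T}\widehat{\Delta}\,\widehat{L}_k^{\alpha_a}\,\widehat{\Delta}^{-1}\widehat{T}^{-1}\widehat{R}^{-1}\cdot\widehat{R}\widehat{T}\widehat{\Delta}\cdot\prod_{b=1}^N\tau_{\alpha_b}\\
&=\widehat{R}\widehat{T}\widehat{\Delta}\cdot\Bigl(\widehat{L}_k^{\alpha_a}\cdot\prod_{b=1}^N\tau_{\alpha_b}\Bigr)=0.
\end{split}
\ee
The commutation relations \eqref{BigVc} follow in the same way: conjugation by the fixed invertible operator $\widehat{R}\widehat{T}\widehat{\Delta}$ preserves Lie brackets, and the $\widehat{L}_k^{\alpha_a}$, acting in disjoint blocks of variables for different $a$, satisfy $\bigl[\widehat{L}_k^{\alpha_a},\widehat{L}_m^{\alpha_b}\bigr]=\delta^{ab}(k-m)\widehat{L}_{k+m}^{\alpha_a}$ by the relations recalled in Section \ref{S2}.

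The only genuine work is the middle step: checking that the three successive conjugations stay inside the space of well-defined differential operators on ${\mathbb C}[{\bf T}][\![\hbar]\!]$ and reading off the coefficients. This is the same upper-triangularity bookkeeping already carried out for $\widehat{W}$ in Section \ref{S5}, so it requires no new idea; it is a routine, if somewhat lengthy, Baker--Campbell--Hausdorff computation.
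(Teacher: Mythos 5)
Your proposal is correct and follows essentially the same route as the paper: define $\widehat{L}_k^a$ by conjugating $\widehat{L}_k^{\alpha_a}$ with $\widehat{R}\widehat{T}\widehat{\Delta}$, check well-definedness via the upper-triangular structure (yielding \eqref{DeltaL} and \eqref{Virnew}), and conclude from the Virasoro constraints \eqref{VirC} on each factor $\tau_{\alpha_a}$. The observation that $\widehat{L}_k^{\alpha_a}$ commutes past the factors $\tau_{\alpha_b}$ with $b\neq a$, and the derivation of \eqref{BigVc} by conjugation-invariance of the bracket, are exactly what the paper uses implicitly.
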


\begin{remark}
A linear change of variables by an operator $\widehat{\Psi}$ of the form \eqref{Psiac} does not affect the statement of the theorem,
but changes the coefficients in the operators $\widehat{L}_k^a$ leaving it quadratic in $\widehat{J}^b_j$.
\end{remark}
When  $R(z)\in L^{(2)}_+GL(V)$, the coefficients of the operators $\widehat{L}_k^a$ can be found with the help of Givental's quantization and the commutation relations \eqref{Givcom}.

The Virasoro constraints of Dubrovin--Zhang and Givental \cite{DZ,Giv1} correspond to the diagonal combination of the operators \eqref{Virnew},
\be
\widehat{L}_m=\sum_{a=1}^N \widehat{L}_m^a.
\ee

For the particular choices of the operators $\widehat{R}$, $\widehat{T}$ and $\widehat{\Delta}$, described in Sections \ref{S3} and \ref{S4}, the partition function \eqref{ZSi} describes the Chekhov--Eynard--Orantin topological recursion and semi-simple CohFTs. Therefore, we immediately have the following corollaries.

\begin{corollary} The partition function of the Chekhov--Eynard--Orantin topological recursion for a (possibly irregular) local spectral curve with simple ramification points
satisfies the Virasoro constraints \eqref{Virf},
\be
\widehat{L}_m^a \cdot Z_S=0, \quad \quad a=1\dots,N,\quad m\geq -\alpha_a.
\ee
\end{corollary}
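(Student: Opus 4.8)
The plan is to obtain the corollary as a specialization of the Virasoro constraints for an arbitrary generalized ancestor potential (the theorem stated just before it), so that the substance lies in that theorem, after which Theorem~\ref{DOSST} finishes the job. For the theorem itself I would argue by reduction to the KdV case. The starting point is the Virasoro constraints \eqref{VirC} for the KdV tau-functions, $\widehat{L}_k^{\alpha_a}\cdot\tau_{\alpha_a}=0$ for $k\geq-\alpha_a$, $\alpha_a\in\{0,1\}$, which hold by the theorems of Kontsevich and Norbury together with \cite{GN,Fuku,DVV}. Since $\widehat{L}_k^{\alpha_a}$ involves only the block of variables ${\bf T}^a$, and only one factor of $\prod_b\tau_{\alpha_b}$ depends on ${\bf T}^a$, it annihilates the whole product, $\widehat{L}_k^{\alpha_a}\cdot\prod_{b=1}^N\tau_{\alpha_b}(\hbar,{\bf T}^b)=0$. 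Then, with $\widehat{L}_k^a:=\widehat{R}\widehat{T}\widehat{\Delta}\,\widehat{L}_k^{\alpha_a}\,\widehat{\Delta}^{-1}\widehat{T}^{-1}\widehat{R}^{-1}$ and $Z$ as in \eqref{ZSi}, the conjugating operators telescope, $\widehat{L}_k^a\cdot Z=\widehat{R}\widehat{T}\widehat{\Delta}\bigl(\widehat{L}_k^{\alpha_a}\cdot\prod_b\tau_{\alpha_b}\bigr)=0$, so the theorem is reduced to the claim that this conjugation is a legitimate operation producing an operator of the form \eqref{Virnew}.

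The next step is to verify well-definedness one conjugation at a time, which is also what produces the explicit expression \eqref{Virnew}. The operator $\widehat{\Delta}$ only rescales $\hbar$, giving \eqref{DeltaL}; $\widehat{T}$ is the exponential of a locally nilpotent, degree-raising derivation, so it contributes only finitely many shifts on any monomial in ${\bf T}$; and $\widehat{R}$ is upper triangular with respect to the grading \eqref{deg}, so the infinite sum defining $\widehat{L}_k^a$ acts on any polynomial through only finitely many terms. Carrying this out fixes the coefficients $X$, $V$, $U$, $p_a$ of \eqref{Virnew} as polynomials in the entries of the $r_\ell$ depending at most linearly on $\Delta_a^{-1/2}$ and $\Delta T_m^a$. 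For the algebraic structure I would use that $\widehat{R}\widehat{T}\widehat{\Delta}$ is a fixed invertible operator, so conjugation is an algebra automorphism: $[\widehat{L}_k^a,\widehat{L}_m^b]=\widehat{R}\widehat{T}\widehat{\Delta}\,[\widehat{L}_k^{\alpha_a},\widehat{L}_m^{\alpha_b}]\,(\widehat{R}\widehat{T}\widehat{\Delta})^{-1}$, which vanishes for $a\neq b$ (disjoint variable blocks) and equals $(k-m)\widehat{L}_{k+m}^a$ for $a=b$, giving \eqref{BigVc}; no central correction enters, since conjugation never alters commutators and the $2$-cocycle of \eqref{Givcom} is irrelevant here. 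I expect the only genuinely delicate point to be exactly this bookkeeping — that the inverses $\widehat{\Delta}^{-1},\widehat{T}^{-1},\widehat{R}^{-1}$ and the resulting infinite-sum operators remain within the class of operators acting admissibly on the completed ring ${\mathbb C}[{\bf T}][\![\hbar]\!]$ — which the grading of \eqref{deg} and the upper-triangularity of $\widehat{R}$ are designed to control.

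It remains to deduce the corollary. By Theorem~\ref{DOSST}, for topological recursion data $S=(C,B,x,y)$ with a (possibly irregular) local spectral curve with simple ramification points one has $Z_S=\widehat{R}\widehat{T}\widehat{\Delta}\cdot\prod_{a=1}^N\tau_{\alpha_a}(\hbar,{\bf T}^a)$ with $\widehat{R},\widehat{T},\widehat{\Delta}$ the operators \eqref{Rtr}, \eqref{Tgen}, \eqref{Delta} determined explicitly by $S$, and $\alpha_a=1$ ($\alpha_a=0$) at regular (irregular) ramification points. Thus $Z_S$ is a generalized ancestor potential in the sense of Section~\ref{S51}, and the preceding theorem applies verbatim, giving $\widehat{L}_m^a\cdot Z_S=0$ for $a=1,\dots,N$ and $m\geq-\alpha_a$, with the $\widehat{L}_m^a$ obtained from $S$ by the conjugation formula.
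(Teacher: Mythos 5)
Your proposal is correct and follows the same route as the paper: the corollary is obtained by identifying $Z_S$ with a generalized ancestor potential via Theorem~\ref{DOSST}, and the underlying theorem is proved exactly as you describe, by conjugating the KdV Virasoro constraints \eqref{VirC} through $\widehat{R}\widehat{T}\widehat{\Delta}$ and checking that the resulting operators \eqref{Virnew} are well defined thanks to the grading and the upper-triangular structure of $\widehat{R}$. The telescoping argument, the block-diagonal vanishing of mixed commutators, and the absence of central terms are all consistent with the paper's (terser) presentation.
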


\begin{corollary}
The partition function of a semi-simple CohFT satisfies the Virasoro constraints \eqref{Virf},
\be
\widehat{L}_m^a \cdot Z_\Omega=0, \quad \quad a=1\dots,N,\quad m\geq -\alpha_a.
\ee
\end{corollary}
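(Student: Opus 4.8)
The plan is to deduce the Virasoro constraints for the generalized ancestor potential \eqref{ZSi} directly from the Virasoro constraints \eqref{VirC} for the building-block tau-functions $\tau_{\alpha_a}$, exploiting that the operators $\widehat{L}^a_m$ were \emph{defined} as conjugates of the $\widehat{L}^{\alpha_a}_m$ by the very operator $\widehat{R}\widehat{T}\widehat{\Delta}$ that produces $Z$ from $\prod_a\tau_{\alpha_a}$. The argument is then a telescoping cancellation, so the theorem is essentially a bookkeeping consequence of \eqref{VirC}; the only point of substance is well-definedness of the conjugated operators, which is already settled by the computation \eqref{DeltaL}--\eqref{Virnew} preceding the statement.

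First I would record that the Virasoro operator $\widehat{L}^{\alpha_a}_m$ of \eqref{Vir11} involves only the variables ${\bf T}^a$, hence it annihilates every factor $\tau_{\alpha_b}(\hbar,{\bf T}^b)$ with $b\neq a$ in the product $\prod_{b=1}^N\tau_{\alpha_b}(\hbar,{\bf T}^b)$ for trivial reasons, and annihilates the remaining factor $\tau_{\alpha_a}$ for $m\geq-\alpha_a$ by \eqref{VirC}; therefore
\be
\widehat{L}^{\alpha_a}_m\cdot\prod_{b=1}^N\tau_{\alpha_b}(\hbar,{\bf T}^b)=0,\qquad m\geq-\alpha_a.
\ee
Using $Z=\widehat{R}\widehat{T}\widehat{\Delta}\cdot\prod_b\tau_{\alpha_b}$ and $\widehat{L}^a_m=\widehat{R}\widehat{T}\widehat{\Delta}\,\widehat{L}^{\alpha_a}_m\,\widehat{\Delta}^{-1}\widehat{T}^{-1}\widehat{R}^{-1}$, the inverse operators cancel and
\be
\widehat{L}^a_m\cdot Z=\widehat{R}\widehat{T}\widehat{\Delta}\,\widehat{L}^{\alpha_a}_m\cdot\prod_{b=1}^N\tau_{\alpha_b}(\hbar,{\bf T}^b)=\widehat{R}\widehat{T}\widehat{\Delta}\cdot 0=0,
\ee
which is \eqref{Virf}. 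The commutation relation \eqref{BigVc} follows by the same mechanism: conjugation by the fixed invertible operator $\widehat{R}\widehat{T}\widehat{\Delta}$ is an algebra automorphism, so \eqref{BigVc} reduces to the statement that for $a\neq b$ the operators $\widehat{L}^{\alpha_a}_m$ and $\widehat{L}^{\alpha_b}_k$ act on disjoint sets of variables and hence commute, while for $a=b$ they obey $[\widehat{L}^{\alpha_a}_k,\widehat{L}^{\alpha_a}_m]=(k-m)\widehat{L}^{\alpha_a}_{k+m}$, recorded just after \eqref{Vir11}.

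The one thing to be careful about — and the step I would flag as the main (though ultimately routine) obstacle — is that every conjugation and cancellation above must take place inside a well-defined algebra of operators on ${\mathbb C}[{\bf T}][\![\hbar]\!]$. Here I would invoke the explicit computation carried out before the statement: $\widehat{\Delta}$ merely rescales $\hbar\mapsto\hbar\sqrt{\Delta_a}$ blockwise, so $\widehat{\Delta}\widehat{L}^{\alpha_a}_m\widehat{\Delta}^{-1}$ is given by \eqref{DeltaL}; conjugating further by $\widehat{T}$ adds finitely many bounded-degree terms, and conjugating by the upper-triangular $\widehat{R}$ (equivalently $\widehat{V}$, via Lemma \ref{virtos}) produces, by the grading argument of Section \ref{S5}, the at-most-order-two differential operator \eqref{Virnew} whose coefficients are polynomial in the entries of the $r_k$ and at most linear in $\Delta_a^{-1/2}$ and $\Delta T^a_k$; in particular only finitely many of its terms act nontrivially on a fixed monomial. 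Hence $\widehat{L}^a_m$ and the products $\widehat{R}^{-1}\widehat{R}$, $\widehat{T}^{-1}\widehat{T}$, $\widehat{\Delta}^{-1}\widehat{\Delta}$ all act legitimately on ${\mathbb C}[{\bf T}][\![\hbar]\!]$, so the telescoping is justified and no genuine difficulty remains. The two corollaries are then immediate specializations: taking $R,T,\Delta$ as in \eqref{Rtr}, \eqref{Ttr}, \eqref{Delttr} identifies $Z$ with $Z_S$ by Theorem \ref{DOSST}, while taking all $\alpha_a=1$ with $(R,T)$ an element of the Givental--Teleman group identifies $Z$ with $Z_\Omega$ by Corollary \ref{Cor3.4}.
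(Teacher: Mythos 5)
Your argument is correct and is essentially the paper's own: the operators $\widehat{L}^a_m$ are defined precisely as conjugates of the $\widehat{L}^{\alpha_a}_m$ by $\widehat{R}\widehat{T}\widehat{\Delta}$, so the constraints telescope from \eqref{VirC}, with well-definedness supplied by the explicit computation \eqref{DeltaL}--\eqref{Virnew}, and the corollary follows by identifying $Z$ with $Z_\Omega$ via Corollary \ref{Cor3.4}. The only nitpick is the phrase that $\widehat{L}^{\alpha_a}_m$ ``annihilates'' the factors with $b\neq a$ --- what you mean (and what your displayed equation correctly uses) is that it commutes with multiplication by functions of ${\bf T}^b$ and therefore passes through them.
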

The last corollary is a generalization of the results by Milanov \cite{Mil} for the cases without flat unit.

\subsection{Uniqueness of the solution}

By construction, a solution for the Virasoro constraints \eqref{Virf} exists. 
Let us show that the Virasoro constraints \eqref{Virf} have a unique (up to a normalization) solution with topological expansion. To prove it we need a simple lemma.

Let for some finite $\tilde{k}<\infty$,
\be
\widehat{O}=\sum_{k=-\infty}^{\tilde{k}} \widehat{O}_k
\ee
be a differential operator in ${\bf T}$  where $\deg \widehat{O}_k =k$ (see \eqref{deg}). 
\begin{lemma}\label{luni}
If the operator $ \widehat{O}$ is independent of $\hbar$,
then the equation
\be
\sum_{k=0}^\infty (2k+1) T_k^a \frac{\p}{\p T_k^a } Z= \hbar\, \widehat{O}\cdot Z
\ee
has a unique, up to normalization, solution 
with the topological expansion, $Z\in {\mathbb C}[\![{\bf T}]\!][\![\hbar]\!]$. Moreover, the coefficients of the topological expansion of the solution are polynomials, $Z\in {\mathbb C}[{\bf T}][\![\hbar]\!]$.
\end{lemma}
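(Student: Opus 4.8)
The plan is to pass to the topological expansion $Z=\sum_{m\ge 0}Z^{(m)}\hbar^m$ and reduce the equation to a triangular recursion controlled by the Euler operator of the grading \eqref{deg}. Writing $D:=\sum_{a}\sum_{k\ge 0}(2k+1)T_k^a\frac{\p}{\p T_k^a}$ for the operator on the left-hand side, and using that $\widehat O$ is $\hbar$-independent, comparison of the coefficients of $\hbar^m$ in $DZ=\hbar\,\widehat O\cdot Z$ gives (with $Z^{(-1)}:=0$)
\be
D\,Z^{(m)}=\widehat O\cdot Z^{(m-1)},\qquad m\ge 0.
\ee
The first observation I would record is that $D$ multiplies a $\deg$-homogeneous element of degree $d$ by $d$; since $\deg T_k=2k+1\ge 1$, the space $V_d$ of $\deg$-homogeneous polynomials of degree $d$ is finite dimensional, $V_0={\mathbb C}$, and hence $\ker D={\mathbb C}$ while $D$ is invertible on $\bigoplus_{d\ge 1}V_d$ and on its formal completion, with $D^{-1}$ acting as $1/d$ on $V_d$.

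For uniqueness I would first make precise that the ``normalization'' in the statement is the restriction $Z|_{{\bf T}=0}\in{\mathbb C}[\![\hbar]\!]$: if $Z$ and $Z'$ are two solutions agreeing at ${\bf T}=0$, then $W:=Z-Z'$ solves $DW=\hbar\,\widehat O\cdot W$ and has vanishing constant term at every $\hbar$-order. Induction on $m$ then finishes it: $W^{(-1)}=0$, and if $W^{(m-1)}=0$ then $DW^{(m)}=\widehat O\cdot W^{(m-1)}=0$, so $W^{(m)}\in\ker D={\mathbb C}$, whence $W^{(m)}=0$ since it has no constant term; thus $W=0$.

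For the existence and polynomiality part I would solve the recursion and control the degrees by induction on $m$. Start from $Z^{(0)}$ equal to the prescribed $\hbar^0$-constant (allowed, as $Z^{(0)}\in\ker D$). If $Z^{(m-1)}\in{\mathbb C}[{\bf T}]$ has $\deg$-degree at most $d_{m-1}$, then, since each component $\widehat O_k$ raises $\deg$-degree by exactly $k$ and annihilates everything of degree $<-k$, only the finitely many $\widehat O_k$ with $-d_{m-1}\le k\le\tilde k$ act nontrivially on $Z^{(m-1)}$; and because $\widehat O$ is a formal completion of polynomial differential operators (whose $\partial/\partial T$'s lower degree), each of these sends the polynomial $Z^{(m-1)}$ to a polynomial. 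Hence $\widehat O\cdot Z^{(m-1)}$ is a polynomial of $\deg$-degree $\le d_{m-1}+\tilde k$; its $\deg$-degree-$0$ part is forced to vanish, since it must coincide with the degree-$0$ part of $D Z^{(m)}$, which is $0$, so $\widehat O\cdot Z^{(m-1)}\in\bigoplus_{d\ge1}V_d$, and $Z^{(m)}:=D^{-1}\bigl(\widehat O\cdot Z^{(m-1)}\bigr)+c_m$, with $c_m$ the prescribed $\hbar^m$-normalization constant, is a polynomial. Thus $Z\in{\mathbb C}[{\bf T}][\![\hbar]\!]$, with $\deg$-degree of $Z^{(m)}$ at most $m\max(\tilde k,0)$.

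The individual steps are routine; the point that needs the most care is the bookkeeping of the $\deg$-degree-zero (constant-in-${\bf T}$) components. Everything rests on the fact that $\ker D$ consists exactly of the constants — which is where the positivity $\deg T_k>0$ enters — so that the ambiguity in a solution is precisely the choice of $Z|_{{\bf T}=0}\in{\mathbb C}[\![\hbar]\!]$, and on the observation that for an actual solution the degree-$0$ part of $\widehat O\cdot Z^{(m-1)}$ vanishes automatically, so that the recursion is consistent rather than over-determined. The finiteness of $\dim V_d$ (again a consequence of $\deg T_k\ge 1$) is what makes $D^{-1}$ well defined on formal series and, combined with the fact that only finitely many $\widehat O_k$ act nontrivially on any given polynomial, is what propagates polynomiality along the recursion.
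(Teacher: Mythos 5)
Your proof is correct and follows essentially the same route as the paper's: expand in $\hbar$, decompose each coefficient into homogeneous pieces for the grading $\deg T_k=2k+1$, observe that the Euler operator acts by multiplication by the degree (so its kernel is the constants and it is invertible on positive degrees), and use the finiteness of $\tilde k$ to propagate polynomiality. Your explicit remarks on the vanishing of the degree-zero component of $\widehat O\cdot Z^{(m-1)}$ and the subtraction argument for uniqueness only make explicit points the paper leaves implicit.
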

\begin{proof}
For a solution with the topological expansion
\be
Z({\bf T})=\sum_{m=0}^\infty Z^{(m)} \hbar^m 
\ee
the coefficients satisfy
\be\label{eqq}
\sum_{k=0}^\infty (2k+1) T_k^a \frac{\p}{\p T_k^a }  Z^{(m)} = \widehat{O} \cdot Z^{(m-1)}.
\ee
For $m=0$ we have
\be
\sum_{k=0}^\infty (2k+1) T_k^a \frac{\p}{\p T_k^a }  Z^{(0)} =0,
\ee
therefore $Z^{(0)}\in {\mathbb C}$ does not depend on ${\bf T}$. Let us show that all higher $Z^{(m)}$ are polynomials in ${\bf T}$, which can be uniquely determined from equation \eqref{eqq} and the initial condition $Z^{(0)}\in {\mathbb C}$. Namely, for $Z^{(m)}=\sum_{k=0}^\infty  Z^{(m)}_k$, where $\deg Z^{(m)}_k=k$, we have
\be
\sum_{k=0}^\infty (2k+1) T_k^a \frac{\p}{\p T_k^a }  Z^{(m)} =\sum_{k=1}^\infty  k Z^{(m)}_k.
\ee
Then the equation 
\be
\sum_{k=0}^\infty k Z^{(m)}_k = \widehat{O} \cdot Z^{(m-1)}
\ee
determines all $Z^{(m)}_k$ from $Z^{(m-1)}$. Since $\tilde{k}<+\infty$, the sum over $k$ is finite and $Z^{(m)}_k=0$ for $k>m\tilde k$, therefore $Z^{(m)}$ is polynomial. The recursion does not define $ Z^{(m)}_0$, therefore $Z({\bf T})$ is defined up to a normalization $Z({\bf 0})\in {\mathbb C}[\![\hbar]\!]$.
\end{proof}
If $\widehat{O}_k=0$ for all $k>0$, then the only solution is $Z({\bf T})=Z({\bf 0})\in{\mathbb C}[\![\hbar]\!]$.

\begin{proposition}\label{Prop6.4}
The Virasoro constraints \eqref{Virf} have a unique, up to normalization $Z({\bf 0})\in {\mathbb C}[\![\hbar]\!]$, solution of the form $Z({\bf T})\in {\mathbb C}[\![{\bf T}]\!][\![\hbar]\!]$.
\end{proposition}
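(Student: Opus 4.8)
The plan is to collapse the whole family of constraints \eqref{Virf} into a single scalar equation of the exact shape handled by Lemma \ref{luni}, and then quote that lemma. Write each Virasoro operator, as in \eqref{Virnew}, as $\widehat{L}_m^a=\widehat{A}_m^a-\frac{1}{2\hbar}\widehat{B}_m^a$, where $\widehat{B}_m^a=\sum_{k\ge m+\alpha_a}U^{a,b}_{m,k}\frac{\p}{\p T_k^b}$ is first order and $\hbar$-independent and $\widehat{A}_m^a$ (also $\hbar$-independent, of order $\le 2$) collects all the remaining terms. Two structural facts recorded after \eqref{Virnew} drive the argument. First, with respect to the grading \eqref{deg}, $\deg\widehat{A}_m^a\le -2m$, since the diagonal piece $\frac12\widehat{L}_m({\bf T}^a)$ has degree $-2m$ while all of the $X$-, $V$- and constant terms have strictly smaller degree. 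Second, the top-degree term of $\widehat{B}_m^a$ is $\frac{(2m+2\alpha_a+1)!!}{\sqrt{\Delta_a}}\frac{\p}{\p T_{m+\alpha_a}^a}$ — diagonal in $V$ and with nonzero coefficient — while every other term $\frac{\p}{\p T_k^b}$, $k>m+\alpha_a$, has strictly smaller degree; and as $(a,m)$ ranges over $1\le a\le N$, $m\ge-\alpha_a$, these leading terms exhaust $\{\frac{\p}{\p T_k^b}\}_{b,\,k\ge 0}$.

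The second fact says that the system $\{\widehat{B}_m^a\}$ is triangular with invertible leading part in the degree filtration, so a Gaussian elimination produces $\hbar$-independent multiplication operators $\widehat{P}_m^a$, each a finite ${\mathbb C}$-linear combination of single variables $T_k^c$ (in particular a polynomial with no constant term), with
\be
\sum_{a,m}\widehat{P}_m^a\widehat{B}_m^a=\sum_{a,k}(2k+1)T_k^a\frac{\p}{\p T_k^a}=:\widehat{D},
\ee
the Euler operator of \eqref{deg}. One starts from $\widehat{P}_m^a=\frac{(2m+2\alpha_a+1)\sqrt{\Delta_a}}{(2m+2\alpha_a+1)!!}T_{m+\alpha_a}^a$, which reproduces $\widehat{D}$ up to a remainder of degree $\le -2$, and cancels that remainder term by term against further monomials times $\widehat{B}_{m'}^b$; each correction strictly lowers the degree of the remainder, for a fixed $(a,m)$ only finitely many corrections reach $\widehat{P}_m^a$ (their degrees strictly decreasing with the elimination step and bounded below, since a single variable has degree $\ge 1$), and any fixed monomial is divisible by a monomial occurring in $\widehat{P}_m^a$ for only finitely many $(a,m)$. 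Applying $\widehat{P}_m^a$ to $\widehat{L}_m^a\cdot Z=0$ and summing over $(a,m)$ — each term is $0$, and the last finiteness remark makes the regrouping $\sum_{a,m}\widehat{P}_m^a\widehat{B}_m^a=\widehat{D}$ legitimate on ${\mathbb C}[\![{\bf T}]\!][\![\hbar]\!]$ — yields
\be
\widehat{D}\cdot Z=\hbar\,\widehat{O}\cdot Z,\qquad \widehat{O}:=2\sum_{a,m}\widehat{P}_m^a\widehat{A}_m^a,
\ee
where $\widehat{O}$ is $\hbar$-independent, of order $\le 2$, and of degree $\le 3$: indeed $\deg\big(\widehat{P}_m^a\widehat{A}_m^a\big)\le(2m+2\alpha_a+1)+(-2m)=2\alpha_a+1\le 3$, and the lower corrections to $\widehat{P}_m^a$ only decrease this.

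Now Lemma \ref{luni} applies to $\widehat{D}Z=\hbar\widehat{O}Z$ with $\tilde k=3$: it has a solution in ${\mathbb C}[\![{\bf T}]\!][\![\hbar]\!]$ that is unique up to the scalar normalization $Z({\bf 0})\in{\mathbb C}[\![\hbar]\!]$, and every such solution in fact lies in ${\mathbb C}[{\bf T}][\![\hbar]\!]$. Since every solution of \eqref{Virf} satisfies this equation, the solution set of \eqref{Virf} in ${\mathbb C}[\![{\bf T}]\!][\![\hbar]\!]$ is contained in this one-parameter family; as the generalized ancestor potential \eqref{ZSi} is a solution of \eqref{Virf} by the theorem of Section \ref{S6.1}, and it can be normalized to $Z({\bf 0})=1$, that family is exactly the solution set — which is the assertion. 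I expect the real work to be the second paragraph: arranging the degree-filtered elimination so that the $\widehat{P}_m^a$ are honest polynomials, so that the two infinite sums $\sum_{a,m}\widehat{P}_m^a\widehat{B}_m^a$ and $\sum_{a,m}\widehat{P}_m^a\widehat{A}_m^a$ genuinely define operators on ${\mathbb C}[\![{\bf T}]\!][\![\hbar]\!]$, and so that the resulting $\widehat{O}$ has bounded degree; the observation that every variable $T_k^c$ has odd degree $\ge 1$, so that nothing entering a $\widehat{P}_m^a$ can be constant, is what keeps everything locally finite.
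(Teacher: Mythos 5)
Your proposal is correct and follows essentially the same route as the paper: the paper first inverts the triangular matrix $U^{a,b}_{k,m}$ with constant coefficients $Q^a_{k,m;b}$ to form operators $\widehat{K}^a_k$ each carrying a single $\hbar^{-1}$-term, rewrites these as $\frac{\p}{\p T_k^a}Z=\hbar\widehat{\tilde{K}}_{k;a}\cdot Z$, and contracts with $(2k+1)T_k^a$ to reach the Euler-operator equation of Lemma \ref{luni}; your polynomial multipliers $\widehat{P}_m^a$ are exactly the composite of these two steps, and your degree bound $\deg\widehat{O}\le 2\alpha_a+1\le 3$ matches the paper's implicit one. The only difference is presentational (you perform the triangular elimination directly on the summed identity rather than constraint by constraint), so no further comment is needed.
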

\begin{proof}
For each $a=1,\dots, N$ and $k\geq -\alpha_a$ let us consider a linear combination of the Virasoro operators 
\eqref{Virnew} of the form
\be
\widehat{K}_k^a=\widehat{L}_k^a+\sum_{m=1}^\infty Q^a_{k,m;b}  \widehat{L}_{k+m}^b,
\ee
such that it contains only one term proportional to $\hbar^{-1}$
\begin{multline}
\widehat{K}_k^a=\frac{1}{2}\widehat{L}_{k}({\bf T}^a)+ 
\sum_{\substack{m,\ell,\\ \ell-m> k}} \tilde{X}^{a,c;m}_{b;k,\ell} T^b_m \frac{\p}{\p T^c_\ell}+
\sum_{\substack{m,\ell,\\ m+\ell \geq k}} \tilde{V}^{a,b,c}_{k,m,\ell} \frac{\p^2}{\p T_m^b \p T_\ell^c}
\\
-\frac{1}{2\hbar}\frac{(2k+2\alpha_a+1)!!}{\sqrt{\Delta_a}}\frac{\p}{\p T_{k+\alpha_a}^a}
+\frac{\delta_{k,0}}{16}- \frac{\delta_{k,-1}}{2}\tilde{p}_a.
\end{multline}
The matrices $Q^a_{k,m;b}$ can be obtained from the inversion of the matrices $U^{a,b}_{k,m}$ in \eqref{Virnew} and are well defined because of the triangular structure of the latter. The coefficients $\tilde{X}^{a,c;m}_{b;k,\ell}$, $\tilde{V}^{a,b,c}_{k,m,\ell}$, and $\tilde{p}_a$ are independent of ${\bf T}$ and $\hbar$. 

For these operators we also have
\be
\widehat{K}_k^a \cdot Z=0, \quad \quad a=1,\dots,N,\quad k\geq -\alpha_a.
\ee
Let
\be
\widehat{\tilde{K}}_{k;a}:=\frac{2\sqrt{\Delta_a}}{(2k+1)!!}\widehat{K}_{k-\alpha_a}^a+\frac{1}{\hbar}\frac{\p}{\p T_{k}^a}.
\ee
We have
\be\label{tKe}
\frac{\p}{\p T_{k}^a} Z=\hbar\widehat{\tilde{K}}_{k;a}  \cdot Z, \quad \quad a=1,\dots,N,\quad k\geq 0,
\ee
and
\be
\sum_{k=0}^\infty (2k+1) T_k^a \frac{\p}{\p T_k^a } Z= \hbar\, \widehat{O}\cdot Z,
\ee
where
\be
\widehat{O}=\sum_{k=0}^\infty (2k+1)T_k^a\widehat{\tilde{K}}_{k;a}
\ee
is independent of $\hbar$. This operator $\widehat{O}$ satisfies the conditions of  Lemma \ref{luni}, which concludes the proof.
\end{proof}

\begin{corollary}
The partition function of the Chekhov--Eynard--Orantin topological recursion for a (possibly irregular) local spectral curve with simple ramification points
is determined by the Virasoro constraints \eqref{Virf} up to a normalization.
\end{corollary}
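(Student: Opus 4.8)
The plan is to read the statement off from the uniqueness Proposition just established, together with the identification of $Z_S$ as a generalized ancestor potential; essentially no new computation is required. First I would invoke Theorem \ref{DOSST}: for topological recursion data $S=(C,B,x,y)$ with a local (possibly irregular) spectral curve, the operators $\widehat{R}$, $\widehat{T}$, $\widehat{\Delta}$ defined by \eqref{Rtr}, \eqref{Tgen}, \eqref{Delta} are determined explicitly by $S$, and
\be
Z_S=\widehat{R}\widehat{T}\widehat{\Delta}\cdot\prod_{a=1}^N\tau_{\alpha_a}(\hbar,{\bf T}^a).
\ee
Thus $Z_S$ is a generalized ancestor potential in the sense of Section \ref{S51}, and in particular $Z_S\in{\mathbb C}[{\bf T}][\![\hbar]\!]\subset{\mathbb C}[\![{\bf T}]\!][\![\hbar]\!]$.

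Next I would apply the Virasoro constraint theorem of Section \ref{S6.1} to this potential: with the operators $\widehat{L}_m^a$ constructed from the same $\widehat{R}$, $\widehat{T}$, $\widehat{\Delta}$ that arise from $S$, one has $\widehat{L}_m^a\cdot Z_S=0$ for all $a=1,\dots,N$ and $m\geq-\alpha_a$, which is exactly the system \eqref{Virf}. Finally, the preceding Proposition shows that \eqref{Virf} has a unique solution in ${\mathbb C}[\![{\bf T}]\!][\![\hbar]\!]$ up to the scalar normalization $Z({\bf 0})\in{\mathbb C}[\![\hbar]\!]$; since $Z_S$ is one such solution, every solution equals $c(\hbar)\,Z_S$ for some $c(\hbar)\in{\mathbb C}[\![\hbar]\!]$, which is the assertion. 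One may moreover note $Z_S({\bf 0})=1$, by the cut-and-join representation $Z_S=\exp(\hbar\widehat{W})\cdot1$, which pins down the normalization.

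There is no genuine obstacle here; the one point I would spell out carefully is that the Virasoro operators annihilating $Z_S$ are literally those in \eqref{Virf}. This amounts to checking that the identification of the spectral-curve data with the Givental triple $(\widehat{R},\widehat{T},\widehat{\Delta})$ supplied by Theorem \ref{DOSST} is the same identification used in Section \ref{S6.1} to build $\widehat{L}_m^a$ --- immediate from the constructions, but worth stating explicitly, since the entire content of the corollary is this combination of two earlier results.
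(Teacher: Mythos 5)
Your argument is correct and is exactly the reasoning the paper intends: the corollary is stated without proof as an immediate consequence of Theorem \ref{DOSST} (identifying $Z_S$ with a generalized ancestor potential), the Virasoro constraints \eqref{Virf} for such potentials, and the uniqueness Proposition. Your added remark about checking that the Givental triple from the spectral curve is the same one used to build the operators $\widehat{L}_m^a$ is a reasonable point of care, but it introduces nothing beyond what the paper's constructions already guarantee.
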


\begin{corollary}
The partition function of a semi-simple CohFT is determined by the Virasoro constraints \eqref{Virf} up to a normalization.
\end{corollary}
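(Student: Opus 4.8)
The plan is to obtain this corollary as an immediate consequence of the uniqueness Proposition, once $Z_\Omega$ has been identified as a generalized ancestor potential. First I would recall from Section~\ref{S51} (equivalently Corollary~\ref{Cor3.4}) that for a semi-simple CohFT $\Omega$ with Frobenius algebra $(V,\eta,\bullet)$, Teleman's Theorem~\ref{TT} provides an element $(R,T)$ of the Givental--Teleman group and a rescaling $\Delta=(\Delta_1,\dots,\Delta_N)$, with every $\Delta_a\neq 0$, such that in the coordinates attached to the normalized canonical basis one has $Z_\Omega=\widehat{R}\widehat{T}\widehat{\Delta}\cdot\prod_{a=1}^N\tau_1(\hbar,{\bf T}^a)$; in other words $Z_\Omega$ is a generalized ancestor potential with all $\alpha_a=1$. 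Hence the Virasoro constraints \eqref{Virf}, established above for every generalized ancestor potential, apply: there exist operators $\widehat{L}_m^a$ of the form \eqref{Virnew}, built explicitly from this Givental data, with $\widehat{L}_m^a\cdot Z_\Omega=0$ for $a=1,\dots,N$ and $m\geq -1$.

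Next I would feed these operators into the preceding Proposition. Its proof uses precisely that every $\Delta_a\neq 0$, so that the leading coefficients $U^{a,b}_{m,m+\alpha_a}=\delta_{a,b}(2m+2\alpha_a+1)!!/\sqrt{\Delta_a}$ are invertible; this converts the system $\widehat{L}_m^a\cdot Z=0$ into an Euler-type equation of the form treated in Lemma~\ref{luni}, and therefore the system has a solution in ${\mathbb C}[\![{\bf T}]\!][\![\hbar]\!]$ which is unique up to the scalar normalization $Z({\bf 0})\in{\mathbb C}[\![\hbar]\!]$, any such solution automatically lying in ${\mathbb C}[{\bf T}][\![\hbar]\!]$. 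Since $Z_\Omega\in{\mathbb C}[{\bf T}][\![\hbar]\!]$ is annihilated by all the $\widehat{L}_m^a$, it must coincide with that unique solution once its normalization is fixed; concretely, $Z_\Omega({\bf 0})=\exp\bigl(\sum_{g\geq 2}\hbar^{2g-2}\int_{\overline{\mathcal M}_{g}}\Omega_{g,0}\bigr)$, the contribution of the $n=0$ terms in \eqref{PF}. This proves the corollary; the same argument applied with the topological-recursion data of Section~\ref{S43} in place of the Givental data yields the corollary stated just above for $Z_S$.

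The whole content here lies in the Proposition, so there is no real analytic obstacle; the only point needing care is that the $\widehat{L}_m^a$ are not canonically attached to $\Omega$ but depend on the choice of Givental data realizing the decomposition, so the precise reading of the corollary is that there \emph{exist} operators of the shape \eqref{Virnew} whose common kernel inside ${\mathbb C}[\![{\bf T}]\!][\![\hbar]\!]$ is one-dimensional and spanned by $Z_\Omega$. I would also record, using the Remark after the Virasoro Theorem, that a linear change of basis in $V$ effected by $\widehat{\Psi}$ as in \eqref{Psiac} transports the $\widehat{L}_m^a$ to operators of the same form while carrying solutions to solutions, so the uniqueness conclusion survives in the flat basis, where $Z_\Omega$ is the generating function of ancestor invariants in its usual normalization.
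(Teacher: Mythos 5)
Your proposal is correct and follows the paper's own (implicit) route: the corollary is stated without separate proof precisely because it is the immediate combination of Corollary~\ref{Cor3.4} (identifying $Z_\Omega$ as a generalized ancestor potential with all $\alpha_a=1$), the theorem that every generalized ancestor potential satisfies the constraints \eqref{Virf}, and the preceding Proposition on uniqueness of solutions up to normalization. Your additional remarks on the dependence of the $\widehat{L}_m^a$ on the choice of Givental data and on the behaviour under $\widehat{\Psi}$ are consistent with the paper's surrounding remarks and do not change the argument.
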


\begin{remark}
From Proposition \ref{Prop6.4} it follows that for the generalized ancestor potential the Virasoro constraints \eqref{Virf} imply the existence of the cut-and-join description of the form 
\be
Z=\exp\left(\hbar \widehat{W}\right)\cdot Z_0,
\ee
where $Z_0\in {\mathbb C}[\![\hbar]\!]$ does not depend on ${\bf T}$.
\end{remark}

The commutation relations between the operators $\widehat{K}_k^a$ follow from \eqref{BigVc}. The operators
$\widehat{K}_k^a$ describe the quantum Airy structure of Kontsevich and Soibelman \cite{KonS}. The quantum Airy structure for the Chekhov--Eynard--Orantin topological recursion on the regular spectral curve was described by Andersen, Borot, Chekhov, and Orantin \cite{ABCD}. The Virasoro constraints for the regular spectral curves was earlier considered in \cite{BorVir}. We see that this structure also appears for the irregular spectral curves. Moreover, we see that the quantum Airy structure for the Chekhov--Eynard--Orantin topological recursion on the (possibly irregular) local spectral curves with simple ramifications is equivalent to the Virasoro constraints \eqref{Virf}. 

While the choice of calibration, given by an element of the $L^{(2)}_-GL(V)$, breaks the quantum Airy structure, it does not spoil the Virasoro algebra of linear constraints if the resulting Virasoro operators are well defined.  

We see that the Virasoro constraints, cut-and-join operators, the Chekhov--Eynard--Orantin topological recursion, and CohFTs are closely related to each other. 

\subsection{From Virasoro constraints to cut-and-join operator}
Let
\be
\widehat{H}:=\widehat{R}\widehat{V} \left(\sum_{a=1}^N \frac{\widehat{L}_0({\bf T}^a)}{2\alpha_a+1}\right)\widehat{V}^{-1} \widehat{R}^{-1},
\ee
where $\widehat{V}$ is the Virasoro group element \eqref{Virm}.

\begin{proposition}\label{Lemma6.7}
A generalized ancestor potential satisfies
\be\label{dimd}
\hbar \frac{\p}{\p \hbar} Z=\widehat{H}\cdot Z.
\ee
\end{proposition}
\begin{proof}
By the dimension constraints \eqref{dimV},
\be
 \hbar \frac{\p}{\p \hbar}  \prod_{a=1}^N \tau_{\alpha_a}(\hbar, {\bf{T}}^a) =\sum_{b=1}^N \frac{\widehat{L}_0({\bf T}^b)}{2\alpha_b+1} \cdot \prod_{a=1}^N \tau_{\alpha_a}(\hbar, {\bf{T}}^a).
\ee
The operators $ \hbar \frac{\p}{\p \hbar}$ and $\sum_{a=1}^N \frac{\widehat{L}_0({\bf T}^a)}{1+2\alpha_a}$ commute with the operator $\widehat{\Delta}$, the first of them also commutes with the operators $\widehat{V}$ and $\widehat{R}$. The statement of the proposition follows from \eqref{Zalt}.
\end{proof}
By construction, operator $\widehat{H}$ is bilinear in $\widehat{J}_k^a$,
\be
\widehat{H}= \sum_{\substack{i<j,\\ i+j\geq 1}} H_{a,b}^{i,j} \widehat{J}^a_i \widehat{J}^b_j. 
\ee

Equation \eqref{dimd} fixes the normalization of the partition function $Z({\bf 0})\in {\mathbb C}[\![\hbar]\!]$ up to an $\hbar$-dependent constant $Z({\bf 0})\Big|_{\hbar=0}\in {\mathbb C}$.

\begin{remark}
If $R(z)\in L^{(2)}_+GL(V)$, the operator $\widehat{H}$ corresponds to a certain symplectic transformation and can be obtained by Givental's quantization, considered in Section \ref{S34}.
\end{remark}
With the help of the relations \eqref{tKe} we have
\be
\widehat{H} \cdot Z=\hbar  \sum_{\substack{i<j,\\ i+j\geq 1}}  H_{a,b}^{i,j} (2j-1)!! \widehat{J}^a_i \widehat{\tilde{K}}_{j-1}^b  \cdot Z.
\ee
Therefore, by Proposition \ref{Lemma6.7}
\be\label{newW}
\frac{\p}{\p \hbar} \cdot Z= \sum_{\substack{i<j,\\ i+j\geq 1}}  H_{a,b}^{i,j} (2j-1)!! \widehat{J}^a_i \widehat{\tilde{K}}_{j-1}^b  \cdot Z.
\ee
The operator in the right-hand side is of the form \eqref{WOP}, therefore the cut-and-join description can be derived from the Virasoro constraints \eqref{Virf} and the deformed dimension constraints \eqref{dimd}.

However, it is not clear if the operator in the right-hand side of \eqref{newW} coincides with the operator $\widehat{W}$ in \eqref{WOP}.
Consider the operators 
\be
\widehat{M}_{k,m}^{a,b}=\frac{\p}{\p T_{m}^a}\widehat{\tilde{K}}_k^b-\frac{\p}{\p T_{k}^b}\widehat{\tilde{K}}_m^a.
\ee
We have $\widehat{M}_{k,m}^{a,b}=-\widehat{M}_{m,k}^{b,a}$, therefore, it is enough to consider $k\geq m$.
From \eqref{tKe} we see, that 
\be
\widehat{M}_{k,m}^{a,b}\cdot Z=0.
\ee
For $k\neq m$ the operators $\widehat{M}_{k,m}^{a}$ are non-trivial. They are cubic in $\widehat{J}^a_k$ with coefficients independent of $\hbar$.
Therefore, for arbitrary $C_{a,b}^{k,m}\in {\mathbb C}$  the operator
\be\label{WC}
\widehat{W}(C)=\widehat{W}+\sum_{0\leq m\leq k} C_{a,b}^{k,m}\widehat{M}_{k,m}^{a,b},
\ee
where $\widehat{W}$ is given by \eqref{Star}, describes the cut-and-join operator for the partition function $Z$,
\be
\frac{\p}{\p \hbar} Z=\widehat{W}(C)\cdot Z.
\ee
We expect, that this is the most general form of a cubic cut-and-join operator for a generalized ancestor potential $Z$ and the operator in the right-hand side of \eqref{newW} is given by \eqref{WC} for some $C_{a,b}^{k,m}$.

\bibliographystyle{alphaurl}
\bibliography{KPTRref}

\end{document}